\documentclass[journal]{IEEEtran}

\usepackage{graphics} 
\usepackage{epsfig} 
\usepackage{mathptmx} 
\usepackage{amsthm}
\usepackage{amsmath} 
\usepackage{amssymb}  
\usepackage{cuted}
\usepackage{float}
\usepackage{color}
\usepackage{url}
\usepackage{bm}

\usepackage[ruled,vlined,linesnumbered]{algorithm2e} 
\usepackage{tabularx}

\usepackage{subcaption,cite}
\usepackage[font=small]{caption}

\newtheorem{corollary}{Corollary}
\newtheorem{lemma}{Lemma}
\newtheorem{proposition}{Proposition}
\newtheorem{definition}{Definition}
\newtheorem{remark}{Remark}

\makeatletter 
\def\ps@IEEEtitlepagestyle{%
  \def\@oddfoot{\mycopyrightnotice}%
  \def\@evenfoot{}%
}
\def\mycopyrightnotice{%
  \begin{minipage}{\textwidth}
  \centering \scriptsize
  Copyright~\copyright~2025 IEEE. Personal use of this material is permitted. Permission from IEEE must be obtained for all other uses, in any current or future media, including reprinting/republishing this material for advertising or promotional purposes, creating new collective works, for resale or redistribution to servers or lists, or reuse of any copyrighted component of this work in other works. 
  \end{minipage}
}
\makeatother 

\begin{document}
%
\title{A Consolidated Game Framework for Cooperative Defense against Cross-Domain Cyber Attacks in Satellite-Enabled Internet of Things}
%
%
%
\author{Linan Huang, Peilong Liu, Xu Chen, Chunxiao Jiang,~\IEEEmembership{Fellow,~IEEE}, Linling Kuang,~\IEEEmembership{Member,~IEEE},  and~Jianhua~Lu,~\IEEEmembership{Fellow,~IEEE}
\thanks{
L. Huang, P. Liu, C. Jiang, and L. Kuang are with the Beijing National Research Center for Information Science and Technology (BNRist), Tsinghua University, Beijing 100084, China (e-mail:huanglinan@tsinghua.edu.cn; plliu@tsinghua.edu.cn; jchx@tsinghua.edu.cn; kll@tsinghua.edu.cn\}). 
}
\thanks{X. Chen and and J. Lu are with the Department of Electronic Engineering, Tsinghua University, Beijing 100084, China (e-mail: chenxu18@tsinghua.org.cn; lhh-dee@tsinghua.edu.cn).}
\thanks{L. Kuang and J. Lu are also with State Key Laboratory of Space Network and Communications.}
}

%
%

\markboth{Journal of \LaTeX\ Class Files,~Vol.~14, No.~8, August~2015}%
{Shell \MakeLowercase{\textit{et al.}}: Bare Demo of IEEEtran.cls for IEEE Journals}
%



\maketitle

\begin{abstract}
As the adoption of satellite-enabled Internet of Things (IoT) continues to rise, its intricate multi-domain architecture becomes increasingly susceptible to cross-domain cyber threats. 
Attackers can exploit compromised IoT devices, inject malicious packets into data streams aggregated at the IoT gateway for satellite backhaul, and potentially endanger the satellite network during transmission by exploiting the hardware, software, and protocol vulnerabilities.
Compared to single-domain defenses, cooperative defense at the IoT devices, IoT access network, and satellite transmission network provides fine-granularity defense against cross-domain intelligent attacks.
However, quantifying cross-domain impacts and tilting incentive misalignment among different participants remain significant challenges, making systematic cooperative defense development a complex task. 
To address this, we develop a tripartite security game framework to characterize the impacts of attacks and defense methods across both the terrestrial and satellite domains.
Leveraging this game model, we devise flow pricing to optimally motivate the IoT Network Operator (IoT-NO) to prevent malicious packet infiltration into the satellite domain.
Subsequently, we propose efficient learning algorithms enabling both the IoT-NO to ascertain their ideal flow sampling strategies and the Satellite Service Provider (SAT-SP) to determine optimal flow pricing. 
The simulation results corroborate the effectiveness of the consolidated game in counteracting cross-domain cyber attacks and facilitating cooperative defense between the IoT-NO and the SAT-SP with non-aligned incentives. 
\end{abstract}

\begin{IEEEkeywords}
Game Theory, Collaborative Defense, Price Design, Feedback Learning, Cross-Domain Attacks.  
\end{IEEEkeywords}

\section{Introduction}

The number of Internet of Things (IoT) devices worldwide reaches $\$15.14$ billion in 2023 and is estimated to rise to $\$29.42$ billions by 2030 \cite{satistic}, exerting significant pressure on traditional terrestrial networks. 
{On one hand, the multitude of connections from IoT devices generates substantial data traffic, exacerbating network congestion in densely populated areas such as smart cities. On the other hand, terrestrial networks often struggle to meet data backhaul demands in remote areas, like those used for environmental monitoring, where cellular coverage is sparse and of poor quality. 
Consequently, the integration of satellite constellations for offloading and backhauling has garnered increasing interest in various IoT scenarios,}  owing to the ubiquitous coverage and high capacity of satellite networks \cite{soret20215g, centenaro2021survey}. 

{
The integration of satellite network and terrestrial IoT network constitutes a multi-domain structure, as illustrated in Fig. \ref{fig:multidomainStructure}.} 
An IoT Network Operator (IoT-NO) {manages} both the access network and the core network to provide clients with remote monitoring and control of their IoT devices. 
A Satellite Service Provider (SAT-SP) {deploys the} satellite network to transmit data flows from the access network to the core network. 
Such multi-domain integration leads to cross-domain cyberattacks, which originate in the IoT domain and impact the satellite domain. 
These attacks typically unfold in three distinct steps. 
\textcolor{black}{First, the hardware vulnerabilities identified at the Black Hat security conference \cite{wired2022starlink} demonstrate the feasibility of an attacker inexpensively compromising IoT devices.} 
Second, these compromised devices generate malicious data packets that legitimately pass through the IoT gateway and enter the satellite domain. 
Finally, when these malicious packets are processed by critical satellite components (e.g., the on-board computer and digital processing transponder for data processing and switching), adversaries can exploit software, hardware, and protocol vulnerabilities to steal confidential data, disrupt Command and Control (C2), and even take over satellite control \cite{SPARTA}. 


\begin{figure}[h]
\centering
\includegraphics[width=.9 \columnwidth]{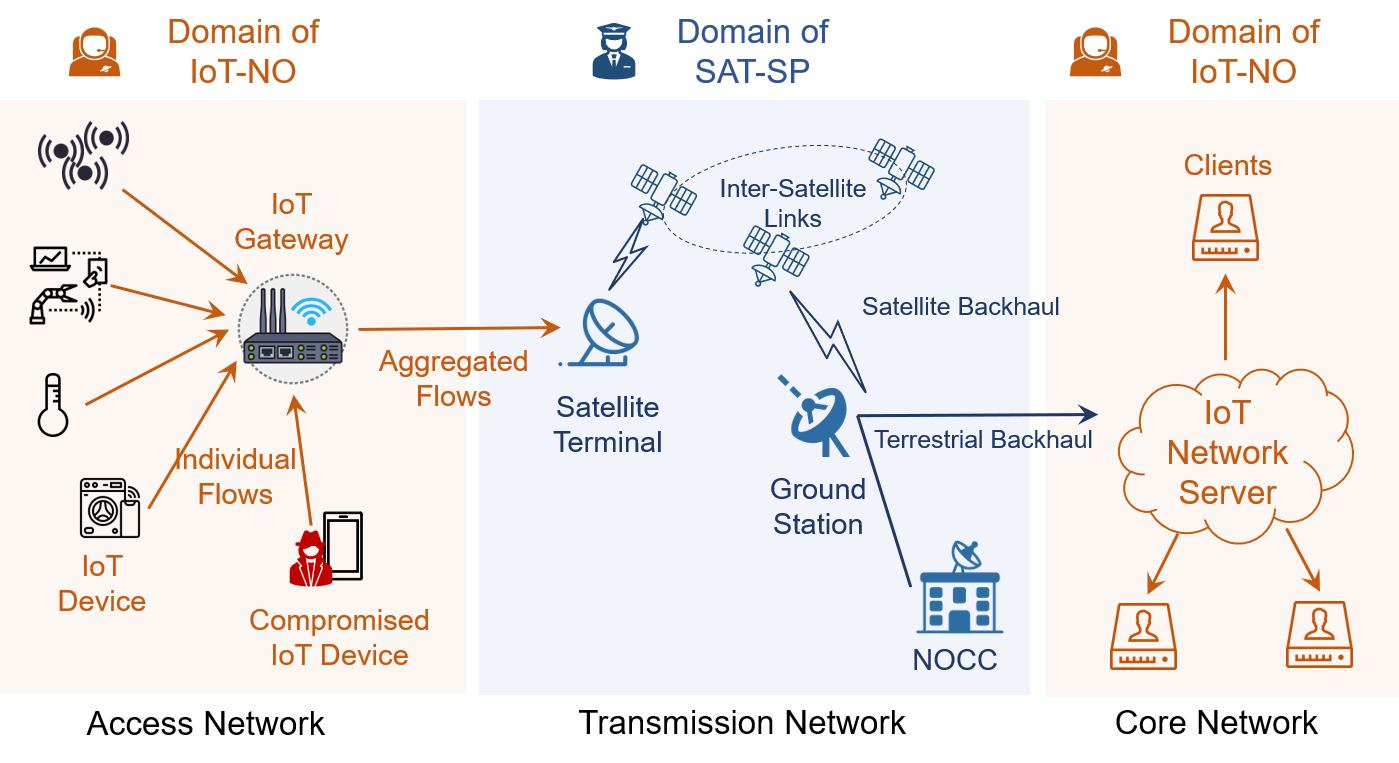}
\caption{ 
Multi-domain structure of satellite-enabled IoT {managed by} the IoT-NO and the SAT-SP in orange and blue, respectively. 
The IoT gateway aggregates the data flows from IoT devices and forwards the aggregated flow to a satellite terminal. The satellite network relays the aggregated flow to the ground station, which provides terrestrial backhaul to the IoT core network. 
The multi-domain integration leads to cross-domain cyberattacks that originate in the terrestrial IoT network while take effect in the satellite network. 
}
\label{fig:multidomainStructure}
\end{figure}

Although the ultimate attack targets are critical satellite components, single-domain defense mechanisms within the satellite network are insufficient under the multi-domain structure in Fig. \ref{fig:multidomainStructure}.
\textcolor{black}{To safeguard the privacy of IoT devices, the IoT-NO refrains from sharing their information with the SAT-SP and may even encrypt IoT device data until it reaches the IoT network server within the core network.}   
Lacking access to information about the IoT devices, the communication protocols (e.g., data frame formats and encryption schemes), or the diverse application-level protocols of the IoT network, the SAT-SP is unable to audit the data {transmitted through} the satellite network. 
Consequently, the SAT-SP can only deploy costly and coarse-granularity defenses (e.g., Dynamic Heterogeneous Redundancy (DHR) \cite{hu2024unveiling} and honeypots \cite{harrison2021defense,HoneypotSat}) that can detect the malicious data in aggregated data flows but cannot trace it back to the compromised IoT devices generating the data.  
Although disconnecting the aggregate flow can prevent malicious individual flows, it also blocks legitimate flows from IoT devices, leading to significant costs and inefficiencies, especially when massive IoT devices are connected to the satellite terminal. 




{
The key limitation of single-domain defense within satellite networks is the absence of a systematic vision that fails to exploit the attack's cross-domain feature. 
Since the attacks originate in the terrestrial IoT network and take effect in the satellite network, we can implement preventive defense methods at the IoT devices to mitigate the initial compromise and incorporate data inspection of individual flows within the access network for fine-grained defense. Such a multi-domain cooperative defense
effectively terminates the cross-domain cyber kill chain but also presents several challenges. 
}
First, the defense techniques within each domain have interdependent and cross-domain impacts. Consequently, finding an optimal way to integrate these multi-domain defense mechanisms to secure the entire satellite-enabled IoT system, while minimizing costs, presents a significant challenge. 
Second, IoT and satellite networks are managed by separate entities, each with its own objectives. While packet sampling and audit at the IoT gateway could lower the risk of malicious packets reaching the satellite domain, such measures can escalate operational expenses for the IoT-NO. Consequently, the IoT-NO may lack the incentive to apply the defense. 
Third, given the absence of precise system modeling, each participant needs to efficiently learn the optimal security strategy by striking a balance between exploration and exploitation. Furthermore, the security actions of other players may only be indirectly discerned through their resultant effects. 

To address these challenges, we propose a tripartite cross-domain security game involving the attacker, the SAT-SP, and the IoT-NO, covering both terrestrial and satellite domains. The game model provides a holistic characterization of the cross-domain impacts of the attack and defense actions. 
Using the game model as a stepping stone, we further investigate a flow pricing problem to incentivize the IoT-NO to actively engage in cooperative defense.
Finally, we formulate the learning problems of the IoT-NO and the SAT-SP into a multi-arm bandit problem and a best-response action learning problem, respectively.
Building on the theoretical underpinning that the price change has a piecewise-linear impact on the utilities of the IoT-NO and the SAT-SP, we develop an efficient finite-step learning algorithm. 
The numerical results from a case study corroborate the insufficiency of single-domain defense and the effectiveness of our consolidated game framework for cross-domain attacks. Moreover, the results lead to security insights, including the losses and mitigation of the Profit of Scale (PoS), gambling effects, and security-efficiency tradeoffs. 

\subsection{Notations and Organization of the Paper}
Calligraphic letter $\mathcal{Y}$ defines a set, and $\Delta \mathcal{Y}$ represents the set of probability distributions over set $\mathcal{Y}$. 
Subscripts and superscripts represent elements and different categories, respectively. 
{Let $\mu_{1:N}^{Mal}$ denote the tuple $(\mu_{1}^{Mal}, \cdots, \mu_{N}^{Mal})$ as a shorthand notation.}  
The rest of the paper is organized as follows. 
{We present the related works and the game model in Sections \ref{sec:related works} and \ref{sec:Game-Theoretic Model}, respectively.}   
Based on the analysis {of the cross-domain impact} in Section \ref{sec:equilibirum computation and analysis}, we formulate the price design problem and provide learning algorithms in Section \ref{sec:Detailed-Price Design and Learning}. 
Finally, Section \ref{sec:case study} evaluates the performance, and Section \ref{sec:conclusion} concludes the paper. 

\section{Related Works}
\label{sec:related works}
\subsection{\textcolor{black}{Cybersecurity Frameworks for Satellite Systems and Multidomain Collaborative Defense}}
\textcolor{black}{Traditional satellite systems, developed during the Cold War, relied on limited satellites and ground stations for point-to-point communication, facing threats like eavesdropping, jamming, and forgery \cite{guo2021survey}, addressed by cryptography and physical layer security \cite{zhang2023survey}. As they transition from specialized ``old space" systems to the digitized, internet-connected ``new space," they encounter new cybersecurity threats \cite{manulis2021cyber}.}

{
As a cutting-edge research field, cybersecurity for satellite networks has received widespread attention in recent years. 
From an attack perspective, the Center for Strategic and International Studies (CSIS)  emphasizes the threat of cyber weapons to satellite systems and space assets in its fifth edition of the Space Threat Assessment report \cite{Harrison_Johnson_Young_Wood_Goessler_2023}. 
In the same year of 2022, the Aerospace Corporation releases the Space Attack Research and Tactics Analysis (SPARTA) framework  \cite{SPARTA}, describing the unique threats that attackers may pose to space systems in areas such as intelligence gathering, initial intrusion, lateral movement, and defense evasion.}
From a defense perspective, the National Institute of Standards and Technology (NIST) has released technical reports addressing satellite network security \textcolor{black}{across ground, user, and space segments, including IR-8401 \cite{NIST-8401} for ground segment protection, IR-8270 \cite{NIST-8270} for identifying cybersecurity risks to space systems, and IR-8441 \cite{NIST-8441} for Hybrid Satellite Networks (HSNs).}   
The aforementioned reports, databases, and frameworks provide a panoramic view of satellite network security by illustrating  the Tactics, Techniques, and Procedures (TTP) of attacks and offering guidance for potential defense methods. 
However, they leave gaps in modeling strategic attack behaviors, quantifying the security impacts, and designing cost-effective defense strategies, which our work aims to fill. 

\textcolor{black}{Satellite-IoT falls under the broader category of Space-Air-Ground Integrated Network (SAGIN) \cite{liu2018space}. 
The multi-domain structure of SAGIN is recognized for improving overall network efficiency, enabling advancements such as resource orchestration \cite{zhang2021space}, dynamic traffic offloading \cite{tang2021deep}, and controller placement \cite{chen2022hierarchical}.} However, its implications for security and the development of cooperative defensive strategies remain underexplored, largely due to the complexities in capturing cross-domain impacts of strategic attacks. 
Prior works, such as \cite{guo2022distributed} and \cite{chen2022real}, adopt a multi-domain collaboration perspective to enable early detection and mitigate DDoS attacks at their initial entry points, thereby minimizing their impact. 
\textcolor{black}{Rather than primarily relying on blockchain for collaborative security measures}, our methodology, rooted in a game-theoretical approach, delves deeper into understanding and designing the incentives for strategic and intelligent participants.

\subsection{\textcolor{black}{Game Theory and Incentive Mechanisms in the Context of Cross-Domain Cyber Attacks}}
\textcolor{black}{
Game theory has been applied to cybersecurity since the early 2000s, valued for its ability to model and quantify strategic interactions between intelligent attackers and defenders \cite{manshaei2013game}. As attacks have evolved from isolated, one-shot events to persistent, multi-domain, and multi-layered threats, various game-theoretic models have been developed to address these complexities. From the perspective of participants and their incentives, these developments can be broadly categorized into three categories.} 

\textcolor{black}{
In the first category, researchers focus primarily on the confrontation between a single attacker and a single defender, often extending the analysis to multiple stages.  For instance, \cite{huang2019game} applies stochastic game theory, incorporating quantitative vulnerability analysis and time-based unified payoff quantification, to defend against attacks that span both cyber and physical layers. 
The authors in \cite{zhao2020finite} further explicitly incorporate time into the model using a finite-horizon Semi-Markov game to address time-sensitive cyber-physical attacks. 
To account for scenarios in which the defender has incomplete information about the attacker's payoff matrix, studies such as \cite{yao2024bayesian} propose a Bayesian-Stochastic hybrid game-theoretic framework for devising optimal cross-layer defensive strategies. 
These works commonly assume a persistent adversary capable of launching multi-stage attacks in terrestrial cyber-physical systems. However, in satellite IoT scenarios, attackers typically can only compromise IoT devices and initiate attacks at entry points, potentially propagating malicious impacts into the satellite domain.} 

\textcolor{black}{In the second category, researchers refine the concept of a ``defender" into multiple categories of participants, including legitimate users and diverse types of defenders, and have explored (potentially multi-domain) cooperative defense strategies. For instance, \cite{huang2020dynamic} proposes a dynamic Bayesian game to counter stealthy attackers who remain indistinguishable from legitimate users during cross-domain interactions. Other works have examined multiple defenders, such as Intrusion Detection Systems (IDS) \cite{liu2017energy} and regions within supervisory control and data acquisition (SCADA) systems, which interact to devise optimal cooperative defense strategies. 
Despite these advancements, existing approaches often treat multiple defensive participants uniformly, representing their interactions within the payoff matrix of a single game. 
In contrast, our work proposes a consolidated game model that captures the heterogeneous interactions specific to the IoT and satellite domains. This approach introduces a novel equilibrium concept, enabling the prediction of interdependent behaviors and cross-domain impacts among attackers, the IoT-NO, and the SAT-SP.}

\textcolor{black}{In the third category, with the incorporation of multiple defensive participants, the focus shifts from player interactions to internal and external incentive mechanisms.
Internal incentive mechanisms primarily address the tradeoff between Quality of Service (QoS) and security for individual players. For instance, \cite{sun2021cross} models the conflicting objectives of QoS and security within a vehicle as two abstract players competing for limited resources. Similarly, \cite{fadlullah2016gt} adopts a game-theoretic approach to jointly optimize QoS and security, enabling UEs to balance QoS and security levels while allowing eNBs to maximize bandwidth utilization.
External incentive mechanisms, on the other hand, conceptualize security as a service, encompassing encryption \cite{cai2022security}, security planning \cite{feng2020joint}, and computation offloading \cite{xu2018designing}. 
When security-related services are provided to multiple users, their interdependencies are further captured in \cite{feng2020joint,xu2018designing}. Moving beyond traditional monetary-based incentives, recent works such as \cite{huang2021duplicity,huang2023zetar} offer greater flexibility by leveraging information as an incentive to engage rational players. 
In these works, incentive mechanisms are typically designed by an independent coordinator, with metrics like overhead (e.g., the size increment of IPsec over IP in \cite{fadlullah2016gt}) and the strength of encryption mechanisms (e.g., key length in \cite{cai2022security}) used to quantify the security impact.
In our scenarios, the SAT-SP acts as both the flow price designer and a participant, directly engaging in game interactions. Furthermore, we explicitly incorporate the attacker’s behavior in selecting attack intensity, enabling a more comprehensive characterization of the cross-domain security impact under strategic attacks.}

\section{Consolidated Game Model}
\label{sec:Game-Theoretic Model}

{As illustrated in Fig. \ref{fig:CrossLayerGame}, the consolidated game framework offers a \textit{defense-in-depth} approach to the security of satellite-enabled IoT through collaborative defense at the device level, the individual flow level, and the aggregated flow level. 
At the device level, preventive security methods such as periodic system reinstallation are applied locally at each device to counter potential compromises. 
After compromising an IoT device, the attacker can inject malicious data in the device's individual data flow. 
At the individual flow level, the IoT-NO samples each device's individual flow at the IoT gateway for Deep Packet Inspection (DPI). Once a malicious packet is detected in a device's data flow, its access is revoked. 
At the aggregated flow level, the SAT-SP employs coarse-granularity defenses such as honeypots. These can assess the legitimacy of the aggregated data flow, but not of the individual flows. 
Once the attack payload in the aggregated flow from a satellite terminal triggers the honeypot, the SAT-SP terminates the satellite terminal's access.} 

{These collaborative defenses have an interdependent impact on terrestrial and satellite domains.} 
As highlighted by the orange color in Fig. \ref{fig:CrossLayerGame}, the terrestrial domain interaction involves the attacker, who determines the Malicious Packet Generation Rate (MPGR), and the IoT-NO, who determines the Sampling Frequency (SF) of all individual flows. 
As highlighted by the blue color in Fig. \ref{fig:CrossLayerGame}, the satellite domain interaction involves the IoT-NO and the SAT-SP, where the latter determines the security configuration of the coarse-granularity
defense. 
{The MPGR and the SF together determine the outcome of the individual flow DPI, which further affects the expected flow rates of the malicious and the legitimate packets arriving at the satellite domain. 
These flow rates and the SAT-SP's security configuration affect the utilities of both the IoT-NO and the SAT-SP, subsequently influencing their equilibrium strategies. 
Such coupling between terrestrial and satellite domain interactions leads to a consolidated equilibrium concept, the existence of which we prove under a mild condition.}  

\begin{figure}[h]
\centering
\includegraphics[width=.9 \columnwidth]{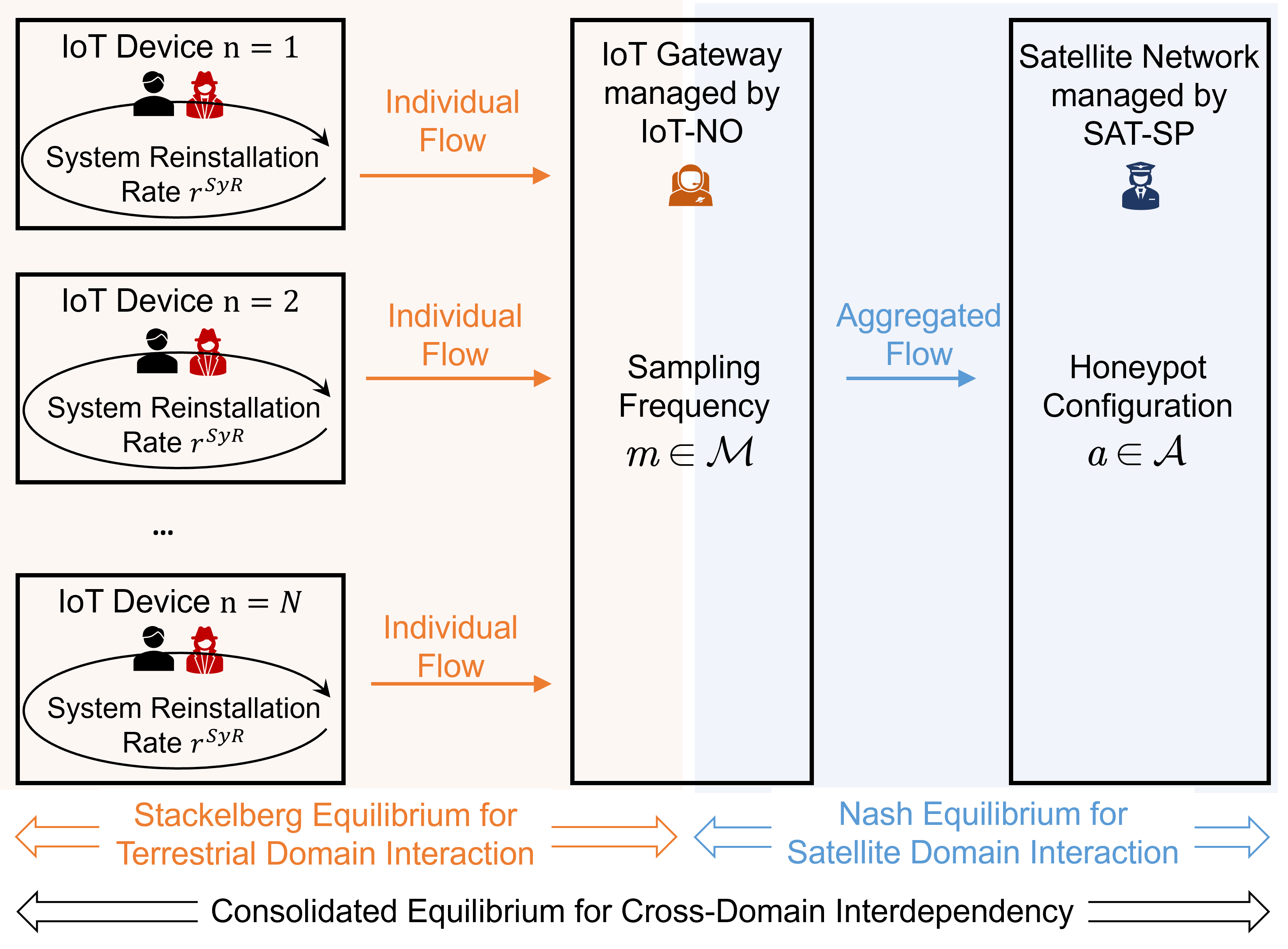}
\caption{ 
An overview of the consolidated game among the IoT-NO, the SAT-SP, and the attacker within the terrestrial and satellite domains. 
}
\label{fig:CrossLayerGame}
\end{figure}

Sections \ref{sec:Dynamic Compromise and Defense of IoT Devices}, \ref{sec:flow sampling detection}, and \ref{sec:Aggregated Flow and Honeypot Configuration} illustrate the collaborative defense at the device level, the individual flow level, and the aggregated flow level, respectively. 
{To elaborate on the coupling between terrestrial and satellite domain interactions,} we introduce the utility functions in Section \ref{sec:utilty}, followed by the {definition and computation} of the consolidated equilibrium concept in Section \ref{sec:Equilibrium Concept}. 

\subsection{Dynamic Compromise and Defense of IoT Devices}
\label{sec:Dynamic Compromise and Defense of IoT Devices}
As illustrated in Fig. \ref{fig:multidomainStructure}, a set of $N$ IoT devices connect to the IoT gateway and subsequently employ the satellite network for backhaul. 
\textcolor{black}{
These IoT devices, with restricted computation, memory, radio bandwidth, and battery resources, typically adopt lightweight security protections, making them more vulnerable to attacks than traditional computer systems \cite{xiao2018iot}. In addition to the hardware vulnerabilities \cite{wired2022starlink} discussed in the introduction, attackers can carry out a range of other attacks \cite{tsiknas2021cyber}, including supply chain attacks (where attackers can plant malicious components or code in IoT devices during manufacturing, ensuring persistent access), insecure update mechanisms (where malicious firmware updates can be injected by intercepting or spoofing legitimate updates, establishing control without detection), and weak authentication (where lack of proper authentication allows attackers to gain unauthorized access, inject malware, and maintain control over IoT devices).} 

\textcolor{black}{To model the initial compromise of the IoT devices, we define the state of} each device $n\in \mathcal{N}:=\{1,2,...,N\}$ as $x_n\in \mathcal{X}:=\{X^{Idl},X^{Nor},X^{Mal}\}$ where the transition probabilities among these states are illustrated in Fig. \ref{fig:StatreTrans}. 
\textcolor{black}{Throughout the paper, only the idle state $X^{Idl}$ is fully observable. 
It is challenging for the IoT-NO to detect whether the inner state of each IoT device is compromised (i.e., $X^{Mal}$) or normal (i.e., $X^{Nor}$). 
Instead of directly observing the internal state, the IoT-NO focuses on monitoring the observable outcomes, specifically whether individual flows contain malicious packets. Since it is the malicious packets, rather than the internal states, that affect the defense in the satellite domain, this approach aligns with the goal of mitigating cross-domain impacts.}

\begin{figure}[h]
\centering
\includegraphics[width=1 \columnwidth]{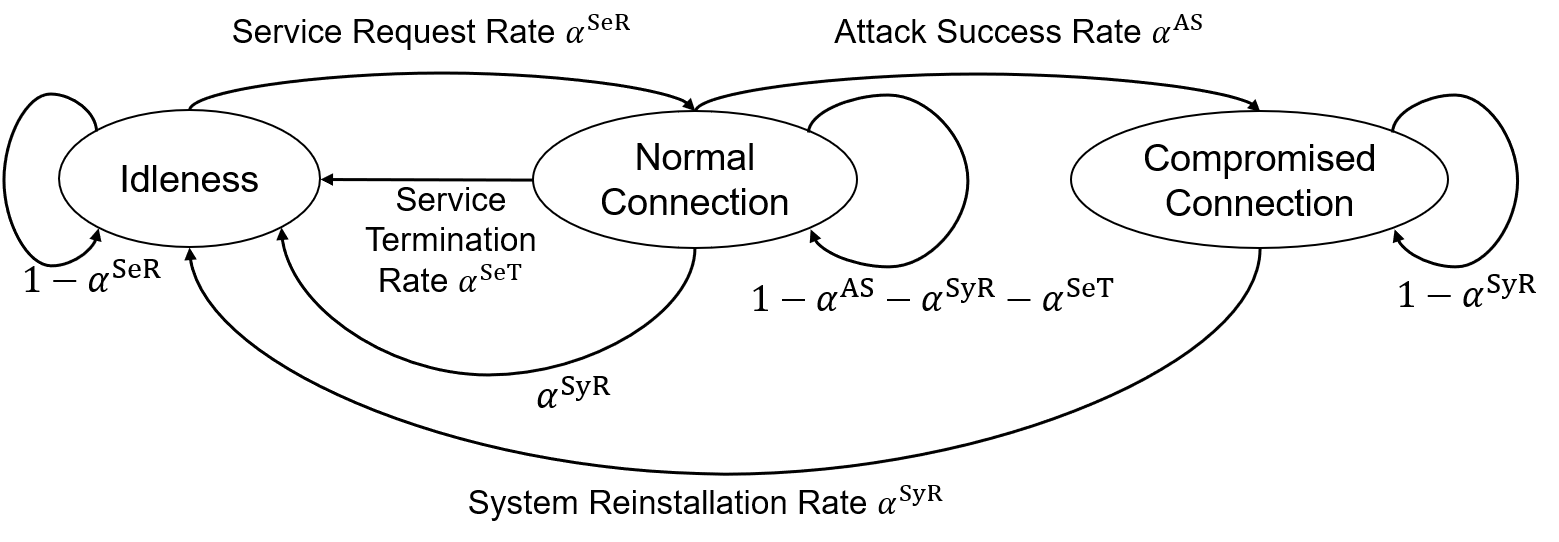}
\caption{ 
Transition probabilities among three states: idle $X^{Idl}$, normal connection $X^{Nor}$, and compromised connection $X^{Mal}$. 
The rates of service request, attack success, system reinstallation, and service termination are denoted as ${\alpha^{SeR}},{\alpha^{AT}},{\alpha^{SyR}},{\alpha^{SeT}}\in [0,1]$, respectively. 
}
\label{fig:StatreTrans}
\end{figure}

An IoT device, denoted as $n \in \mathcal{N}$, transitions from an idle state to an active connection upon receiving a service request, occurring at a rate of ${\alpha^{SeR}} \in [0,1]$ for power efficiency. Once the connection is established, the device may be compromised at a rate of ${\alpha^{AT}} \in [0,1]$, or it may return to the idle state upon service completion at a rate of ${\alpha^{SeT}} \in [0,1]$. 
\textcolor{black}{Since state $X^{Mal}$ is not observable}, to counteract potential compromises, each device enforces a \textcolor{black}{preventive} security policy that involves periodic system reinstallation at a rate of ${\alpha^{SyR}} \in [0,1]$. 
This reinstallation process resets the IoT device to its idle state, eradicating any remnants of prior attacks and requiring new attacks to begin anew.

\subsection{Individual Flow Generation and Sampling}
\label{sec:flow sampling detection}
{Device $n\in \mathcal{N}$ in non-idle states, i.e., $X^{Nor}$ and $X^{Mal}$, generates legitimate service data at an average rate of $\mu_n^{Nor}\in \mathbb{R}^{0+}$.} 
When a device $n\in \mathcal{N}$ is compromised, i.e., $x_n=X^{Mal}$, then the attacker could choose to insert malicious packets amidst the normal packets with a MPGR $\mu^{Mal}_n\in \mathbb{R}^{0+}$. 
At the IoT gateway, the IoT-NO samples each non-idle device's data flow with a frequency of $m\in \mathcal{M}:=\{m_1,m_2,\cdots,m_M\}$ times for DPI, where $m_j\in \mathbb{Z}^+$ and $M\in \mathbb{Z}^+$. 
Without loss of generality, we let $2\leq m_1\leq m_2\leq \cdots \leq m_M$. 
At each sampling stage, a malicious packet of a compromised device $n\in\mathcal{N}$ will be sampled with a probability of 
\begin{equation}
\label{eq:sample probability}
P^{Sam}_n(\mu_n^{Mal})=\mu^{Mal}_n/(\mu^{Nor}_n+\mu^{Mal}_n)\in [0,1]. 
\end{equation} 
Since the attacker has the freedom to choose the MPGR, {we denote} the malicious packet hit probability $P^{Sam}_n$ of device $n\in \mathcal{N}$ {as} a function of $\mu_n^{Mal}$. 

{Focusing on the attacks that exploit known vulnerabilities of critical components in the satellite network, we assume that the DPI is sufficiently effective to  identify the malicious packet once sampled. Then, once} a malicious packet has been sampled among the $m$ samples of the $n$-th device's data flow, the IoT-NO can detect the compromise of the device. 
Therefore, an device $n\in\mathcal{N}$ \textcolor{black}{under the state of compromised connection, i.e., $x_n=X^{Mal}$, will be detected with a probability 
\begin{equation}
\label{eq:detection probability}
   P^{Det}_n(\mu_n^{Mal},m)=1-[1-P^{Sam}_n(\mu_n^{Mal})]^{m} \in [0,1]. 
\end{equation}
}
Higher SF $m\in \mathcal{M}$ increases the detection probability, as shown in Lemma \ref{lemma:detection probability increase with SF}. Meanwhile, it introduces a higher overhead, defined as $c^{SF}: \mathcal{M}\mapsto \mathbb{R}^{-}$, to the IoT-NO, i.e., function $c^{SF}$ decreases concerning $m\in \mathcal{M}$. 

\begin{lemma}
\label{lemma:detection probability increase with SF}
The detection probability $P^{Det}_n(\mu_n^{Mal},m)$ of a compromised device $n\in \mathcal{N}$ in \eqref{eq:detection probability} increases with the SF $m\in \mathcal{M}$ for any MPGR $\mu_n^{Mal}\in \mathbb{R}^{0+}$. 
\end{lemma}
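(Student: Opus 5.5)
The plan is to exploit the explicit form of \eqref{eq:detection probability}. Fix a device $n\in\mathcal{N}$ and an MPGR $\mu_n^{Mal}\in\mathbb{R}^{0+}$, and write $q:=1-P^{Sam}_n(\mu_n^{Mal})$. By \eqref{eq:sample probability}, $P^{Sam}_n(\mu_n^{Mal})=\mu^{Mal}_n/(\mu^{Nor}_n+\mu^{Mal}_n)\in[0,1]$, so $q\in[0,1]$. Then $P^{Det}_n(\mu_n^{Mal},m)=1-q^{m}$. The claim therefore reduces to showing that $m\mapsto q^m$ is non-increasing on the positive integers (and hence on $\mathcal{M}$) for every $q\in[0,1]$.

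The key step is to compare consecutive values. For integers $m'\ge m$, write
\begin{equation*}
P^{Det}_n(\mu_n^{Mal},m')-P^{Det}_n(\mu_n^{Mal},m)=q^{m}\bigl(1-q^{m'-m}\bigr).
\end{equation*}
Both factors are nonnegative because $0\le q\le 1$. This gives monotonicity in the weak sense immediately, and it applies in particular to the ordered grid $m_1\le m_2\le\cdots\le m_M$.

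I would then state where the increase is strict. The difference is strictly positive precisely when $0<q<1$, i.e., when $\mu_n^{Mal}>0$ and $\mu_n^{Nor}>0$. The degenerate cases need a remark:
\begin{itemize}
\item If $\mu_n^{Mal}=0$, then $q=1$ and $P^{Det}_n\equiv 0$.
\item If $\mu_n^{Nor}=0<\mu_n^{Mal}$, then $q=0$ and $P^{Det}_n\equiv 1$.
\end{itemize}
In both cases the detection probability is constant in $m$. One should also fix a convention when both rates vanish, since \eqref{eq:sample probability} is then undefined; taking $P^{Sam}_n=0$ is natural, as no malicious packet exists to be sampled.

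The only real obstacle is interpretive: deciding whether ``increases'' means non-decreasing or strictly increasing. I would prove the non-decreasing statement for all $\mu_n^{Mal}\in\mathbb{R}^{0+}$, and add strict increase under $\mu_n^{Mal}>0$ and $\mu_n^{Nor}>0$. As an alternative presentation, one can treat $m$ as continuous and use $\partial_m(1-q^m)=-q^m\ln q\ge 0$ for $q\in(0,1]$, but the discrete factorization above avoids any issue with $\ln 0$.
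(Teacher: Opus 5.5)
Your proof is correct and takes the same route as the paper, whose proof only says the claim follows directly from \eqref{eq:sample probability} and \eqref{eq:detection probability}; your factorization $q^{m}(1-q^{m'-m})\ge 0$ with $q=1-P^{Sam}_n(\mu_n^{Mal})\in[0,1]$ is exactly the computation the paper leaves unstated. Your distinction between weak and strict monotonicity is a precision the paper does not state: the increase is strict only when $\mu_n^{Mal},\mu_n^{Nor}>0$ (and $m'>m$), and the detection probability is constant in the degenerate cases.
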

\begin{proof}
    The proof directly follows from \eqref{eq:sample probability} and \eqref{eq:detection probability}.  
\end{proof}

Since the cross-domain attack targets the satellite network, we define the utility function $U_n: \mathbb{R}\times\mathcal{M}\mapsto \mathbb{R}$ of the attacker who compromises device $n\in \mathcal{N}$ in \eqref{eq:Attacker Utility} as the expected malicious flow rate that enters the satellite network. 
\begin{equation}
\label{eq:Attacker Utility}
    U_n(\mu_n^{Mal},m) :=  [1-P^{Det}_n(\mu_n^{Mal},m)] \mu^{Mal}_n. 
\end{equation}







\subsection{Aggregated Flow and {Coarse-Grid Defense Configuration}}
\label{sec:Aggregated Flow and Honeypot Configuration}
 
Despite the individual flow sampling scheme in Section \ref{sec:flow sampling detection}, malicious packets from each device may still penetrate the satellite network. 
Depending on the IoT-NO's selection of the SF $m\in \mathcal{M}$ and each device's normal flow rate $\mu^{Nor}_{n}\in \mathbb{R}^+$, we define $w^{Nor}(\mu_{1:N}^{Mal},m)$ and $w^{Mal}(\mu_{1:N}^{Mal},m)$ as the expected rates of normal and malicious flows that enter the satellite network, respectively, where $\mu_{1:N}^{Mal}:=(\mu_{1}^{Mal},\cdots,\mu_{N}^{Mal})$.  

As illustrated in Fig. \ref{fig:multidomainStructure}, the IoT gateway has aggregated the individual flows of all $N$ IoT devices{, employing unknown protocols and encryption schemes,} before they reach the satellite network. 
{Equipped with knowledge of potential vulnerabilities that could be exploited within the satellite domain,} the SAT-SP can deploy {coarse-granularity defenses such as the DHR architecture \cite{hu2024unveiling} and honeypots to assess the legitimacy of} the aggregated data flow. 
{For example, the SAT-SP can deploy and operate heterogeneous processors (e.g., ARM, RISCV, MIPS) simultaneously in an onboard computer. 
Since it is challenging for an attack to exploit vulnerabilities in all these heterogeneous processors simultaneously, discrepancies in their computation outcomes could indicate a potential compromise. 
Considering the common requirement for Triple Modular Redundancy (TMR) in space systems to counter the harsh space environment, such an architecture would be well-suited. 
Similarly, the SAT-SP can also deploy honeypots as a deceptive and proactive method to detect the presence of malicious packets in the aggregated data flow.} 

{However, due to the lack of information regarding individual flows, these coarse-granularity defenses} cannot discern or attribute the malicious flow to the specific compromised device. 
Thus, once {the coarse-granularity defense has triggered an alarm (e.g., the aggregated flow triggers a honeypot or the heterogeneous components in the DHR  architecture return inconsistent results)} in the satellite network, the SAT-SP terminates the connection between the IoT gateway and the satellite terminal. 
It disables the flows of all $N$ IoT devices from entering the satellite network. 
If honeypots are not triggered {or the computation results of all DHR components are consistent}, then the aggregated flow enters the IoT core network using the satellite backhaul. 


{We abstract the different configurations of the selected coarse-granularity defense into a finite set, denoted as $\mathcal{A}:={\{a_1,a_2,\cdots,a_K\}}$. For example, a configuration can specify the number of heterogeneous components in the DHR architecture or the type of vulnerabilities that a honeypot is designed to mimic. 
The SAT-SP can choose from these $K$ configurations, where each configuration $a \in \mathcal{A}$ results in a different alarm triggering probability $P^{Cap}: (\mathbb{R}^{0+})^N \times \mathcal{M} \times \mathcal{A} \mapsto [0,1]$ and an associated operation cost $c^{OC}:\mathcal{A} \mapsto \mathbb{R}^{-}$. 
Besides the configuration $a\in \mathcal{A}$, the alarm triggering probability also depends on the attacker's MPGR tuple $\mu_{1:N}^{Mal}\in (\mathbb{R}^{0+})^N$ as well as the IoT-NO's SF $m\in\mathcal{M}$.}



\subsection{Utilities of the SAT-SP and the IoT-NO}
\label{sec:utilty}

Define $U^{SP}:\mathcal{A}\times \mathcal{M} \times (\mathbb{R}^+)^N \mapsto \mathbb{R}$ and $U^{NO}:\mathcal{A}\times \mathcal{M} \times (\mathbb{R}^+)^N \mapsto \mathbb{R}$ as the utility functions of the SAT-SP and the IoT-NO, respectively. 
Define $r^{Cap}\in\mathbb{R}^+$ as the reward of terminating per unit malicious packets. 
Define $r^{SP}\in\mathbb{R}^+$  (resp. $r^{NO}\in\mathbb{R}^+$) as the fee that the SAT-SP charges the IoT-NO (resp. the IoT-NO charges {the} clients) for transmitting per unit legitimate packets to the IoT core network. 

As shown in \eqref{eq:SAT-SP Utility}, the SAT-SP’s utility function comprises the expected gain {from detecting illegitimacy in the aggregated flow}, the reward for transmitting legitimate packets, and {the operation cost associated with the broad-granularity defense under configuration $a \in \mathcal{A}$}, i.e.,  
\begin{equation}
 \begin{split}
 \label{eq:SAT-SP Utility}
      &  U^{SP}(a,m,\mu_{1:N}^{Mal})  =   r^{Cap} {P^{Cap} (\mu_{1:N}^{Mal},m,a) } w^{Mal}(\mu_{1:N}^{Mal},m)   \\
       &  + r^{SP} (1- {P^{Cap}(\mu_{1:N}^{Mal},m,a)})  w^{Nor}(\mu_{1:N}^{Mal},m) + c^{OC}(a). \\ 
 \end{split}
\end{equation}

As shown in \eqref{eq:IOT-O Utility}, the IoT-NO's utility function includes the cost of applying SF $m$ and the expected gain for transmitting legitimate packets, i.e., 
\begin{equation}
\label{eq:IOT-O Utility}
\begin{split}
   & U^{NO} (a,m,\mu_{1:N}^{Mal})  
     =  c^{SF}(m) 
 \\ 
    &  + (r^{NO}-r^{SP}) (1- {P^{Cap}(\mu_{1:N}^{Mal},m,a)}) w^{Nor}(\mu_{1:N}^{Mal},m) . 
\end{split}
\end{equation}

{According to \eqref{eq:SAT-SP Utility} and \eqref{eq:IOT-O Utility}, the IoT-NO pays the SAT-SP and charges {the} clients based on the size of the data transmission. 
The IoT-NO needs to strike a balance between the cost of implementing individual flow DPI and the profit derived from the price difference in transmitting the data flow.} 

\subsection{Equilibrium Concept and {Computation Method}}
\label{sec:Equilibrium Concept}

As shown in Fig. \ref{fig:CrossLayerGame}, the consolidated game involves three players within both the terrestrial and satellite domains. 
Regarding their utility functions $U_n$, $U^{SP}$, and $U^{NO}$, the attacker who compromises device $n\in\mathcal{N}$ aims to optimize the malicious flow rate $\mu_n^{Mal}\in\mathbb{R}^{0+}$, the SAT-SP aims to optimize the configuration $a\in\mathcal{A}$, and the IoT-NO aims to optimize the SF $m\in\mathcal{M}$, respectively. 

This work considers the general case where the SAT-SP and the IoT-NO can take mixed strategies, denoted by $\sigma^{SP}\in \Delta\mathcal{A}$ and $\sigma^{NO}\in \Delta \mathcal{M}$, respectively.  
Therefore, $\sigma^{SP}(a)$ represents the probability of the SAT-SP adopting configuration $a\in \mathcal{A}$. 
Similarly, $\sigma^{NO}(m)$ represents the probability of the IOT-NO selecting SF $m\in\mathcal{M}$. 
Define $\bar{U}^{SP}:\Delta\mathcal{A}\times \Delta\mathcal{M} \times \mathbb{R}^+ \mapsto \mathbb{R}$ and $\bar{U}^{NO}:\Delta\mathcal{A}\times \Delta\mathcal{M} \times \mathbb{R}^+ \mapsto \mathbb{R}$ as the expected utility functions of the SAT-SP and the IoT-NO, respectively, under their mixed strategies.  
Then, we obtain 
\begin{equation*}
    \begin{split}
       \bar{U}^{SP}(\sigma^{SP},\sigma^{NO},\mu_{1:N}^{Mal}) = \mathbb{E}_{a\sim \sigma^{SP}, m\sim \sigma^{NO}}U^{SP}(a,m,\mu_{1:N}^{Mal}) 
       \\
       =\sum_{a\in \mathcal{A}}  \sum_{m\in \mathcal{M}}  \sigma^{SP}(a) \sigma^{NO}(m)  U^{SP}(a,m,\mu_{1:N}^{Mal}). 
    \end{split}
\end{equation*}

\begin{equation*}
    \begin{split}
       \bar{U}^{NO}(\sigma^{SP},\sigma^{NO},\mu_{1:N}^{Mal}) = \mathbb{E}_{a\sim \sigma^{SP}, m\sim \sigma^{NO}} U^{NO}(a,m,\mu_{1:N}^{Mal})
       \\
       =\sum_{a\in \mathcal{A}}  \sum_{m\in \mathcal{M}}  \sigma^{SP}(a)  \sigma^{NO}(m) U^{NO}(a,m,\mu_{1:N}^{Mal}).
    \end{split}
\end{equation*}

In the satellite domain, the IoT-NO and the SAT-SP can keep their mixed strategies of the SF selection and the configuration secret, respectively. 
The advanced attacker can obtain the SF chosen by the IoT-NO and the device's legitimate flow rate. 
Thus, {the process of flow generation and sampling in the terrestrial domain constitutes} a Stackelberg game, where the IoT-NO acts as the leader and the attacker as the follower.  
Hence, the attacker can limit {the} search space of the optimal MPGR, denoted as $\hat{\mu}_n^{Mal}(m)$, for each compromised device $n\in \mathcal{N}$ under SF $m\in \mathcal{M}$ to the pure strategies without loss of generality, i.e., 
\begin{equation}
\label{eq:definition of attack's BR}
    \hat{\mu}_n^{Mal}(m)\in \arg\max_{\mu_n^{Mal}\in\mathbb{R}^{0+}} U_n(\mu_n^{Mal},m).  
\end{equation}

For the aforementioned consolidated game, we define the mixed-strategy Consolidated Stackelberg-Nash Equilibrium (CSNE) in Definition \ref{def:Stackelberg-Nash Equlibirum} and provide the necessary and sufficient condition in Lemma \ref{lemma:existence} for the existence of the mixed-strategy CSNE with finite action sets. 
We define the shorthand notation  
$\bar{U}^{NO}(\sigma^{SP},\sigma^{NO},\mu_{1:N}^{Mal,*})  :=\sum_{a\in \mathcal{A}}  \sum_{m\in \mathcal{M}}  \sigma^{SP}(a)  \sigma^{NO}(m) U^{NO}(a,m,\hat{\mu}_{1:N}^{Mal}(m))$.

\begin{definition}[\textbf{Consolidated Stackelberg-Nash Equilibrium}]
\label{def:Stackelberg-Nash Equlibirum}
    The triple ($\mu_{1:N}^{Mal,*}, \sigma^{SP,*}\in \Delta \mathcal{A}, \sigma^{NO,*}\in \Delta\mathcal{M}$) constitute a Consolidated Stackelberg-Nash Equilibrium (CSNE) if 
    \eqref{eq:Nash Equlibirum} holds for all $\sigma^{SP}\in \Delta \mathcal{A}$ and $\sigma^{NO}\in \Delta \mathcal{M}$. 
        \begin{equation}
        \begin{split}
        \label{eq:Nash Equlibirum}
        &  \bar{U}^{SP}(\sigma^{SP,*},\sigma^{NO,*},\mu_{1:N}^{Mal,*}) \geq \bar{U}^{SP}(\sigma^{SP},\sigma^{NO,*},\mu_{1:N}^{Mal,*}). \\
        & \bar{U}^{NO} (\sigma^{SP,*},\sigma^{NO,*},\mu_{1:N}^{Mal,*})\geq \bar{U}^{NO} (\sigma^{SP,*},\sigma^{NO},\mu_{1:N}^{Mal,*}). 
        \end{split}
    \end{equation}
\end{definition}

\begin{lemma}[\textbf{The Existence of CSNE}]
\label{lemma:existence}
    Consider a consolidated game with finite action sets $\mathcal{M},\mathcal{A}$ and arbitrary utility functions of $U_n: \mathbb{R}\times\mathcal{M}\mapsto \mathbb{R}$, $U^{SP}:\mathcal{A}\times \mathcal{M} \times (\mathbb{R}^+)^N \mapsto \mathbb{R}$, and $U^{NO}:\mathcal{A}\times \mathcal{M} \times (\mathbb{R}^+)^N \mapsto \mathbb{R}$. 
    A mixed-strategy CSNE in Definition \ref{def:Stackelberg-Nash Equlibirum} exists if and only if $ \hat{\mu}_n^{Mal}(m)$ in \eqref{eq:definition of attack's BR} exists for all $m\in \mathcal{M}$. 
\end{lemma}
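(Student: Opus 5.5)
The plan is to prove the two directions separately. The substantive one is sufficiency: existence of the attacker's best responses is enough to reduce the consolidated game to a finite two-player bimatrix game, where Nash's theorem applies.

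\emph{Necessity.} Suppose a CSNE $(\mu_{1:N}^{Mal,*},\sigma^{SP,*},\sigma^{NO,*})$ exists. The shorthand $\bar{U}^{NO}(\sigma^{SP},\sigma^{NO},\mu_{1:N}^{Mal,*})$ used in Definition~\ref{def:Stackelberg-Nash Equlibirum} is defined by evaluating $U^{NO}(a,m,\hat{\mu}_{1:N}^{Mal}(m))$ for every $m\in\mathcal{M}$. The second inequality in \eqref{eq:Nash Equlibirum} must hold for \emph{all} $\sigma^{NO}\in\Delta\mathcal{M}$, including every pure strategy $\delta_m$. So the right-hand side is well defined only if $\hat{\mu}_n^{Mal}(m)$ from \eqref{eq:definition of attack's BR} exists for every $m\in\mathcal{M}$ and every $n$. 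In other words, the Stackelberg follower component of the equilibrium requires a maximizer of $U_n(\cdot,m)$ at each possible leader action. Hence the condition is necessary.

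\emph{Sufficiency.} Assume $\hat{\mu}_n^{Mal}(m)$ exists for every $m$; if the argmax is not a singleton, fix one selection per $m$. Define finite payoff matrices $A(a,m):=U^{SP}(a,m,\hat{\mu}_{1:N}^{Mal}(m))$ and $B(a,m):=U^{NO}(a,m,\hat{\mu}_{1:N}^{Mal}(m))$ indexed by $\mathcal{A}\times\mathcal{M}$. These are finite real numbers, since $U^{SP}$ and $U^{NO}$ are real-valued. The expected utilities $\bar U^{SP}$ and $\bar U^{NO}$ with the attacker's best responses substituted are then exactly the bilinear forms $\sigma^{SP\top}A\sigma^{NO}$ and $\sigma^{SP\top}B\sigma^{NO}$.

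By Nash's theorem (equivalently, Kakutani's fixed-point theorem applied to the best-response correspondence on the compact convex set $\Delta\mathcal{A}\times\Delta\mathcal{M}$), this bimatrix game has a mixed Nash equilibrium $(\sigma^{SP,*},\sigma^{NO,*})$. Setting $\mu_{1:N}^{Mal,*}$ to be the attacker's best-response map $m\mapsto\hat{\mu}_{1:N}^{Mal}(m)$, we then verify that both inequalities of \eqref{eq:Nash Equlibirum} are exactly the Nash conditions of the bimatrix game. This yields a CSNE.

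\emph{Main obstacle.} The main obstacle is interpretive rather than technical. The triple's first component must be read as the attacker's strategy, i.e., a response to each $m$, rather than a single vector. The SAT-SP's inequality must likewise be read with $\hat{\mu}^{Mal}(m)$ inside the expectation over $m\sim\sigma^{NO,*}$, consistent with the shorthand. Once that reading is fixed, the bilinearity is what makes Nash's theorem apply: the attacker's dependence on $m$ is absorbed into the matrix entries, so no continuity of $U^{SP}$ or $U^{NO}$ in $\mu^{Mal}$ is needed. This is why the lemma holds for arbitrary utility functions.
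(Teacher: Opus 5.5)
Your proposal is correct and follows the paper's proof. For sufficiency you substitute the attacker's best responses $\hat{\mu}_{1:N}^{Mal}(m)$ to obtain a finite game over $\mathcal{A}\times\mathcal{M}$ and invoke mixed-strategy existence; for necessity, the CSNE conditions are undefined without $\hat{\mu}_{1:N}^{Mal}(m)$. You simply make explicit what the paper leaves implicit: the deviation inequality ranges over every pure $m$, so existence is needed for all $m\in\mathcal{M}$, and $\mu_{1:N}^{Mal,*}$ must be read as the response map $m\mapsto\hat{\mu}_{1:N}^{Mal}(m)$.
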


\begin{proof} 
    If the condition holds, then the consolidated game can be converted into a finite game with a guaranteed existence of an equilibrium in mixed strategy \cite{shoham2008multiagent}. 
    If the condition does not hold, then CSNE does not exist, as $\mu_{1:N}^{Mal,*}$ is undefined. 
\end{proof}


According to \cite{basar}, we can formulate the following bilinear program with linear constraints to compute the CSNE ($\mu_{1:N}^{Mal,*}\in \mathbb{R}^+, \sigma^{SP,*}\in \Delta \mathcal{A}, \sigma^{NO,*}\in \Delta\mathcal{M}$) in Definition \ref{def:Stackelberg-Nash Equlibirum}. 

\begin{proposition}[{\textbf{Program for CSNE Computation}}]
\label{proposition:bilinear program}
The triplet ($\mu_{1:N}^{Mal,*}, \sigma^{SP,*}\in \Delta \mathcal{A}, \sigma^{NO,*}\in \Delta\mathcal{M}$) constitutes a mixed-strategy CSNE if and only if \eqref{eq:Attacker's optimal malicious flow} holds, and $\sigma^{SP,*}\in \Delta \mathcal{A}, \sigma^{NO,*}\in \Delta\mathcal{M}$ is a solution to the following program: 
\begin{equation*}
\begin{split}
\label{eq:computation of CSNE}
 &   \max_{\sigma^{SP},\sigma^{NO},s^{SP},s^{NO}} s^{SP}+s^{NO} + \bar{U}^{SP}(\sigma^{SP},\sigma^{NO},\mu_{1:N}^{Mal,*}) \\
  & \quad\quad \quad \quad + \bar{U}^{NO}(\sigma^{SP},\sigma^{NO},\mu_{1:N}^{Mal,*}) \\
\textsc{s.t.} \quad
 & \sum_{m\in\mathcal{M}} \sigma^{NO}(m) U^{SP}(a,m,\hat{\mu}_{1:N}^{Mal}(m))  \leq - s^{SP} , \forall a\in\mathcal{A}, \\
 & \sum_{a\in\mathcal{A}} \sigma^{SP}(a) U^{NO}(a,m,\hat{\mu}_{1:N}^{Mal}(m))  \leq - s^{NO} , \forall m\in\mathcal{M}, \\
 & \ \sigma^{SP}\in \Delta \mathcal{A},\sigma^{NO}\in \Delta \mathcal{M}. 
\end{split}
\end{equation*}
The dimensions of the decision variables $\sigma^{SP}(a), a\in\mathcal{A}$ and $\sigma^{NO}(m), m\in\mathcal{M}$ are $|\mathcal{A}|$ and $|\mathcal{M}|$, respectively. The solution to the program exists and is achieved at the equality of constraints (a) and (b), i.e., $s^{SP,*}= - \bar{U}^{SP}(\sigma^{SP, *},\sigma^{NO, *},\mu_{1:N}^{Mal,*}) $ and $s^{NO, *}= - \bar{U}^{NO}(\sigma^{SP, *},\sigma^{NO, *},\mu_{1:N}^{Mal,*}) $. 
\end{proposition}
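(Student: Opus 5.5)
The plan is to reduce the statement to the classical Mangasarian--Stone characterization of bimatrix Nash equilibria, as in \cite{basar}. \textbf{Step 1} fixes the attacker layer. By Lemma~\ref{lemma:existence}, the best responses $\hat{\mu}_n^{Mal}(m)$ in \eqref{eq:definition of attack's BR} exist for every $m\in\mathcal{M}$; the condition \eqref{eq:Attacker's optimal malicious flow} pins down $\mu_{1:N}^{Mal,*}$ as these best responses, evaluated at whichever SF is realized. This lets us define two finite $|\mathcal{A}|\times|\mathcal{M}|$ payoff matrices
\[
A_{a,m}:=U^{SP}(a,m,\hat{\mu}_{1:N}^{Mal}(m)),\qquad B_{a,m}:=U^{NO}(a,m,\hat{\mu}_{1:N}^{Mal}(m)).
\]
Then $\bar{U}^{SP}=\sigma^{SP\top}A\sigma^{NO}$ and $\bar{U}^{NO}=\sigma^{SP\top}B\sigma^{NO}$ by the shorthand defined before Definition~\ref{def:Stackelberg-Nash Equlibirum}. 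With the attacker fixed, Definition~\ref{def:Stackelberg-Nash Equlibirum} is therefore exactly a mixed Nash equilibrium of the bimatrix game $(A,B)$, and the rest of the argument lives entirely in this finite game.

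\textbf{Step 2} is the key inequality. Take any feasible point $(\sigma^{SP},\sigma^{NO},s^{SP},s^{NO})$. Multiply constraint (a) by $\sigma^{SP}(a)\ge 0$ and sum over $a$; this gives $\bar{U}^{SP}\le -s^{SP}$. Multiply (b) by $\sigma^{NO}(m)$ and sum over $m$; this gives $\bar{U}^{NO}\le -s^{NO}$. Adding the two shows the objective is at most $0$ everywhere on the feasible set.

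\textbf{Step 3} handles the ``if'' direction. Let $(\sigma^{SP,*},\sigma^{NO,*})$ be an equilibrium of $(A,B)$, which exists by Lemma~\ref{lemma:existence}. Set $s^{SP}=-\bar{U}^{SP}(\sigma^{SP,*},\sigma^{NO,*})$ and $s^{NO}=-\bar{U}^{NO}(\sigma^{SP,*},\sigma^{NO,*})$. The equilibrium conditions \eqref{eq:Nash Equlibirum}, applied to pure deviations, say that $\sum_m\sigma^{NO,*}(m)A_{a,m}\le \bar{U}^{SP}$ for every $a$, and similarly for $B$ and every $m$. These are precisely constraints (a) and (b), so the point is feasible. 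Its objective equals $0$, so by Step~2 it is optimal. This also shows that the program's maximum exists and equals $0$.

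\textbf{Step 4} handles the ``only if'' direction, and I expect it to be the main place requiring care. Let the program be solved at $(\sigma^{SP},\sigma^{NO},s^{SP},s^{NO})$. By Step~3 the optimal value is $0$. The two inequalities from Step~2 sum to this value, so both must hold with equality: $s^{SP}=-\bar{U}^{SP}$ and $s^{NO}=-\bar{U}^{NO}$. This is the asserted equality of constraints (a) and (b) at the optimum. Substituting back, (a) reads $\sum_m\sigma^{NO}(m)A_{a,m}\le \bar{U}^{SP}(\sigma^{SP},\sigma^{NO})$ for every $a$. Any mixed deviation $\sigma'^{SP}$ is a convex combination of pure ones, so it cannot do better, which gives the first inequality in \eqref{eq:Nash Equlibirum}. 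Constraint (b) gives the second inequality in the same way. Together with \eqref{eq:Attacker's optimal malicious flow}, the triple is therefore a CSNE.

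The subtle point in Step~4 is to use the \emph{global} zero value, established through the existence argument, rather than merely a local optimality condition of the nonconvex bilinear objective. Without that, one could not conclude that both Step~2 inequalities are tight. The remaining claims are bookkeeping: the variable dimensions $|\mathcal{A}|$ and $|\mathcal{M}|$ are immediate from the definitions of $\Delta\mathcal{A}$ and $\Delta\mathcal{M}$.
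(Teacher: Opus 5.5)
Your proposal is correct and follows essentially the paper's own route. The paper fixes the attacker at $\hat{\mu}_{1:N}^{Mal}(m)$ and invokes the Mangasarian--Stone bilinear-program characterization of bimatrix equilibria from \cite{basar}; you unpack that citation directly, showing that $s^{SP}+s^{NO}+\bar{U}^{SP}+\bar{U}^{NO}\le 0$ on the feasible set, that $0$ is attained at every equilibrium, and that both aggregated inequalities are tight at any global optimum.

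There are two minor slips. First, existence (and uniqueness) of $\hat{\mu}_n^{Mal}(m)$ comes from Lemma~\ref{lemma:Attacker's optimal malicious flow} with $m\ge 2$, not from Lemma~\ref{lemma:existence}; the latter is an equivalence and only gives an equilibrium of $(A,B)$ once $\hat{\mu}$ exists. Second, your Step~3 proves the ``only if'' direction and Step~4 the ``if'' direction, not the other way round.
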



\section{{Cross-Domain Impact} Analysis}
\label{sec:equilibirum computation and analysis}
At the device level, we analyze the steady state of the IoT devices and the attacker's optimal MPGR in Sections \ref{sec:Steady State of IoT Devices} and \ref{sec:Attacker's Best-Response Strategy}, respectively. 
Based on the characterization of the expected rate of malicious and legitimate flows in Section \ref{sec:Expected Flow Rate in the Satellite Domain}, we provide theoretical underpinnings of the impact of the IoT-NO's SF on the SAT-SP's configuration selection in Section \ref{sec:Impact of SF on Configuration}. 

\subsection{Steady State of IoT Devices}
\label{sec:Steady State of IoT Devices}
The state transition in Fig. \ref{fig:StatreTrans} forms a Markov chain with the following transition matrix, i.e.,  
\begin{equation}
\label{eq:Tr}
Tr =     
    \begin{bmatrix}
    1-{\alpha^{SeR}} & {\alpha^{SyR}}+{\alpha^{SeT}} & {\alpha^{SyR}} \\
    {\alpha^{SeR}} & 1-{\alpha^{AT}}-{\alpha^{SyR}}-{\alpha^{SeT}} & 0 \\
    0 & {\alpha^{AT}} & 1-{\alpha^{SyR}} \\
    \end{bmatrix}. 
\end{equation}
{In \eqref{eq:Tr}, the columns represent the starting states, and the rows represent the ending states. The states are ordered as $X^{Idl},X^{Nor},X^{Mal}$. 
Therefore, $\sum_{i=1,2,3\}} Tr_{i,j}=1, \forall j\in \{1,2,3\}$.}  

{Driven by $Tr$ in \eqref{eq:Tr}, the states of the Markov chain reach a stationary distribution, denoted by $b=[b^{Idl}, b^{Nor}, b^{Mal}]'$, over the long term. Here,} $b^{Idl}, b^{Nor}, b^{Mal}\in [0,1]$ represent the probability of the $n$-th device's state being $X^{Idl},X^{Nor},X^{Mal}$, respectively, at the steady state. 
Define shorthand notation 
${\alpha^{tot}=\alpha^{SeR}} {\alpha^{AT}} + {\alpha^{SeR}} {\alpha^{SyR}} + {\alpha^{AT}} {\alpha^{SyR}} + ({\alpha^{SyR}})^2 + {\alpha^{SyR}} {\alpha^{SeT}}$. 
Since $b^{Idl}  +b^{Nor}  +b^{Mal}  =1$ and $b=Tr \cdot b$, we can solve $b$ as 
\begin{equation}
    \begin{split}
    \label{eq:closed form of b}
        b^{Idl}&= {\alpha^{SyR}} ({\alpha^{AT}} + {\alpha^{SyR}} + {\alpha^{SeT}}) / \alpha^{tot}. \\
        b^{Nor} &= {\alpha^{SeR}} {\alpha^{SyR}}/\alpha^{tot}. \\
        b^{Mal}  &= {\alpha^{SeR}} {\alpha^{AT}}/\alpha^{tot}.         
    \end{split}
\end{equation}

As shown in Lemma \ref{lemma:how alpha3 affects b}, the increases in the system reinstallation rate ${\alpha^{SyR}}$ can reduce the probability of the compromised state in the steady state. However, it can also hinder legitimate flow generation as the steady-state probability of the idle state increases. 
{We present the proof of Lemma \ref{lemma:how alpha3 affects b} in Appendix \ref{app:proof of Lemma 3,4,5}.}  

\begin{lemma}
\label{lemma:how alpha3 affects b}
    The values of $b^{Mal}$ and $b^{Idl}$ decrease and increase concerning ${\alpha^{SyR}}\in [0,1]$, respectively. 
    The value of $b^{Nor}$ increases and decreases  concerning ${\alpha^{SyR}}\in [0,\sqrt{{\alpha^{SeR}} {\alpha^{AT}}}]$ and ${\alpha^{SyR}}\in [\sqrt{{\alpha^{SeR}} {\alpha^{AT}}},1]$, respectively.  
\end{lemma}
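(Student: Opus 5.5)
The plan is to work directly with the closed forms in \eqref{eq:closed form of b}, treating $\alpha^{SyR}=:s$ as the only variable while $\alpha^{SeR},\alpha^{AT},\alpha^{SeT}$ stay fixed. The first step is to rewrite the common denominator as a quadratic in $s$:
\begin{equation*}
\alpha^{tot}(s)=\alpha^{SeR}\alpha^{AT}+s(\alpha^{SeR}+\alpha^{AT}+\alpha^{SeT})+s^2 .
\end{equation*}
Its derivative is $(\alpha^{tot})'(s)=\alpha^{SeR}+\alpha^{AT}+\alpha^{SeT}+2s\geq 0$ on $[0,1]$. Each of the three probabilities then becomes a ratio of low-degree polynomials in $s$. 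For each one I will check monotonicity through the sign of the numerator of its derivative, using the quotient rule; the denominator $(\alpha^{tot})^2$ is positive.

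\textbf{$b^{Mal}$.} The numerator $\alpha^{SeR}\alpha^{AT}$ does not depend on $s$, while $\alpha^{tot}$ is nondecreasing. Hence $b^{Mal}$ is nonincreasing in $s$.

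\textbf{$b^{Nor}$.} Here $b^{Nor}=\alpha^{SeR}s/\alpha^{tot}$. The numerator of its derivative is $\alpha^{SeR}$ times
\begin{equation*}
\alpha^{tot}-s(\alpha^{tot})'=\alpha^{SeR}\alpha^{AT}-s^2 .
\end{equation*}
This is nonnegative for $s\le\sqrt{\alpha^{SeR}\alpha^{AT}}$ and nonpositive for $s\ge\sqrt{\alpha^{SeR}\alpha^{AT}}$. That gives the claimed rise-then-fall behavior, with the peak at $\sqrt{\alpha^{SeR}\alpha^{AT}}$.

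\textbf{$b^{Idl}$.} Differentiating $b^{Idl}$ directly is messier, so I will use the identity $\alpha^{tot}-s(\alpha^{AT}+s+\alpha^{SeT})=\alpha^{SeR}(\alpha^{AT}+s)$. It gives
\begin{equation*}
b^{Idl}=1-\frac{\alpha^{SeR}(\alpha^{AT}+s)}{\alpha^{tot}},
\end{equation*}
which is also consistent with $b^{Nor}+b^{Mal}$. It then suffices to show that $(\alpha^{AT}+s)/\alpha^{tot}$ is nonincreasing. The numerator of the derivative of this ratio is $\alpha^{tot}-(\alpha^{AT}+s)(\alpha^{tot})'$, which expands to $-\alpha^{AT}(\alpha^{AT}+\alpha^{SeT})-2\alpha^{AT}s-s^2\le 0$. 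Hence $b^{Idl}$ is nondecreasing.

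The only real obstacle is the expansion in the $b^{Idl}$ case, where terms must cancel correctly. Rewriting $b^{Idl}$ as $1-(b^{Nor}+b^{Mal})$ keeps this cancellation short and transparent. The other two cases are one-line sign checks.

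I will also note the degenerate cases: if $\alpha^{SeR}=0$ or $\alpha^{AT}=0$, then $b^{Mal}$ or $b^{Nor}$ is identically zero, and the monotonicity statements hold in the weak sense. The proof will state this so that "increase/decrease" is read as non-strict at such boundaries.
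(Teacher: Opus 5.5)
Your proof is correct. For $b^{Mal}$ and $b^{Nor}$ it is essentially the paper's argument; for $b^{Idl}$ you take a different route, so a short comparison is worthwhile.

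The paper never applies the quotient rule. Instead it takes reciprocals and splits them into pieces that are visibly monotone:
\[
\frac{1}{b^{Idl}}=1+\frac{\alpha^{SeR}}{\alpha^{AT}+\alpha^{SyR}+\alpha^{SeT}}+\frac{\alpha^{SeR}\alpha^{AT}}{\alpha^{SyR}(\alpha^{AT}+\alpha^{SyR}+\alpha^{SeT})},
\]
\[
b^{Nor}=\frac{\alpha^{SeR}}{\alpha^{SeR}+\alpha^{AT}+\alpha^{SeT}+\alpha^{SyR}+\alpha^{SeR}\alpha^{AT}/\alpha^{SyR}}.
\]
Each summand of $1/b^{Idl}$ is a nonnegative constant divided by a nondecreasing positive function. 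Hence $b^{Idl}$ is nondecreasing with no expansion at all, which is shorter than your complement identity followed by the cancellation to $-\alpha^{AT}(\alpha^{AT}+\alpha^{SeT})-2\alpha^{AT}s-s^2$.

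For $b^{Nor}$ the paper reduces to $s+\alpha^{SeR}\alpha^{AT}/s$ with $s=\alpha^{SyR}$. Its derivative has the sign of $s^2-\alpha^{SeR}\alpha^{AT}$, which is exactly the negative of your quantity $\alpha^{SeR}\alpha^{AT}-s^2$, so that step is the same computation.

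Your polynomial-ratio version buys three things:
\begin{enumerate}
\item It never divides by $\alpha^{SyR}$, so it covers the endpoint $\alpha^{SyR}=0$ directly whenever $\alpha^{SeR}\alpha^{AT}>0$.
\item It treats $b^{Mal}$ explicitly, which the paper leaves implicit.
\item Working with the explicit derivative sign keeps the direction straight. The paper's text says $s+\alpha^{SeR}\alpha^{AT}/s$ ``increases and decreases'' on the two intervals, which is reversed for that function, although its conclusion for $b^{Nor}$ is right.
\end{enumerate}

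One precision on your degenerate-case remark. The case $\alpha^{SeR}=0$ makes both $b^{Nor}$ and $b^{Mal}$ vanish, while $\alpha^{AT}=0$ makes only $b^{Mal}$ vanish. More importantly, in either case $\alpha^{tot}=\alpha^{SeR}\alpha^{AT}=0$ at $\alpha^{SyR}=0$. All three closed forms are then $0/0$ at that point, because the chain has more than one stationary distribution. In that case the lemma should be read on $(0,1]$.
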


The transition of a device state is on a much larger time scale than the flow generation and sampling processes in Section \ref{sec:flow sampling detection}. Thus, we consider the case where each device has reached its steady state and remained unchanged for the $m\in \mathcal{M}$ sampling stages. 

\subsection{Attacker's Best-Response Strategy}
\label{sec:Attacker's Best-Response Strategy}

Each device could be in the three states of $X^{Idl},X^{Nor},X^{Mal}$ with probabilities $b^{Idl}, b^{Nor}, b^{Mal}$. 
If a device $n\in\mathcal{N}$ is in states $X^{Idl}$ and $X^{Nor}$, then no flows and a normal flow with an average rate $\mu_n^{Nor}$ are generated, respectively.  
If a device $n\in\mathcal{N}$ is in the compromised state $X^{Mal}$, then the attacker can determine the MPGR $\mu_n^{Mal}$ to maximize {the} utility function for any given $m\in \mathcal{M}$. 
Lemma \ref{lemma:Attacker's optimal malicious flow} presents the closed-form representation of the attacker's best-response strategy $\hat{\mu}_n^{Mal}(m)$ defined in \eqref{eq:definition of attack's BR} and illustrates that a larger SF can help to reduce the MPGR. 
{We present the proof of Lemma \ref{lemma:Attacker's optimal malicious flow} in Appendix \ref{app:proof of Lemma 3,4,5}.} 
\begin{lemma}
\label{lemma:Attacker's optimal malicious flow}
The optimal MPGR of device $n\in \mathcal{N}$ under compromised state $X^{Mal}$ for a SF $m\in\mathcal{M}$ and legitimate flow rate $\mu^{Nor}_n$ has the following closed form representation, i.e., 
    \begin{equation}
    \label{eq:Attacker's optimal malicious flow}
    \hat{\mu}^{Mal}_n(m)= \mu^{Nor}_n / (m-1). 
\end{equation}
\end{lemma}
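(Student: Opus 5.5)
Substituting \eqref{eq:sample probability} and \eqref{eq:detection probability} into \eqref{eq:Attacker Utility} gives $1-P^{Det}_n=[\mu^{Nor}_n/(\mu^{Nor}_n+\mu^{Mal}_n)]^m$. Writing $x:=\mu^{Mal}_n\geq 0$ and $c:=\mu^{Nor}_n>0$, the attacker's problem \eqref{eq:definition of attack's BR} then reduces to maximizing the one-variable function
\begin{equation*}
f(x)=c^m\, x\,(c+x)^{-m}, \quad x\in\mathbb{R}^{0+}.
\end{equation*}
I will show that $f$ has a unique interior maximizer by a direct first-order analysis. This is a univariate smooth problem, so no earlier result is needed beyond the closed forms just substituted.

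The first step is to differentiate:
\begin{equation*}
f'(x)=c^m (c+x)^{-m-1}\left[(c+x)-mx\right]=c^m(c+x)^{-m-1}\left[c-(m-1)x\right].
\end{equation*}
The prefactor $c^m(c+x)^{-m-1}$ is strictly positive on $\mathbb{R}^{0+}$, so the sign of $f'$ is the sign of $c-(m-1)x$. Since $m\geq m_1\geq 2$, we have $m-1\geq 1>0$. Hence $f'>0$ on $[0,c/(m-1))$, $f'=0$ at $x=c/(m-1)$, and $f'<0$ on $(c/(m-1),\infty)$. Thus $f$ is unimodal, and its unique global maximizer on $\mathbb{R}^{0+}$ is $x^*=\mu^{Nor}_n/(m-1)$, which is \eqref{eq:Attacker's optimal malicious flow}.

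For completeness I will also check the boundary and tail behavior, which confirms existence of the arg max in \eqref{eq:definition of attack's BR}. At the boundary, $f(0)=0$. In the tail, $f(x)\sim c^m x^{1-m}\to 0$ as $x\to\infty$ because $m\geq 2$. The interior value $f(x^*)=c\,(m-1)^{m-1}/m^m>0$ therefore dominates both. This existence is exactly the condition required by Lemma \ref{lemma:existence}.

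The only delicate point is the role of the assumption $m\geq 2$. If $m=1$, then $f(x)=cx/(c+x)$ is strictly increasing and the supremum is not attained, so no best response exists. The lemma is therefore valid precisely because the paper imposes $m_1\geq 2$, and I will state this explicitly. I will also note the secondary claim that $\hat{\mu}^{Mal}_n(m)$ decreases in $m$, which follows immediately from the closed form.
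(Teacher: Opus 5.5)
Your proposal is correct and follows essentially the same route as the paper, which sets the first derivative of $U_n$ to zero and checks that the second derivative is negative. Your sign analysis of $f'(x)=c^m(c+x)^{-m-1}[c-(m-1)x]$ is slightly stronger: it shows the maximum is global on $\mathbb{R}^{0+}$, not just local, which matters because $f$ is not concave for $x>2c/(m-1)$.
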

 

\subsection{Expected Flow Rate in the Satellite Domain}
\label{sec:Expected Flow Rate in the Satellite Domain}

Based on the attacker's optimal MPGR in \eqref{eq:Attacker's optimal malicious flow}, we can represent the detection probability in \eqref{eq:detection probability} as a function of $m\in\mathcal{M}$, as shown in \eqref{eq:pHat_Det}. 
\begin{equation}
\label{eq:pHat_Det}
   \hat{P}^{Det}(m):=P^{Det}_n(\hat{\mu}_n^{Mal}(m),m)= 1-(1-\frac{1}{m})^{m}, \forall n\in\mathcal{N}. 
\end{equation}

Define shorthand notations $h^{Mal}:=\sum_{n=1}^N b^{Mal}\mu_n^{Nor}$, $h^{Nor}:=\sum_{n=1}^N b^{Nor}\mu_n^{Nor}$, and {$\hat{P}^{Cap}(m,a):=P^{Cap}(\hat{\mu}_{1:N}^{Mal}(m),m,a)$}. 
Given the attacker's optimal MPGR in \eqref{eq:Attacker's optimal malicious flow}, we can compute the expected rate of normal and malicious flows that enter the satellite network in \eqref{eq:w_mal_compute} and \eqref{eq:w_Nor_compute}, respectively. 
Their monotonicities are shown in Lemma \ref{lemma:monotonicity} {with proof in Appendix \ref{app:proof of Lemma 3,4,5}.}   
\begin{equation}
    \begin{split}
    \label{eq:w_mal_compute}
 &  \hat{w}^{Mal}(m):=w^{Mal}(\hat{\mu}_{1:N}^{Mal}(m),m) 
 \\
& =\sum_{n=1}^N  b^{Mal} (1-\hat{P}_n^{Det}(m))\hat{\mu}_{n}^{Mal}(m) 
    = \frac{(m-1)^{m-1}}{m^m} h^{Mal}. 
    \end{split}
\end{equation}
\begin{equation}
    \begin{split}
    \label{eq:w_Nor_compute}
 & \hat{w}^{Nor}(m):= w^{Nor}(\hat{\mu}_{1:N}^{Mal}(m),m) 
  = \sum_{n=1}^N b^{Nor} \mu_{n}^{Nor} \\
&  + \sum_{n=1}^N  b^{Mal}  (1-\hat{P}_n^{Det}(m))\mu_{n}^{Nor}  = 
(1-\frac{1}{m})^m h^{Mal}+h^{Nor}. 
    \end{split}
\end{equation}

\begin{lemma}
\label{lemma:monotonicity}
    The functions $\hat{w}^{Mal}$, $\hat{P}^{Cap}$ and $\hat{w}^{Nor}$ decrease, decrease, and increase concerning SF $m\in \mathcal{M}$, respectively. 
\end{lemma}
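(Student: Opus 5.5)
The plan is to treat the three functions separately. Both flow rates are explicit in \eqref{eq:w_mal_compute} and \eqref{eq:w_Nor_compute}, so I would handle them by extending $m$ to a real variable $x\geq 2$ and differentiating logarithms. The monotonicity of $\hat{P}^{Cap}$ is different: $P^{Cap}$ was left abstract, so it needs a modelling hypothesis. Since $\mathcal{M}$ is a subset of $\{2,3,\dots\}$, monotonicity of the real extensions on $[2,\infty)$ gives the claim on $\mathcal{M}$.

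\textbf{The normal flow rate.} Set $f(x):=(1-1/x)^x$ for $x\geq 2$. By \eqref{eq:w_Nor_compute}, $\hat{w}^{Nor}(m)=f(m)h^{Mal}+h^{Nor}$. Because $h^{Mal}\geq 0$, it suffices to show that $f$ increases. We have
\begin{equation*}
\frac{d}{dx}\ln f(x)=\ln\Big(1-\frac{1}{x}\Big)+\frac{1}{x-1}.
\end{equation*}
With $t=1/x\in(0,1/2]$, this equals $\ln(1-t)+t/(1-t)$. This quantity is positive by the elementary inequality $-\ln(1-t)<t/(1-t)$ for $t\in(0,1)$. That inequality follows by comparing derivatives of the two sides: $1/(1-t)<1/(1-t)^2$, and both sides vanish at $t=0$. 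Hence $\hat{w}^{Nor}$ increases in $m$. Equivalently, $\hat{P}^{Det}(m)=1-f(m)$ decreases, which is consistent with the attacker's adaptive MPGR in Lemma \ref{lemma:Attacker's optimal malicious flow}.

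\textbf{The malicious flow rate.} I would rewrite $\frac{(m-1)^{m-1}}{m^m}=\frac{f(m)}{m-1}$. For $g(x):=f(x)/(x-1)$ we get
\begin{equation*}
\frac{d}{dx}\ln g(x)=\ln\Big(1-\frac{1}{x}\Big)+\frac{1}{x-1}-\frac{1}{x-1}=\ln\Big(1-\frac{1}{x}\Big)<0.
\end{equation*}
So $g$ decreases on $[2,\infty)$. Since $h^{Mal}\geq 0$, $\hat{w}^{Mal}(m)=g(m)h^{Mal}$ decreases as well.

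\textbf{The alarm probability (main obstacle).} $P^{Cap}$ is abstract, so the claim for $\hat{P}^{Cap}$ cannot be derived from the model as stated. The proof must make explicit the natural property of a coarse-granularity defense that the lemma implicitly uses. That property is: for each configuration $a\in\mathcal{A}$, the alarm triggering probability depends on $(\mu_{1:N}^{Mal},m)$ only through the malicious load reaching the satellite domain, and it is nondecreasing in that load. A concrete choice is $P^{Cap}(\mu_{1:N}^{Mal},m,a)=\phi_a\big(w^{Mal}(\mu_{1:N}^{Mal},m)\big)$ with $\phi_a$ nondecreasing; an alternative is nondecreasing dependence on each surviving individual malicious rate $(1-P^{Det}_n)\mu^{Mal}_n$. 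This is the step where I am least certain the authors' intended argument coincides with mine.

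Under that property, the result follows by composition. We have $\hat{P}^{Cap}(m,a)=\phi_a(\hat{w}^{Mal}(m))$, and $\hat{w}^{Mal}$ was just shown to decrease. The per-device surviving rate $(1-\hat{P}^{Det}(m))\hat{\mu}^{Mal}_n(m)=g(m)\mu_n^{Nor}$ also decreases in $m$. So under either version of the hypothesis, $\hat{P}^{Cap}(\cdot,a)$ decreases in $m$ for every $a\in\mathcal{A}$, which completes the lemma.
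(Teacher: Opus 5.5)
Your proof is correct and follows the paper's route. The paper also differentiates $(m-1)^{m-1}/m^m$, whose derivative has the sign of $\ln\frac{m-1}{m}<0$, matching your $\frac{d}{dx}\ln g$. It gets $\hat{P}^{Cap}$ by treating it as an increasing function of $\hat{w}^{Mal}$, which is the same hypothesis you make explicit through $\phi_a$, and it then asserts that $(1-1/m)^m$ increases. Your log-derivative argument using $-\ln(1-t)<t/(1-t)$ is a genuine tightening at that last step. The paper's one-line justification (``$-1/m$ increases'') does not by itself account for the growing exponent on a base below one.
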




\subsection{Impact of {the IoT's SF on the SAT-SP's Configuration}}
\label{sec:Impact of SF on Configuration}

Define the expected capture gain as 
\begin{equation}
    \begin{split}
    \label{eq:gm}
      &  g^{SP}(m):=r^{Cap}\hat{w}^{Mal}(m)-r^{SP}\hat{w}^{Nor}(m) \\
      &  = (r^{Cap}+r^{SP}-r^{SP}m)h^{Mal}(m-1)^{m-1}/m^m-r^{SP}h^{Nor}. 
    \end{split}
\end{equation}
Then we can represent the SAT-SP's utility function in \eqref{eq:SAT-SP Utility}  as 
$U^{SP}(a,m,\hat{\mu}_{1:N}^{Mal}(m))= g^{SP}(m)\hat{P}^{Cap}(m,a)+c^{OC}(a)+r^{SP}\hat{w}^{Nor}(m)$. 
Following Lemma \ref{lemma:monotonicity}, since both $\hat{w}^{Mal}(m)$ and $-\hat{w}^{Nor}(m)$ decrease with $m\in\mathcal{M}$, $g^{SP}(m)$ also decreases with $m\in\mathcal{M}$. 
We divide the SF set $\mathcal{M}$ into two disjoint subsets, denoted as the active region 
\begin{equation*}
    \mathcal{M}^{\leq}:=\{m\in \mathcal{M}| r^{Cap}\leq r^{SP}(m+ \frac{m^m h^{Nor}}{(m-1)^{m-1}h^{Mal}})\} 
\end{equation*}
and the passive region 
\begin{equation*}
    \mathcal{M}^{>}:=\{m\in \mathcal{M}|r^{Cap}>r^{SP}(m+ \frac{m^m h^{Nor}}{(m-1)^{m-1}h^{Mal}})\},  
\end{equation*}
where sets $\mathcal{M}^{\leq}$ and  $\mathcal{M}^{>}$ can be empty. 
Intuitively, from the SAT-SP's perspective, the IoT-NO has made adequate (resp. inadequate) security efforts for individual flow sampling and detection, {when SF $m\in \mathcal{M}^{\leq}$ (resp. $m\in \mathcal{M}^{>}$) is selected.}  

In Lemma \ref{lemma:Threshold Set-Division}, we characterize the above set division by a threshold $m^{TH}\in \mathcal{M}$ defined in Definition \ref{def:Set-Division Threshold}. 
As we will illustrate in Proposition \ref{proposition:two regions}, the threshold $m^{TH}\in \mathcal{M}$ measures the adequacy of the SF and exerts a threshold impact on the SAT-SP's best-response configuration. 

\begin{definition} [\textbf{Set-Division Threshold}]
\label{def:Set-Division Threshold}
Define $m^{TH}\in \mathcal{M}$ as the threshold to divide $\mathcal{M}$ into the active and passive regions. 
\begin{itemize}
    \item[(a).] If $m_M\in \mathcal{M}^{>}$, then $m^{TH}:=m_M$.
    \item[(b).] For $m_1<m<m_M$, if $m_j\in \mathcal{M}^{>}$ and $m_{j+1} \in \mathcal{M}^{\leq}$, then $m^{TH}:=m_j$.
    \item[(c).] If $m_1\in \mathcal{M}^{\leq}$, then $m^{TH}:=m_1$. 
\end{itemize}
\end{definition}   

\begin{lemma}[\textbf{Threshold Division}]
\label{lemma:Threshold Set-Division}
    The active and passive regions can be computed as $\mathcal{M}^{\leq}=\{m\in\mathcal{M}|m > m^{TH}\}$ and $\mathcal{M}^{>}=\{m\in\mathcal{M}|m \leq m^{TH}\}$, respectively. 
\end{lemma}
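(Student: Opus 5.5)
The approach is to show that membership in $\mathcal{M}^{\leq}$ is monotone in $m$. Concretely, if $m\in\mathcal{M}^{\leq}$ and $m'>m$, then $m'\in\mathcal{M}^{\leq}$. Once this is established, the threshold characterization follows by reading off the three cases of Definition \ref{def:Set-Division Threshold}.

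\textbf{Step 1: rewrite the defining inequality through $g^{SP}$.} Since $h^{Mal}>0$ and $(m-1)^{m-1}/m^m>0$, the condition $r^{Cap}\leq r^{SP}\big(m+\frac{m^m h^{Nor}}{(m-1)^{m-1}h^{Mal}}\big)$ is equivalent, after multiplying by $h^{Mal}(m-1)^{m-1}/m^m$ and rearranging, to
\[
(r^{Cap}+r^{SP}-r^{SP}m)\,h^{Mal}\frac{(m-1)^{m-1}}{m^m}-r^{SP}h^{Nor}\leq -r^{SP}h^{Mal}\frac{(m-1)^{m-1}}{m^m}.
\]
Equivalently, the left-hand side is $g^{SP}(m)$ in \eqref{eq:gm}. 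It is cleaner, however, to work directly with the function
\[
f(m):=m+\frac{m^m}{(m-1)^{m-1}}\cdot\frac{h^{Nor}}{h^{Mal}},
\]
because $m\in\mathcal{M}^{\leq}$ if and only if $r^{Cap}\leq r^{SP}f(m)$.

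\textbf{Step 2: show that $f$ is strictly increasing in $m\geq 2$.} The first term $m$ is increasing. For the second term, note that $\frac{m^m}{(m-1)^{m-1}}=m\,(1+\frac{1}{m-1})^{m-1}$. The factor $(1+\frac{1}{m-1})^{m-1}$ is the classical increasing sequence converging to $e$, and $m$ is increasing, so the product is increasing. Alternatively, this follows from Lemma \ref{lemma:monotonicity}: $\hat{w}^{Mal}(m)=\frac{(m-1)^{m-1}}{m^m}h^{Mal}$ is decreasing, so its reciprocal is increasing. Since $h^{Nor}/h^{Mal}\geq 0$, $f$ is increasing. I expect this to be the only substantive point, and invoking Lemma \ref{lemma:monotonicity} settles it.

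\textbf{Step 3: conclude the threshold structure.} Because $f$ is increasing and $r^{SP}>0$, the set $\{m: r^{Cap}\leq r^{SP}f(m)\}$ is an upper set of $\mathcal{M}$, and its complement $\mathcal{M}^{>}$ is a lower set. Hence at most one index $j$ satisfies $m_j\in\mathcal{M}^{>}$ and $m_{j+1}\in\mathcal{M}^{\leq}$, so $m^{TH}$ is well defined, and $\mathcal{M}^{>}=\{m\leq m^{TH}\}$, $\mathcal{M}^{\leq}=\{m>m^{TH}\}$. Case (a) of Definition \ref{def:Set-Division Threshold} gives $\mathcal{M}^{>}=\mathcal{M}$, and case (b) is immediate. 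Case (c), where $m_1\in\mathcal{M}^{\leq}$, is the boundary case in which $m^{TH}=m_1$ itself lies in $\mathcal{M}^{\leq}$. There I would note that the stated formula must be read with this convention (for example $\mathcal{M}^{>}=\emptyset$), and handle it separately rather than through the generic inequality.
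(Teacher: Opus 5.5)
Your proof is correct. It differs from the paper's in which monotone function carries the argument.

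\textbf{The paper's route.} The paper argues that $g^{SP}$ in \eqref{eq:gm} is decreasing, combining $\hat{w}^{Mal}$ decreasing and $\hat{w}^{Nor}$ increasing from Lemma \ref{lemma:monotonicity}. It then reads $m\in\mathcal{M}^{\leq}$ as $g^{SP}(m)\leq 0$, which makes membership monotone in $m$.

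\textbf{Your route.} You show directly that the defining quantity $f(m)=m+h^{Nor}/\hat{w}^{Mal}(m)$ is increasing. This needs only that $\hat{w}^{Mal}$ decreases, or the classical monotonicity of $(1+\frac{1}{k})^{k}$.

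\textbf{Why yours is the more faithful route.} Redoing your Step 1 algebra gives
\[
m\in\mathcal{M}^{\leq}\iff g^{SP}(m)\leq r^{SP}\hat{w}^{Mal}(m).
\]
The right-hand side carries a plus sign; your displayed version has the sign flipped. This is harmless, since you never use that display. By contrast, $g^{SP}(m)\leq 0$ is equivalent to
\[
r^{Cap}\leq r^{SP}\Big(m-1+\frac{m^m h^{Nor}}{(m-1)^{m-1}h^{Mal}}\Big).
\]
So the paper's identification of $\mathcal{M}^{\leq}$ with $\{g^{SP}\leq 0\}$ is off by one. Its argument really establishes the threshold structure for that neighboring set. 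Yours establishes it for $\mathcal{M}^{\leq}$ exactly as defined. Both sets have the threshold structure, since $f$ and $f-1$ are both increasing.

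\textbf{Case (c).} Your remark on case (c) of Definition \ref{def:Set-Division Threshold} is also right, and the paper's proof does not address it. In that case every $m$ lies in $\mathcal{M}^{\leq}$, yet the stated formula would place $m_1$ in $\mathcal{M}^{>}$. The lemma therefore holds only under a convention such as the paper's later $m^{TH}\in\mathcal{M}\cup\emptyset$, with $\{m\in\mathcal{M}\mid m\leq m^{TH}\}=\emptyset$ in that case.
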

\begin{proof}
    Since the function $g^{SP}(m)$ in \eqref{eq:gm} decreases with $m\in \mathcal{M}$, it has at most one zero point. 
    Thus, for any $m,m'\in \mathcal{M}$, if $m'\in \mathcal{M}^{\leq}$ (i.e., $g^{SP}(m')\leq 0$), then $m > m'$ belongs to $\mathcal{M}^{\leq}$ (i.e., $g^{SP}(m)\leq g^{SP}(m')\leq 0$); if $m'\in \mathcal{M}^{>}$ (i.e., $g^{SP}(m')> 0$), then $m < m'$ belongs to $\mathcal{M}^{>}$ (i.e., $g^{SP}(m)> g^{SP}(m') > 0$).  
\end{proof}


{In the subsequent parts of this section, we specify $P^{Cap}(\mu_{1:N}^{Mal},m,a) = \rho(a) \bar{P}^{Cap}(\mu_{1:N}^{Mal},m)$, where $\bar{P}^{Cap}(\mu_{1:N}^{Mal},m)$ represents the baseline alarm triggering probability and $\rho:\mathcal{A} \mapsto [0,1]$ captures the efficiency of different configurations.}   
{We let the order of $a_1, \cdots, a_K$ reflects an increase in security effect or engagement level, where higher engagement levels are more effective but also more costly, i.e.,}   
$0\leq \rho(a_1)\leq \rho(a_2)\leq \cdots\leq \rho(a_K)\leq 1$ and $-c^{OC}(a_1)\leq -c^{OC}(a_2) \leq \cdots\leq -c^{OC}(a_K)$. 
Define $\hat{a}(m)\in \mathcal{A}$ as the best-response configuration of the SAT-SP concerning $m\in \mathcal{M}$, i.e., $\hat{a}(m)\in \arg\max_{a\in \mathcal{A}}U^{SP}(a,m,\hat{\mu}_{1:N}^{Mal}(m))$. 

\begin{proposition}[\textbf{Two Regions of Best-Response Configurations}]
\label{proposition:two regions}
    The best-response configurations under the active and passive regions are demonstrated in the following statements (a) and (b), respectively.  
    \begin{itemize}
        \item[(a).]  The least engagement level $a_1\in \mathcal{A}$ is a best-response configuration for the SAT-SP if $m\in \mathcal{M}^{\leq}$. 
        \item[(b).] For all $m\in \mathcal{M}^{>}$, then the SAT-SP chooses the configuration with the optimal tradeoff of operation cost and the capture gain, i.e., $\hat{a}^*(m)\in \arg\max_{a\in \mathcal{A}} \rho(a)g^{SP}(m)\hat{P}^{Cap}(m)+c^{OC}(a), \forall m\in \mathcal{M}^{>}$. 
    \end{itemize}
\end{proposition}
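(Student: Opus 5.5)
The plan is to reduce the SAT-SP's best-response problem, for a fixed $m$, to maximizing a separable function of $a$. Substituting the specification $P^{Cap}=\rho(a)\bar{P}^{Cap}$ into the decomposition after \eqref{eq:gm}, and writing $\hat{P}^{Cap}(m):=\bar{P}^{Cap}(\hat{\mu}_{1:N}^{Mal}(m),m)\ge 0$, I will rewrite
\begin{equation*}
U^{SP}(a,m,\hat{\mu}_{1:N}^{Mal}(m)) = \rho(a)\, g^{SP}(m)\hat{P}^{Cap}(m) + c^{OC}(a) + r^{SP}\hat{w}^{Nor}(m).
\end{equation*}
The last term does not depend on $a$. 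Hence, for each fixed $m$,
\begin{equation*}
\hat{a}(m)\in\arg\max_{a\in\mathcal{A}}\ \rho(a) g^{SP}(m)\hat{P}^{Cap}(m)+c^{OC}(a).
\end{equation*}

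Statement (b) then follows immediately: this identity holds for every $m$, in particular for every $m\in\mathcal{M}^{>}$. The only thing to record is that the maximum is attained, because $\mathcal{A}$ is finite. I will also note that on $\mathcal{M}^{>}$ we have $g^{SP}(m)>0$. This is why a genuine tradeoff arises there: a higher $\rho(a)$ raises the capture term but also raises the cost $-c^{OC}(a)$, so no monotone simplification is available and the argmax must be stated as is.

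For statement (a), the first step is to check that $m\in\mathcal{M}^{\leq}$ is equivalent to $g^{SP}(m)\le 0$. This is the same equivalence already used in the proof of Lemma \ref{lemma:Threshold Set-Division}. To verify it, multiply the defining inequality of $\mathcal{M}^{\leq}$ by the positive quantity $h^{Mal}(m-1)^{m-1}/m^m$. It becomes $(r^{Cap}+r^{SP}-r^{SP}m)h^{Mal}(m-1)^{m-1}/m^m\le r^{SP}h^{Nor}$, which by \eqref{eq:gm} is exactly $g^{SP}(m)\le 0$. I will need $h^{Mal}>0$ for this step; if $h^{Mal}=0$, I treat it as a degenerate case.

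Next, since $\hat{P}^{Cap}(m)\ge0$ and $g^{SP}(m)\le 0$, the map $a\mapsto \rho(a)g^{SP}(m)\hat{P}^{Cap}(m)$ is nonincreasing along the order $a_1,\dots,a_K$, because $\rho$ is nondecreasing in that order. Likewise, $c^{OC}(a)$ is nonincreasing in that order, since $-c^{OC}$ is nondecreasing. Their sum is therefore maximized at $a_1$, which proves that $a_1$ is a best response. I say "a" best response rather than "the" best response because ties are possible.

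The main (mild) obstacle is bookkeeping rather than depth. I must make sure that the factor $\hat{P}^{Cap}(m)$ in the statement is interpreted as the baseline probability $\bar{P}^{Cap}$, and not as $\hat{P}^{Cap}(m,a)$, so that the $a$-dependence sits entirely in $\rho$. I must also confirm the sign conventions: $c^{OC}\le 0$, and the ordering of $-c^{OC}$ is increasing in $a$. Once these are fixed, both parts reduce to the elementary monotonicity argument above.
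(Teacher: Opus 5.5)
Your route is the paper's: write $U^{SP}(a,m,\hat{\mu}^{Mal}_{1:N}(m))=\rho(a)g^{SP}(m)\hat{P}^{Cap}(m)+c^{OC}(a)+r^{SP}\hat{w}^{Nor}(m)$. Part (b) then follows because the last term does not depend on $a$. Part (a) follows from the sign of $g^{SP}(m)$ together with the orderings of $\rho$ and $-c^{OC}$. Part (b) and the monotonicity step are fine. Your direction (nonincreasing along $a_1,\dots,a_K$) is the correct one; the paper's displayed $U^{SP}(a_i,\cdot)\le U^{SP}(a_j,\cdot)$ for $i\le j$ is reversed.

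\textbf{The failing step.} What fails is your check that $m\in\mathcal{M}^{\le}$ is equivalent to $g^{SP}(m)\le 0$. Put $K:=h^{Mal}(m-1)^{m-1}/m^m=\hat{w}^{Mal}(m)$. Multiplying $r^{Cap}\le r^{SP}\bigl(m+\frac{m^m h^{Nor}}{(m-1)^{m-1}h^{Mal}}\bigr)$ by $K$ gives $(r^{Cap}-r^{SP}m)K\le r^{SP}h^{Nor}$. It does not give $(r^{Cap}+r^{SP}-r^{SP}m)K\le r^{SP}h^{Nor}$: you silently added $r^{SP}K$ to the left-hand side. Since $(1-1/m)^m h^{Mal}=(m-1)K$, we have $g^{SP}(m)=(r^{Cap}-r^{SP}(m-1))K-r^{SP}h^{Nor}$. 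The correct equivalences are
\[
m\in\mathcal{M}^{\le}\iff g^{SP}(m)\le r^{SP}\hat{w}^{Mal}(m),
\]
\[
g^{SP}(m)\le 0\iff r^{Cap}\le r^{SP}\Bigl(m-1+\frac{m^m h^{Nor}}{(m-1)^{m-1}h^{Mal}}\Bigr).
\]

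\textbf{Why it matters.} This is not cosmetic. On the band $0<g^{SP}(m)\le r^{SP}\hat{w}^{Mal}(m)$, $m$ lies in $\mathcal{M}^{\le}$ as defined, yet the capture term $\rho(a)g^{SP}(m)\hat{P}^{Cap}(m)$ increases along the order, so (a) can fail.

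For example, take $m=2$, $h^{Nor}=h^{Mal}/4$, $r^{SP}=1$, $r^{Cap}=5/2$. The defining threshold equals $3$, so $2\in\mathcal{M}^{\le}$. But $g^{SP}(2)=(r^{Cap}-r^{SP})h^{Mal}/4-r^{SP}h^{Nor}=h^{Mal}/8>0$. With $c^{OC}(a_1)=c^{OC}(a_2)$, $\rho(a_1)<\rho(a_2)$ and $\hat{P}^{Cap}(2)>0$, configuration $a_2$ strictly beats $a_1$.

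\textbf{Where the error comes from, and the fix.} The paper's proof makes exactly the same unjustified assertion (``we obtain \dots\ $g^{SP}(m)\le 0$''), and so does its proof of the threshold-division lemma. The defect is therefore an off-by-one in the definition of the two regions, not in your strategy. Define $\mathcal{M}^{\le}$ with $m-1$ in place of $m$, or equivalently as $\{m\in\mathcal{M}: g^{SP}(m)\le 0\}$, which is how the paper actually uses it. Then your argument for (a) goes through verbatim. Part (b) needs no change, since on $\mathcal{M}^{>}$ as written $g^{SP}(m)>r^{SP}\hat{w}^{Mal}(m)\ge 0$ anyway. The same shift turns the constant $2$ in the subsequent security-efficiency-ratio corollary into $1$.
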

\begin{proof}
   To prove statement (a), for any given $m\in\mathcal{M}^{\leq}$, we obtain $r^{Cap}\leq r^{SP}(m+ \frac{m^mh^{Nor}}{(m-1)^{m-1}h^{Mal}})$, $g^{SP}(m)\leq 0$, and $U^{SP}(a_i,m,\hat{\mu}_{1:N}^{Mal}(m))\leq U^{SP}(a_j,m,\hat{\mu}_{1:N}^{Mal}(m)), \forall i\leq j, a_i,a_j\in\mathcal{A}$. 
   To prove statement (b),  if $r^{Cap}> r^{SP}(m+ \frac{m^mh^{Nor}}{(m-1)^{m-1}h^{Mal}})$, then $g^{SP}(m)$ is greater than $0$. 
\end{proof}

According to Lemma \ref{lemma:Threshold Set-Division} and Proposition \ref{proposition:two regions}, the SF set $\mathcal{M}$ can be divided into the active region $\mathcal{M}^{\leq}$ and the passive region  $\mathcal{M}^{>}$ with a threshold $m^{TH}\in \mathcal{M} \cup \emptyset$. 
The SF in the active region $\mathcal{M}^{\leq}$ samples each individual flow with adequate frequency (i.e., $m>m^{TH}$) such that the expected malicious flow is sufficiently low in the aggregated flow. 
In that case, the expected gain of flow transmission fee outweighs the expected reward to capture malicious flows, i.e.,  $g^{SP}(m)=r^{Cap}\hat{w}^{Mal}(m)-r^{SP}\hat{w}^{Nor}(m)\leq 0$, and the SAT-SP should apply the least engagement level due to the low risk. 

In contrast, the SF in the passive region $\mathcal{M}^{>}$ {is inadequate} (i.e., $m<m^{TH}$), resulting in a high amount of malicious data in the aggregated flow. 
To address the high cyber risk in the satellite network, the SAT-SP needs to increase the engagement level under the tradeoff of operation cost and capture gain. 
The statements (b) and (a) of Proposition \ref{proposition:two regions} lead to Corollaries \ref{corollary:1 and K} and \ref{corollary:Fundamental Limit of Security-Efficiency Ratio}, respectively.  
{Their proofs are presented in Appendix \ref{app:2corollary}.}  

\begin{corollary}
\label{corollary:1 and K}
    If $\tilde{m}\in \mathcal{M}^{>}$ and $\rho(\hat{a}^*(\tilde{m}))\geq \rho(a_i), \forall a_i\in \mathcal{A}$, then $\hat{a}^*(\tilde{m})\in \arg\max_{a\in \mathcal{A}}U^{SP}(a,m,\hat{\mu}_{1:N}^{Mal}(m)), m\in [m_1,\tilde{m}]$.  
\end{corollary}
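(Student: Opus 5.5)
The plan is a single-crossing (increasing-differences) argument on the SAT-SP's utility, viewed as a function of the scalar coefficient that multiplies $\rho(a)$. Using the representation after \eqref{eq:gm} together with the specification $P^{Cap}=\rho(a)\bar{P}^{Cap}$, I would write, for every $m\in\mathcal{M}$ and $a\in\mathcal{A}$,
\begin{equation*}
U^{SP}(a,m,\hat{\mu}_{1:N}^{Mal}(m))=\kappa(m)\rho(a)+c^{OC}(a)+r^{SP}\hat{w}^{Nor}(m),
\end{equation*}
where $\kappa(m):=g^{SP}(m)\hat{P}^{Cap}(m)$ and $\hat{P}^{Cap}(m):=\bar{P}^{Cap}(\hat{\mu}_{1:N}^{Mal}(m),m)$. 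The last term does not depend on $a$. Hence the best-response problem at $m$ is $\max_{a}\kappa(m)\rho(a)+c^{OC}(a)$, exactly as in statement (b) of Proposition \ref{proposition:two regions}.

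The first step is to show that $\kappa$ is nonnegative and nonincreasing on $\{m\in\mathcal{M}: m\le\tilde{m}\}$.
\begin{itemize}
\item Since $\tilde{m}\in\mathcal{M}^{>}$, Lemma \ref{lemma:Threshold Set-Division} gives $\tilde{m}\le m^{TH}$. Therefore every $m\in\mathcal{M}$ with $m_1\le m\le\tilde{m}$ also lies in $\mathcal{M}^{>}$, so $g^{SP}(m)>0$.
\item Because $g^{SP}$ decreases in $m$, we get $g^{SP}(m)\ge g^{SP}(\tilde{m})>0$.
\item Lemma \ref{lemma:monotonicity} says that $\hat{P}^{Cap}$ decreases in $m$. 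Under the factorization this means the baseline $\hat{P}^{Cap}(m)\ge\hat{P}^{Cap}(\tilde{m})\ge 0$. If $\rho\equiv0$, the claim is trivial, since then every configuration yields the same capture term.
\end{itemize}
Multiplying the two inequalities between nonnegative quantities gives $\kappa(m)\ge\kappa(\tilde{m})\ge0$.

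The second step is the comparison. Write $a^*:=\hat{a}^*(\tilde{m})$. Optimality of $a^*$ at $\tilde{m}$ gives, for all $a\in\mathcal{A}$,
\begin{equation*}
\kappa(\tilde{m})\bigl(\rho(a^*)-\rho(a)\bigr)+c^{OC}(a^*)-c^{OC}(a)\ge0 .
\end{equation*}
For any $m\in[m_1,\tilde{m}]$, the utility difference between $a^*$ and $a$ equals the left-hand side above plus $\bigl(\kappa(m)-\kappa(\tilde{m})\bigr)\bigl(\rho(a^*)-\rho(a)\bigr)$. The first factor of this extra term is nonnegative by the first step. The second factor is nonnegative by the hypothesis $\rho(a^*)\ge\rho(a_i)$ for all $a_i$. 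Hence the whole difference is nonnegative, so $a^*$ maximizes $U^{SP}(\cdot,m,\hat{\mu}_{1:N}^{Mal}(m))$, which is the claim.

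The main obstacle is the monotonicity of the baseline capture probability used in the first step. Lemma \ref{lemma:monotonicity} is stated for $\hat{P}^{Cap}$, and I need it for the factor $\bar{P}^{Cap}$ that is independent of $a$. I would first read Lemma \ref{lemma:monotonicity} as applying to the baseline. If instead it is only available for $\rho(a)\bar{P}^{Cap}$ with a fixed $a$, I would pick any $a$ with $\rho(a)>0$ and divide by $\rho(a)$ to transfer the monotonicity to $\bar{P}^{Cap}$. The case where no such $a$ exists is exactly the degenerate case $\rho\equiv0$ already handled above.
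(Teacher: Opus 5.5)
Your proposal is correct and follows essentially the same route as the paper's proof. Both arguments start from the optimality of $\hat{a}^*(\tilde{m})$ at $\tilde{m}$, use Lemma~\ref{lemma:Threshold Set-Division} to place every $m\le\tilde{m}$ in $\mathcal{M}^{>}$, show that the product $g^{SP}(m)\hat{P}^{Cap}(m)$ of two nonnegative decreasing factors is no smaller at $m$ than at $\tilde{m}$, and then multiply by $\rho(\hat{a}^*(\tilde{m}))-\rho(a_i)\ge 0$. Your extra care about the baseline versus configuration-dependent $\hat{P}^{Cap}$, and your use of $\hat{P}^{Cap}(\tilde{m})\ge 0$ in place of the paper's strict positivity, are harmless tightenings rather than a different argument.
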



Corollary \ref{corollary:1 and K} illustrates that in region two, where the SAT-SP has already applied the highest engagement level (i.e., $\rho(\hat{a}^*(\tilde{m}))\geq \rho(a_i), \forall a_i\in \mathcal{A}$), if IoT-NO further reduces security effort, then the  SAT-SP's best response is still to apply the highest engagement level. 


\begin{corollary}[\textbf{Fundamental Limit of Security-Efficiency Ratio for Dominant Strategy}]
\label{corollary:Fundamental Limit of Security-Efficiency Ratio}
    If the Security-Efficiency Ratio (SER) $r^{Cap}/r^{SP}$ is less than $2+4\alpha^{SyR}/{\alpha^{AT}}$, then the  configuration with the least engagement level $a_1\in \mathcal{A}$ is a dominant strategy of the SAT-SP, i.e., $U^{SP}(a_1,m,\hat{\mu}_{1:N}^{Ma}(m))\geq U^{SP}(a,m,\hat{\mu}_{1:N}^{Ma}(m)), \forall a\in \mathcal{A}, m\in \mathcal{M}$.   
\end{corollary}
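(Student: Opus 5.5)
The plan is to show that the hypothesis $r^{Cap}/r^{SP}<2+4\alpha^{SyR}/\alpha^{AT}$ forces every SF into the active region, i.e., $\mathcal{M}^{>}=\emptyset$ and $\mathcal{M}^{\leq}=\mathcal{M}$. Statement (a) of Proposition \ref{proposition:two regions} then gives $a_1$ as a best response to every $m\in\mathcal{M}$, which is exactly the dominance claim.

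\emph{Step 1: eliminate the flow aggregates.} The membership test for $\mathcal{M}^{\leq}$ is
\[
r^{Cap}/r^{SP}\leq m+\frac{m^m}{(m-1)^{m-1}}\cdot\frac{h^{Nor}}{h^{Mal}} .
\]
By the definitions of $h^{Nor}$ and $h^{Mal}$ and the closed form \eqref{eq:closed form of b},
\[
\frac{h^{Nor}}{h^{Mal}}=\frac{b^{Nor}}{b^{Mal}}=\frac{\alpha^{SyR}}{\alpha^{AT}} .
\]
The common factor $\sum_n\mu_n^{Nor}$ and the common factors $\alpha^{SeR}$ and $\alpha^{tot}$ all cancel. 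I will assume $\alpha^{AT}>0$, so that $h^{Mal}>0$; otherwise $\hat{w}^{Mal}\equiv 0$, $g^{SP}\leq 0$ trivially, and the claim is immediate. The right-hand side therefore becomes $\phi(m):=m+\frac{m^m}{(m-1)^{m-1}}\,\alpha^{SyR}/\alpha^{AT}$.

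\emph{Step 2: minimize $\phi$ over $m\geq 2$.} Write
\[
\frac{m^m}{(m-1)^{m-1}}=m\Bigl(1+\frac{1}{m-1}\Bigr)^{m-1}.
\]
Both factors are positive and nondecreasing in $m$; the second is the classical increasing sequence converging to $e$. Since $\alpha^{SyR}/\alpha^{AT}\geq 0$, $\phi$ is nondecreasing on integers $m\geq 2$. Because $m_1\geq 2$, its minimum over $\mathcal{M}$ is at least
\[
\phi(2)=2+4\,\alpha^{SyR}/\alpha^{AT}.
\]
Hence the hypothesis gives $r^{Cap}/r^{SP}<\phi(2)\leq\phi(m)$ for all $m\in\mathcal{M}$. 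This is the main (though elementary) step; the only care needed is the monotonicity of $(1+1/k)^k$, which I would cite or verify via the AM--GM inequality.

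\emph{Step 3: conclude.} Step 2 gives $\mathcal{M}^{\leq}=\mathcal{M}$, equivalently $g^{SP}(m)\leq 0$ for every $m\in\mathcal{M}$. Using the decomposition
\[
U^{SP}(a,m,\hat{\mu}_{1:N}^{Mal}(m))=\rho(a)\,g^{SP}(m)\,\hat{\bar P}^{Cap}(m)+c^{OC}(a)+r^{SP}\hat{w}^{Nor}(m),
\]
the third term does not depend on $a$. The first term is nonincreasing along $a_1,\dots,a_K$, because $\rho$ is nondecreasing, $g^{SP}(m)\leq 0$, and the baseline probability is nonnegative. The second term is also nonincreasing, since $-c^{OC}$ is nondecreasing. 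Therefore
\[
U^{SP}(a_1,m,\cdot)\geq U^{SP}(a,m,\cdot)\quad\text{for all } a\in\mathcal{A} \text{ and all } m\in\mathcal{M},
\]
which is the claimed dominant-strategy property.
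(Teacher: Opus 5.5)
Your route is the same as the paper's: reduce to $g^{SP}(m)\le 0$ on all of $\mathcal{M}$ via monotonicity in $m$ and evaluation at $m=2$, then use the decomposition of $U^{SP}$. However, the step ``$\mathcal{M}^{\leq}=\mathcal{M}$, equivalently $g^{SP}(m)\leq 0$'' is false. Expanding \eqref{eq:gm}, $g^{SP}(m)\le 0$ holds iff $r^{Cap}+r^{SP}-r^{SP}m\le r^{SP}\frac{m^m h^{Nor}}{(m-1)^{m-1}h^{Mal}}$, i.e.
\[
\frac{r^{Cap}}{r^{SP}}\le \psi(m):=(m-1)+\frac{m^m}{(m-1)^{m-1}}\cdot\frac{\alpha^{SyR}}{\alpha^{AT}}.
\]
So the inequality defining $\mathcal{M}^{\leq}$ has $m$ where $m-1$ belongs, and your $\phi=\psi+1$ overstates the admissible ratio by exactly one. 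The correct benchmark is $\psi(2)=1+4\alpha^{SyR}/\alpha^{AT}$.

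This is not cosmetic. Take $m_1=2$ and $r^{Cap}/r^{SP}=\tfrac{3}{2}+4\alpha^{SyR}/\alpha^{AT}$, which meets the hypothesis. Then $g^{SP}(2)=r^{SP}h^{Mal}/8>0$. Choose $\rho(a_2)>\rho(a_1)$, $c^{OC}(a_2)=c^{OC}(a_1)$ (permitted by the weak ordering), and a positive baseline alarm probability at $m=2$. Then $U^{SP}(a_2,2,\hat{\mu}_{1:N}^{Mal}(2))>U^{SP}(a_1,2,\hat{\mu}_{1:N}^{Mal}(2))$. So the statement as written is false when $m_1=2$, and no argument can close this step.

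The paper's proof has the same slip. It correctly obtains $g^{SP}(2)=(r^{Cap}-r^{SP})h^{Mal}/4-r^{SP}h^{Nor}$, but this is nonpositive exactly when $r^{Cap}/r^{SP}\le 1+4\alpha^{SyR}/\alpha^{AT}$, not $2+4\alpha^{SyR}/\alpha^{AT}$ as claimed.

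There are two ways to repair your argument:
\begin{itemize}
\item Replace $\phi$ by $\psi$ and the threshold by $1+4\alpha^{SyR}/\alpha^{AT}$. Your argument then goes through verbatim, since $\psi$ is still nondecreasing (both $m-1$ and $m(1+\frac{1}{m-1})^{m-1}$ are).
\item Keep the stated constant $2+4\alpha^{SyR}/\alpha^{AT}$ but add the assumption $m_1\ge 3$. This works because $\psi(3)=2+\tfrac{27}{4}\alpha^{SyR}/\alpha^{AT}$.
\end{itemize}
Your Step 3 derives dominance directly from $g^{SP}(m)\le 0$ rather than citing Proposition~\ref{proposition:two regions}(a), whose proof relies on the same off-by-one equivalence. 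That step is sound and is the right way to finish once the threshold is fixed.
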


Corollary \ref{corollary:Fundamental Limit of Security-Efficiency Ratio} states that if the ratio between the reward of preventing malicious flows and the gain of transiting normal flows is less than a threshold, then the SAT-SP should always configure honeypots with the least engagement level. 
In particular, the threshold of the SAT-SP's security decision in the satellite network is affected by the risk of all $N$ IoT devices, reflecting the cross-domain impact in Remark \ref{remark:cross-domain impact}. 
In particular, a higher Reinstallation-Attack Ratio (RAR) reduces the risk of IoT devices being compromised, resulting in less malicious data in the aggregated flow, and linearly increases the SER threshold to apply $a_1$ as the dominant strategy. 

\begin{remark}[\textbf{Cross-Domain Impact}]
\label{remark:cross-domain impact}
Corollary \ref{corollary:Fundamental Limit of Security-Efficiency Ratio} illustrates the cross-domain impact of the RAR ${\alpha^{SyR}}/{\alpha^{AT}}$ in the IoT device layer on the fundamental limit of the SER $r^{Cap}/r^{SP}$, i.e., the SER threshold to apply $a_1$ as the dominant strategy is proportional to the RAR. 
\end{remark}

\section{Price Design and Strategy Learning}
\label{sec:Detailed-Price Design and Learning}


Besides implementing  coarse-granularity  defense, the SAT-SP can further control the data transmission fee to incentivize the IoT-NO to increase the SF and facilitate cooperative defense across the terrestrial and satellite domains.
\textcolor{black}{
Compared with existing price-based incentive mechanisms (e.g., \cite{xu2018designing, feng2020joint}), our approach does not delve into the detailed design of the mechanisms. Instead, it emphasizes the system-level parameter $r^{SP}$, which directly influences the interactive behaviors between the IoT-NO and the SAT-SP. Since the focus of this paper is to develop a holistic, system-level design framework, abstracting $r^{SP}$ as the representative outcome of the incentive mechanism allows us to capture its key piecewise linear characteristics and leverage them to propose an efficient learning algorithm. 
}

To this end, we formulate the price design problem in Section \ref{sec:Price Design Problem Formulation}. Since the problem is NP-hard and the model parameters might not be precisely observed, we develop learning algorithms for both the IoT-NO and the SAT-SP to learn the optimal SF and the price in Sections \ref{sec:SF Learning Algorithm for the IoT-NO} and \ref{sec:Price Learning Algorithm for the SAT-SP}, respectively. 
The IoT-NO does not know the alarm triggering probability and aims to learn the optimal SF with the least regret. 
 The SAT-SP does not know the cost of the IoT-NO for applying different SFs, thus needs to adjust the price to learn its impact on the  IoT-NO's selection of SFs.
We first provide two participation conditions in Proposition \ref{proposition:IR for NO}.  
\begin{proposition} 
\label{proposition:IR for NO}
    If $(h^{Mal}+h^{Nor})(r^{NO}-r^{SP})\leq -c^{SF}(m_1)$, then $U^{NO}(a,m,\hat{\mu}^{Mal}_{1:N}(m))\leq 0, \forall m\in\mathcal{M},\forall a\in\mathcal{A}$. 
    If $(h^{Mal}+h^{Nor})r^{SP}+ h^{Mal} r^{Cap}/4 \leq -c^{OC}(a_1)$, then  $U^{SP}(a,m,\hat{\mu}^{Mal}_{1:N}(m))\leq 0, \forall m\in\mathcal{M},\forall a\in\mathcal{A}$. 
\end{proposition}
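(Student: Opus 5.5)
The plan is to bound each utility term by term, using the closed forms \eqref{eq:w_mal_compute} and \eqref{eq:w_Nor_compute} together with the monotonicity facts in Lemma \ref{lemma:detection probability increase with SF} and Lemma \ref{lemma:monotonicity}. Throughout, $\hat{P}^{Cap}(m,a)\in[0,1]$, the cost functions $c^{SF}$ and $c^{OC}$ are nonpositive, and $c^{SF}$ decreases in $m$, so that $c^{SF}(m)\leq c^{SF}(m_1)$ for every $m\in\mathcal{M}$. Likewise, the stated ordering $-c^{OC}(a_1)\leq -c^{OC}(a)$ gives $c^{OC}(a)\leq c^{OC}(a_1)$.

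\textbf{IoT-NO condition.} From \eqref{eq:w_Nor_compute}, $\hat{w}^{Nor}(m)=(1-\frac1m)^m h^{Mal}+h^{Nor}$. Since $(1-\frac1m)^m<1$ and $h^{Mal}\geq 0$, we get $0\leq\hat{w}^{Nor}(m)\leq h^{Mal}+h^{Nor}$.

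1. Suppose $r^{NO}-r^{SP}\geq 0$. Then
\[
(r^{NO}-r^{SP})(1-\hat{P}^{Cap})\hat{w}^{Nor}(m)\leq (r^{NO}-r^{SP})(h^{Mal}+h^{Nor})\leq -c^{SF}(m_1).
\]
Adding $c^{SF}(m)\leq c^{SF}(m_1)$ gives $U^{NO}\leq 0$.

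2. Suppose $r^{NO}-r^{SP}<0$. The transmission term is nonpositive and $c^{SF}(m)\leq 0$, so $U^{NO}\leq 0$ trivially.

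\textbf{SAT-SP condition.} I write $U^{SP}=r^{Cap}\hat{P}^{Cap}\hat{w}^{Mal}(m)+r^{SP}(1-\hat{P}^{Cap})\hat{w}^{Nor}(m)+c^{OC}(a)$ and bound each of the two flow terms by its worst case, using $\hat{P}^{Cap}\in[0,1]$.

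1. For the malicious flow, \eqref{eq:w_mal_compute} gives $\hat{w}^{Mal}(m)=\frac{(m-1)^{m-1}}{m^m}h^{Mal}$. This coefficient decreases in $m$ by Lemma \ref{lemma:monotonicity}. Since $m\geq m_1\geq 2$, it is at most its value at $m=2$, which is $1/4$. Hence $r^{Cap}\hat{P}^{Cap}\hat{w}^{Mal}(m)\leq r^{Cap}h^{Mal}/4$.

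2. For the normal flow, $r^{SP}(1-\hat{P}^{Cap})\hat{w}^{Nor}(m)\leq r^{SP}(h^{Mal}+h^{Nor})$.

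Summing both bounds with $c^{OC}(a)\leq c^{OC}(a_1)$ gives
\[
U^{SP}\leq (h^{Mal}+h^{Nor})r^{SP}+h^{Mal}r^{Cap}/4+c^{OC}(a_1)\leq 0.
\]

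\textbf{Main obstacle.} The only nontrivial step is the bound $\frac{(m-1)^{m-1}}{m^m}\leq\frac14$. It relies on $m_1\geq2$ and on the decreasing monotonicity of $\hat{w}^{Mal}$. If Lemma \ref{lemma:monotonicity} is not invoked directly, I would verify it by writing
\[
\frac{(m-1)^{m-1}}{m^m}=\frac{1}{m}\Bigl(1-\frac1m\Bigr)^{m-1},
\]
and noting that $(1-\frac1m)^{m-1}$ decreases toward $e^{-1}$, so the whole expression is a product of two decreasing positive factors. Everything else is sign bookkeeping.
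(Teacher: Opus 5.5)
Your proposal is correct and follows essentially the same route as the paper. For the IoT-NO you split on the sign of $r^{NO}-r^{SP}$ and use $\hat{w}^{Nor}(m)\leq h^{Mal}+h^{Nor}$ with $c^{SF}(m)\leq c^{SF}(m_1)$; for the SAT-SP you use $\hat{P}^{Cap}\in[0,1]$ and $\hat{w}^{Mal}(m)\leq \hat{w}^{Mal}(2)=h^{Mal}/4$. You are also more explicit than the paper about two steps it leaves implicit: $c^{OC}(a)\leq c^{OC}(a_1)$, which follows from the engagement-level ordering, and the elementary check that $\frac{1}{m}(1-\frac{1}{m})^{m-1}$ is decreasing.
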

\begin{proof}
   According to \eqref{eq:IOT-O Utility} and  $h^{Mal}/4+h^{Nor}=w^{Nor}(2)\leq w^{Nor}(m) \leq  w^{Nor}(\infty)=h^{Mal} + h^{Nor}, \forall m\in \mathcal{M}$, if $r^{NO}<r^{SP}$, then $U^{NO}(a,m,\hat{\mu}^{Mal}_{1:N}(m))\leq c^{SF}(m)\leq 0$. 
   If $r^{NO}+c^{SF}(m_1)/(h^{Mal}+h^{Nor}) \leq r^{SP}\leq r^{NO}$, then $U^{NO}(a,m,\hat{\mu}^{Mal}_{1:N}(m))\leq c^{SF}(m)+(h^{Mal}+h^{Nor})(r^{NO}-r^{SP})\leq 0, \forall m\in\mathcal{M},\forall a\in\mathcal{A}$.  
   Combining the above two conditions, $r^{NO}+c^{SF}(m_1)/(h^{Mal}+h^{Nor}) \leq r^{SP}$ is a sufficient condition for $U^{NO}(a,m,\hat{\mu}^{Mal}_{1:N}(m))\leq 0, \forall m\in\mathcal{M},\forall a\in\mathcal{A}$. 
   Similarly, according to \eqref{eq:SAT-SP Utility} and $0=w^{Mal}(\infty)\leq w^{Mal}(m) \leq w^{Mal}(2)=h^{Mal}/4, \forall m\in \mathcal{M}$, we obtain the sufficient condition for a negative utility of the SAT-SP. 
\end{proof}
Proposition \ref{proposition:IR for NO} shows that if the IoT-NO's price gain under any configuration is less than the cost of applying the least-effort SF, then the IoT-NO has no incentive to participate. 
Similarly, if the sum of the service gain and the security gain is less than the cost of applying the configuration of the least engagement level, then SAT-SP has no incentive to run the backhaul service. 
Note that $\frac{-c^{OC}(a_1)-h^{Mal}r^{Cap}/4}{h^{Mal}+h^{Nor}}=-\frac{c^{OC}(a_1)}{h^{Mal}+h^{Nor}}-\frac{r^{Cap}{\alpha^{AT}}}{4({\alpha^{AT}}+{\alpha^{SyR}})}$. 
 Proposition \ref{proposition:IR for NO} provides a loose bound for the design of the price $r^{SP}$, as illustrated in Remark \ref{remark:IR}. 
 
\begin{remark}[\textbf{Participation Constraints}]
\label{remark:IR}
    According to Proposition \ref{proposition:IR for NO}, the SAT-SP's price design needs to at least satisfy $-\frac{c^{OC}(a_1)}{h^{Mal}+h^{Nor}}-\frac{r^{Cap}{\alpha^{AT}}}{4({\alpha^{AT}}+{\alpha^{SyR}})} \leq r^{SP}\leq r^{NO}+\frac{c^{SF}(m_1)}{h^{Mal}+h^{Nor}}$ to make the participation incentives of both the SAT-SP and the IoT-NO.  
\end{remark}

\subsection{Price Design Problem Formulation}
\label{sec:Price Design Problem Formulation}
\textcolor{black}{
The price design problem involves a trade-off from both the IoT-NO and the SAT-SP perspectives. On one hand, a lower flow price $r^{SP}$ boosts the IoT-NO’s profit from transmitting flows into the satellite domain, incentivizing it to put more effort into detecting and blocking malicious packets, which in turn benefits the SAT-SP by reducing malicious traffic. On the other hand, a lower price decreases the SAT-SP’s profit per unit of data transmission, necessitating a careful balance between these competing objectives.
}

In practice, the SAT-SP may not apply a mixed strategy for the configuration. 
Thus, for the remaining part of this section, we consider a predefined configuration $a_0\in\mathcal{A}$ and focus on designing the flow transmission fee $r^{SP}$. 
Then, the optimal SF for the IoT-NO can be achieved equivalently by a pure strategy under any given $r^{SP}$, and we represent both players' utilities as $\tilde{U}^{SP}(m,r^{SP}):=U^{SP}(a_0,m, \hat{\mu}^{Mal}_{1:N}(m))$ and $\tilde{U}^{NO}(m,r^{SP}):=U^{NO}(a_0,m, \hat{\mu}^{Mal}_{1:N}(m))$. 
The price design problem for the SAT-SP is formulated below. 
\begin{equation}
    \begin{split}
    \label{eq:priceDesign}
         \max_{r^{SP}\in \mathbb{R}^+,m^*\in\mathcal{M}} \  & \tilde{U}^{SP}(m^*,r^{SP})\\
         \textsc{s.t.} \quad\quad  & \tilde{U}^{NO}(m^*,r^{SP})\geq \tilde{U}^{NO}(m,r^{SP}), \forall  m\in\mathcal{M}, \\
                & \tilde{U}^{i}(m^*,r^{SP})\geq 0, i\in \{NO,SP\}.
    \end{split}
\end{equation}
The above program cannot be solved directly, as the optimization variable $m$ serves as the index. 
We can, however, try out all $M$ possible values of $m^*$ and pick the optimal one. 
Once $m^*$ is given, then it is a Linear Program (LP). So it means that we need to solve $M$ LPs solvable in polynomial time to find the solution. 
Since the problem is NP-hard, as shown in Proposition \ref{proposition:NP-Hard}, it is unlikely to find a more efficient algorithm than the above exhaustive search. 

\begin{proposition}
\label{proposition:NP-Hard}
    Program \eqref{eq:priceDesign} is equivalent to the Mixed-Integer Bilinear Programming (MIBP) in \eqref{eq:MIBP} and is NP-hard.  
    \begin{equation}
    \begin{split}
    \label{eq:MIBP}
    & \max_{r^{SP}\in \mathbb{R}^+,z_1,\cdots,z_M\in \{0,1\}, \sum_{m \in \mathcal{M}} z_m = 1 } \quad\quad \sum_{m \in \mathcal{M}} \tilde{U}^{SP}(m,r^{SP}) \cdot z_m \\
    & \textsc{s.t.}  \   \sum_{m \in \mathcal{M}} \tilde{U}^{NO}(m,r^{SP}) \cdot z_m \geq \tilde{U}^{NO}(m',r^{SP}), \forall  m' \in \mathcal{M}, \\
    & \quad\quad \sum_{m \in \mathcal{M}} \tilde{U}^{i}(m,r^{SP}) \cdot z_m \geq 0, i\in \{NO,SP\}. 
    \end{split}
\end{equation}
\end{proposition}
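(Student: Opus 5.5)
The proof splits into an equivalence part and a hardness part, with most of the effort in the second.

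\textbf{Equivalence.} We show that feasible points of \eqref{eq:priceDesign} and \eqref{eq:MIBP} correspond one-to-one with equal objective values. Given a feasible pair $(r^{SP},m^*)$ of \eqref{eq:priceDesign}, we set $z_{m^*}=1$ and $z_m=0$ for $m\neq m^*$. Then $\sum_{m}z_m=1$, and every sum $\sum_m \tilde{U}^{i}(m,r^{SP})z_m$ collapses to $\tilde{U}^{i}(m^*,r^{SP})$. Hence the constraints and objective of \eqref{eq:MIBP} coincide with those of \eqref{eq:priceDesign}.

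Conversely, the constraints $z_m\in\{0,1\}$ and $\sum_m z_m=1$ force exactly one $z_m$ to equal one. That index plays the role of $m^*$, and the same collapse applies. The two optimal values are therefore equal and optimizers map to each other.

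Bilinearity follows from the structure of the utilities. With $a_0$ fixed, both $\tilde{U}^{SP}(m,r^{SP})$ and $\tilde{U}^{NO}(m,r^{SP})$ are affine in $r^{SP}$ for each $m$, by \eqref{eq:SAT-SP Utility} and \eqref{eq:IOT-O Utility}. So every product $\tilde{U}^{i}(m,r^{SP})z_m$ is bilinear in the pair $(r^{SP},z_m)$.

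\textbf{NP-hardness.} The plan is to argue that the general MIBP class is NP-hard because it contains mixed-integer linear programming as a special case. Taking the coefficients of $r^{SP}$ to be zero turns the objective and constraints into linear functions of the binaries $z$. Further, $0$-$1$ integer feasibility reduces to a product constraint of the form $\sum_m c_m z_m\ge \cdots$. One would then invoke standard NP-hardness of MILP, or of bilinear programming with binary variables.

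\textbf{Main obstacle.} The delicate point is that \eqref{eq:MIBP} has a very special structure: a single continuous variable and a one-hot $z$. For that instance the earlier enumeration over the $M$ choices of $m^*$, each solved as an LP, is polynomial. So hardness of this specific instance cannot honestly be shown by a reduction.

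I would therefore state the hardness claim at the level of the problem class. The reformulation places \eqref{eq:priceDesign} in the MIBP family, which is NP-hard in general. This is the sense in which the paper says no generically more efficient method than exhaustive search is expected. The proof should be written to make that scope explicit rather than claim a reduction for the single-variable instance.
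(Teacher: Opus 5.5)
Your equivalence argument is the paper's. The constraints $z_m\in\{0,1\}$ and $\sum_m z_m=1$ force $z_m=\mathbf{1}_{\{m=m^*\}}$, so every sum in \eqref{eq:MIBP} collapses to its $m^*$ term. Your feasible-point correspondence is a little more careful than the paper's, which argues only at the optimum.

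For hardness, the paper's entire argument is the sentence ``since MIBP is NP-hard, we know that \eqref{eq:priceDesign} is also NP-hard.'' That is exactly the inference you refuse to make, and you are right to refuse it. Showing that a problem is an instance of an NP-hard class is a reduction in the wrong direction; it only bounds the problem's difficulty from above.

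Your enumeration remark, which the paper itself makes just before the proposition, actually settles the question. Fix one of the $M$ candidates $m^*$. The constraints of \eqref{eq:priceDesign} are then $M+2$ affine inequalities in the single scalar $r^{SP}$, together with $r^{SP}\geq 0$. So the feasible set is an interval computable in $O(M)$ operations, and the affine objective $\tilde{U}^{SP}(m^*,\cdot)$ is optimized at one of its endpoints. Hence \eqref{eq:priceDesign} is solvable in $O(M^2)$ time in the size of the data $\{\tilde{U}^{SP}(m,\cdot),\tilde{U}^{NO}(m,\cdot)\}_{m\in\mathcal{M}}$. The NP-hardness half of the proposition is therefore false unless $\mathrm{P}=\mathrm{NP}$, and no proof of it, the paper's included, can be correct.

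Where your proposal still falls short is the proposed repair. Saying that the reformulation ``places \eqref{eq:priceDesign} in a family that is NP-hard in general'' is not a weaker form of the claim; it is vacuous, in the same way that every LP is an instance of the NP-hard class MILP. It gives no lower bound for this problem. It therefore cannot support the paper's remark that nothing more efficient than exhaustive search should be expected, and that remark is incoherent anyway, since the exhaustive search is itself polynomial.

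Your MILP-embedding sketch has the same problem. It needs general $0$-$1$ constraints, and under the one-hot constraint $\sum_m z_m=1$ those instances are solved by checking $M$ indices, so it says nothing about \eqref{eq:MIBP}. The honest write-up proves the equivalence and the bilinear form, states explicitly that the NP-hardness clause does not hold as written, and replaces it with the $O(M^2)$ complexity bound.
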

\begin{proof}
    The constraint $\sum_{m \in \mathcal{M}} z_m = 1$ on 
    binary decision variables $z_1,\cdots,z_M\in \{0,1\}$ guarantee that one and only one $z_m$ equals $1$ while all others equal $0$. Therefore, these binary decision variables at the optimal solution of \eqref{eq:MIBP} take the form of $z_m=\mathbf{1}_{\{m=m^*\}}$, which is equivalent to the original price design problem   \eqref{eq:priceDesign}. 
    Since MIBP is NP-hard, we know that \eqref{eq:priceDesign} is also NP-hard. 
\end{proof}

\subsection{SF Learning Algorithm for the IoT-NO}
\label{sec:SF Learning Algorithm for the IoT-NO}

The IoT-NO aims to select the optimal SF $m\in \mathcal{M}$ for a known parameter sets of $r^{SP}$, $c^{SF}$, and $r^{NO}$. 
{Not knowing the SAT-SP's configuration, the IoT-NO} 
needs to estimate {$\hat{P}^{Cap}(m,a_0)$} and $\hat{w}^{Nor}_{1:N}(m)$ by trying different SFs. 


After applying SF $m\in \mathcal{M}$, the IoT-NO can observe the following outcomes of the experiment: 
\begin{itemize}
    \item Whether {the alarms} in the satellite network are triggered (denoted as $y_0=1$) or not (denoted as $y_0=0$). 
    \item Whether each individual flow $n\in \mathcal{N}$ has been detected to contain malicious packets (denoted as $y_n=1$) or not (denoted as $y_n=0$). 
    \item The flow rate of each IoT device $n\in \mathcal{N}$, denoted as $\tilde{\mu}_n$. 
\end{itemize}

Since the observed flow rate $\tilde{\mu}_n$ is a combination of the legitimate and malicious flow rates, we need to estimate their percentages, as shown below. 

If a malicious packet is detected (i.e., $y_n=1$), then  $\tilde{\mu}_n=0$. 
\textcolor{black}{If $y_n=0$ and the IoT is not idle $\tilde{\mu}_n>0$, then there are two possibilities. 
First, device $n$ is at state $X^{Mal}$ with probability $\Pr(x_n=X^{Nor}|y_n=0,\tilde{\mu}_n>0;m)=\frac{ \Pr(y_n=0,\tilde{\mu}_n>0 | x_n=X^{Nor};m) b^{Nor} }{\Pr(y_n=0,\tilde{\mu}_n>0 |x_n=X^{Nor};m)b^{Nor}+\Pr(y_n=0,\tilde{\mu}_n>0 | x_n=X^{Mal};m)b^{Mal}}$, where $\Pr(y_n=0,\tilde{\mu}_n>0 | x_n=X^{Mal};m)=1-\hat{P}^{Det}(m)$ and $\Pr(y_n=0,\tilde{\mu}_n>0 | x_n=X^{Nor};m)=1$ for all $m\in \mathcal{M}$. 
If $x_n=X^{Mal}$ and the attacker chooses the optimal MPGR, then $\tilde{\mu}_n={\mu}^{Nor}_n+\hat{\mu}^{Mal}_n(m)$, and the estimated normal flow is $\tilde{\mu}_n(m-1)/m$. 
If $x_n=X^{Nor}$, then the estimated normal flow is  $\tilde{\mu}_n$. 
Therefore, the expected normal flow under observation $ \tilde{\mu}_n>0$ is $\zeta(y_n=0,\tilde{\mu}_n>0):=\Pr(x_n=X^{Nor}|y_n=0,\tilde{\mu}_n>0;m)\tilde{\mu}_n+\Pr(x_n=X^{Mal}|y_n=0,\tilde{\mu}_n>0;m)\tilde{\mu}_n(m-1)/m= (\frac{b^{Nor}}{b^{Nor}+b^{Mal}(1-\hat{P}^{Det}(m))} +  \frac{b^{Mal}(1-\hat{P}^{Det}(m))}{b^{Nor}+b^{Mal}(1-\hat{P}^{Det}(m))}\frac{m-1}{m})\tilde{\mu}_n$.} 

We summarize the above flow rate estimation process in Algorithm \ref{algorithm:flow_estimation}, based on the observed {alarm} triggering index $y_0\in \{0,1\}$, individual flow detection index $y_n\in \{0,1\}, n\in \mathcal{N}$, and each device's flow rate $\tilde{\mu}_n\in \mathbb{R}^{0+}, n\in \mathcal{N}$. 

\setlength{\algomargin}{1.07em}
\begin{algorithm}[h]
\SetAlgoLined
\SetKwFunction{FRecurs}{FlowEstimation}
\SetKwProg{Fn}{Function}{:}{}
\Fn{\FRecurs{$\{\tilde{\mu}_n,y_n\}_{n\in \mathcal{N}},y_0$}}{
\eIf{$y_0=1$}{$\tilde{\mu}^{Nor}_n=0, \forall n\in \mathcal{N}$;}
{
\For{$n\leftarrow 1$ \KwTo $N$}
{  \leIf{$y_n=1$ or $\tilde{\mu}_n=0$}{$\tilde{\mu}^{Nor}_n=0$;}{ $\tilde{\mu}^{Nor}_n=\zeta(y_n=0,\tilde{\mu}_n>0)\tilde{\mu}_n$}  }
}}
{
\KwRet 
$\tilde{\mu}^{Nor}=\sum_{n=1}^N \tilde{\mu}^{Nor}_n$ \; 
}
 \caption{Aggregated normal flow rate estimation\label{algorithm:flow_estimation}}
\end{algorithm}


While trying different SFs to learn their impacts, the IoT-NO needs to strike a balance between exploration and exploitation. 
Therefore, we propose an efficient SF learning algorithm in Algorithm \ref{algorithm:UCBbandit} based on the Upper Confidence Bound (UCB) \cite{auer2002finite} and the estimation of the legitimate flow rate in Algorithm \ref{algorithm:flow_estimation}. 
To normalize the reward {(i.e., $\hat{r}_{k}$ in Algorithm \ref{algorithm:UCBbandit})} into the support of $[0,1]$, we add the following two parameters, i.e., $\beta_0=-\min_{m\in \mathcal{M}} c^{SF}(m)$ and a positive constant $\beta_1\in \mathbb{R}^{0+}$ that is sufficiently large. 
{The choice of $\beta_1$ involves a tradeoff between robustness and efficiency. A larger value of $\beta_1$ can accommodate a wider range of $\tilde{\mu}^{Nor}$, thereby better fitting the uncertainty and increasing robustness. 
However, for a given $\tilde{\mu}^{Nor}$, increasing $\beta_1$ will make the reward $\hat{r}_{k}$ less distinct across different SF $m\in \mathcal{M}$, thus reducing learning efficiency. 
Hence, the algorithm is better suited for scenarios where $\tilde{\mu}^{Nor}$ has less uncertainty, allowing for the selection of a relatively small $\beta_1$ based on the estimated value of $\tilde{\mu}^{Nor}$.}

\setlength{\algomargin}{1.07em}
\begin{algorithm}[h]
\SetAlgoLined
 \SetKwFor{ForPar}{for}{do in parallel}{endfor}
\textbf{Initialize} Total learning stage $K$\;
\For{$k=1,2,\cdots,K$}
{
    \leIf{$k\leq M$}{Try each $m\in \mathcal{M}$ once;} 
    {
    $\tilde{m}_t \leftarrow \arg\max_{m\in\mathcal{M}} \tilde{r}_m + \sqrt{2\ln k/k_m}$
    }
    \textbf{Apply} $\tilde{m}_t$ and observe $\{\tilde{\mu}_n,y_n\}_{n\in \mathcal{N}},y_0$\;
    \textbf{Call} Algorithm \ref{algorithm:flow_estimation} to estimate the aggregated normal flow rate: $\tilde{\mu}^{Nor}=\texttt{FlowEstimation}(\{\tilde{\mu}_n,y_n\}_{n\in \mathcal{N}},y_0)$\; 
    \textbf{Compute} the normalized stage reward  $\hat{r}_{k}=[(c^{SF}(\tilde{m}_t)+\beta_0)+(r^{NO}-r^{SP})\tilde{\mu}^{Nor}]/\beta_1$\; 
    \textbf{Update} the average reward $\tilde{r}_{\tilde{m}_t} \leftarrow \frac{k\tilde{r}_{\tilde{m}_t} + \hat{r}_{k}}{k+1}$\;  
    \textbf{Update} the number of selection 
    $k_{\tilde{m}_t} \leftarrow k_{\tilde{m}_t}+1$\; 
}
 \caption{UCB-based efficient SF learning 
 \label{algorithm:UCBbandit}}
\end{algorithm}

\subsection{Price Learning Algorithm for the SAT-SP}
\label{sec:Price Learning Algorithm for the SAT-SP}

Define $\hat{m}(r^{SP})$ as the IoT-NO's best-response SF under $r^{SP}$, i.e., $\hat{m}(r^{SP})\in \arg\max_{m\in \mathcal{M}} \tilde{U}^{SP}(m,r^{SP})$. 
Since the IoT-NO can estimate {$\hat{P}^{Cap}(m,a_0)$}  and $\hat{w}^{Nor}_{1:N}(m)$ through {the} learning process illustrated in Algorithm \ref{algorithm:UCBbandit}, we develop algorithms for the SAT-SP to learn the optimal $r^{SP}$ given the IoT-NO's  best-response SF $\hat{m}(r^{SP})$. 

Let $\hat{U}^{SP}(r^{SP})=\tilde{U}^{SP}(\hat{m}(r^{SP}),r^{SP})$ and  $\hat{U}^{NO}(r^{SP})=\tilde{U}^{NO}(\hat{m}(r^{SP}),r^{SP})$. 
Then we can equivalently reformulate optimization problem \eqref{eq:priceDesign} as 
\begin{equation}
    \label{eq:priceDesign_form2}
         \max_{r^{SP}\in \mathbb{R}^+, \hat{U}^{NO}(r^{SP})\geq 0, \hat{U}^{SP}(r^{SP})\geq 0 } \   \hat{U}^{SP}(r^{SP}). 
\end{equation}

Define shorthand notation $r_{min}=-\frac{r^{Cap}{\alpha^{AT}}}{4({\alpha^{AT}}+{\alpha^{SyR}})}$ and $r_{max}=r^{NO}$. 
According to Remark \ref{remark:IR}, we can restrict the search space of $r^{SP}$ to $[r_{min}, r_{max}]$ for problem \eqref{eq:priceDesign_form2}. 
Compared to the classical feedback learning algorithms to solve  \eqref{eq:priceDesign_form2}, including simulated annealing, Bayesian optimization, and reinforcement learning, that inefficiently search over $[r_{min}, r_{max}]$, we provide an efficient algorithm by exploiting the following \textcolor{black}{piecewise-linear} feature in Lemma \ref{lemma:PLW}.

\begin{lemma} [\textbf{Piecewise-Linear Impact}]
\label{lemma:PLW}
Both $\hat{U}^{NO}$ and $\hat{U}^{SP}$ are \textcolor{black}{piecewise-linear} functions concerning $r^{SP}\in \mathbb{R}^+$ with the same $\bar{M}\leq M$ 
 subdomains. From the left to the right of the $\mathbb{R}^+$, we define these subdomains as $\mathbb{R}_1,\dots \mathbb{R}_{\bar{M}}$, respectively. 
 The values of $\tilde{U}^{NO}(m,0)$ and $\tilde{U}^{SP}(m,0)$ increase and decrease concerning $m\in \mathcal{M}$, respectively.  
\begin{itemize}
    \item [(a).] For the IoT-NO, $\hat{U}^{NO}$ is further a convex and decreasing function concerning $r^{SP}\in \mathbb{R}^+$. 
    From subdomain $\mathbb{R}_i$ to subdomain $\mathbb{R}_j$, if $i<j$ and $i,j\in \{1,\cdots,\bar{M}\}$, then the best response SF $\hat{m}(r^{SP})\in \mathcal{M}$ decreases.  
    \item [(b).] 
    For the SAT-SP, $\hat{U}^{SP}$ is possibly discontinuous at the boundaries of different subdomains.  
    In each subdomain $\mathbb{R}_i, i\in \{1,\cdots,\bar{M}\}$, the value of $\hat{U}^{SP}(r^{SP})$ increases with $r^{SP}$. 
    Moreover, in subdomain $\mathbb{R}_j, j>i, j\in \{1,\cdots,\bar{M}\}$, the value of $\hat{U}^{SP}(r^{SP})$ increases with $r^{SP}$ at smaller rates than it does in subdomain $\mathbb{R}_i$.  
\end{itemize}
\end{lemma}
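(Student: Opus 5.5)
The proof rests on one observation: with $a_0$ fixed, each utility is affine in $r^{SP}$ for every fixed $m$. From \eqref{eq:IOT-O Utility} we have
\begin{equation*}
\tilde{U}^{NO}(m,r^{SP}) = c^{SF}(m) + r^{NO}W(m) - r^{SP}W(m), \qquad W(m):=(1-\hat{P}^{Cap}(m,a_0))\hat{w}^{Nor}(m).
\end{equation*}
From \eqref{eq:SAT-SP Utility},
\begin{equation*}
\tilde{U}^{SP}(m,r^{SP}) = r^{Cap}\hat{P}^{Cap}(m,a_0)\hat{w}^{Mal}(m) + c^{OC}(a_0) + r^{SP}W(m).
\end{equation*}
Here the MPGR has already been replaced by the attacker's best response via Lemma \ref{lemma:Attacker's optimal malicious flow}, and neither $\hat{w}^{Mal}$ nor $\hat{w}^{Nor}$ depends on $r^{SP}$. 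By Lemma \ref{lemma:monotonicity}, $\hat{P}^{Cap}(\cdot,a_0)$ decreases in $m$ and $\hat{w}^{Nor}$ increases in $m$. Hence the common slope magnitude $W(m)\ge 0$ is increasing in $m$. The claims about $r^{SP}=0$ follow from the same monotonicity: $\tilde{U}^{NO}(m,0)$ collects $r^{NO}W(m)$ plus the cost term, and $\tilde{U}^{SP}(m,0)$ collects $r^{Cap}\hat{P}^{Cap}\hat{w}^{Mal}$, which decreases in $m$. For the IoT-NO intercept I would note the implicit assumption that the revenue gain $r^{NO}W(m)$ dominates the cost decrement, i.e., that the ordering is of the relevant intercepts.

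For (a), I use the IoT-NO's best response $\hat{m}(r^{SP})\in\arg\max_m \tilde{U}^{NO}(m,r^{SP})$; the displayed definition in the text is presumably a typo for $\tilde{U}^{NO}$. Then $\hat{U}^{NO}$ is the upper envelope of $M$ affine functions with nonpositive slopes $-W(m)$. It is therefore convex, piecewise-linear and nonincreasing, with at most $M$ linear pieces. Define $\mathbb{R}_1,\dots,\mathbb{R}_{\bar M}$ as the maximal intervals on which a single $m$ attains the envelope. The standard upper-envelope argument then gives the ordering: as $r^{SP}$ grows, the maximizing line must have weakly smaller slope magnitude, so the active index moves to smaller $W(m)$ and hence smaller $m$. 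I make this precise by comparing two lines $m<m'$ with $W(m)\le W(m')$. Their difference $\tilde{U}^{NO}(m',r)-\tilde{U}^{NO}(m,r)$ is affine with slope $-(W(m')-W(m))\le 0$, so once $m$ beats $m'$ it keeps beating it for all larger $r$. Consequently $\hat{m}(r^{SP})$ is monotonically nonincreasing across subdomains. Ties are broken consistently, say in favor of the smaller $m$.

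For (b), the same partition is used: on $\mathbb{R}_i$ the best response is a fixed $m_{(i)}$, so $\hat{U}^{SP}(r^{SP})=\tilde{U}^{SP}(m_{(i)},r^{SP})$ there. This is affine with slope $W(m_{(i)})\ge 0$, hence increasing. Because $m_{(j)}<m_{(i)}$ for $j>i$ and $W$ is increasing in $m$, the slope on $\mathbb{R}_j$ is no larger than on $\mathbb{R}_i$. Discontinuity at the boundaries arises because the switch from $m_{(i)}$ to $m_{(j)}$ is dictated by the IoT-NO's envelope crossing, not the SAT-SP's. At that crossing the intercepts $r^{Cap}\hat{P}^{Cap}\hat{w}^{Mal}$ differ, so $\hat{U}^{SP}$ jumps, typically upward since $\tilde{U}^{SP}(m,0)$ decreases in $m$. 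Finally, the same subdomains serve both functions, which gives the common count $\bar M\le M$.

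The main obstacle is the monotonicity of $W(m)=(1-\hat{P}^{Cap})\hat{w}^{Nor}$ together with the degenerate cases. These include equal slopes, which make some $m$ never optimal or tie and are the reason $\bar M<M$, and the need to state the intercept ordering for $\tilde{U}^{NO}(m,0)$ carefully, since it depends on cost magnitudes. I would first settle $W$ via Lemma \ref{lemma:monotonicity}, as a product of two nonnegative increasing factors. I would then handle ties by the consistent tie-breaking rule, which preserves the monotone ordering of $\hat{m}$ needed for the slope comparison in (b).
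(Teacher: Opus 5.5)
Your proposal follows the paper's proof. Each utility is affine in $r^{SP}$ for fixed $m$, so $\hat{U}^{NO}$ is the upper envelope of lines with slopes $-W(m)$ that steepen as $m$ grows, which gives convexity, piecewise linearity, monotonicity, and a best-response SF that decreases across subdomains; on each induced subdomain $\hat{U}^{SP}$ is then affine with slope $W(\hat{m}(r^{SP}))\ge 0$, and this slope shrinks from left to right. Your explicit slope $W(m)=(1-\hat{P}^{Cap}(m,a_0))\hat{w}^{Nor}(m)$, the pairwise-crossing argument with consistent tie-breaking, and your caveat that the increase of $\tilde{U}^{NO}(m,0)$ in $m$ needs an extra assumption on how fast $c^{SF}$ decreases (which the paper's proof never addresses) are correct refinements of the same argument.
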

\begin{proof}
    We first prove statement (a). Following the definition, $\hat{U}^{NO}(r^{SP})=\max_{m\in \mathcal{M}} \tilde{U}(m,r^{SP})$. 
    Since $\tilde{U}^{NO}(m,r^{SP})$ is an affine and decreasing function over $r^{SP}$ for any $m\in \mathcal{M}$, the point-wise maximum of a group of affine functions over $r^{SP}$ leads to a piecewise-linear, convex, and decreasing function with at most $M$ subdomains. 
    Since the slope of $\tilde{U}^{NO}(m,r^{SP})$ concerning $r^{SP}$ 
    decreases as $m$ increases, the best response SF $\hat{m}(r^{SP})$ decreases as we shift to subdomains with a larger index. 
    We then prove statement (b). The piecewise linearity and increase of $\hat{U}^{SP}$ over $r^{SP}$ follows from the fact that $\tilde{U}^{SP}(m,r^{SP})$ is an affine and increasing function concerning $r^{SP}$. 
    Since the subdomain is determined by the NO's best response $\hat{m}(r^{SP})$, $\hat{U}^{SP}$ is possibly discontinuous at the boundaries of these subdomains.    
    Since the best response SF $\hat{m}(r^{SP})$ decreases as we shift to subdomains with a larger index, and the slope of $\tilde{U}^{SP}(m,r^{SP})$ concerning $r^{SP}$ 
    decreases as $m$ decreases, the value of $\hat{U}^{SP}(r^{SP})$ increases with $r^{SP}$ at smaller rates in subdomains of larger indexes. 
\end{proof}

We illustrate the \textcolor{black}{piecewise-linear} property in Fig. \ref{fig:PWL}. 
\begin{figure}[h]
\centering
\includegraphics[width=.7 \columnwidth]{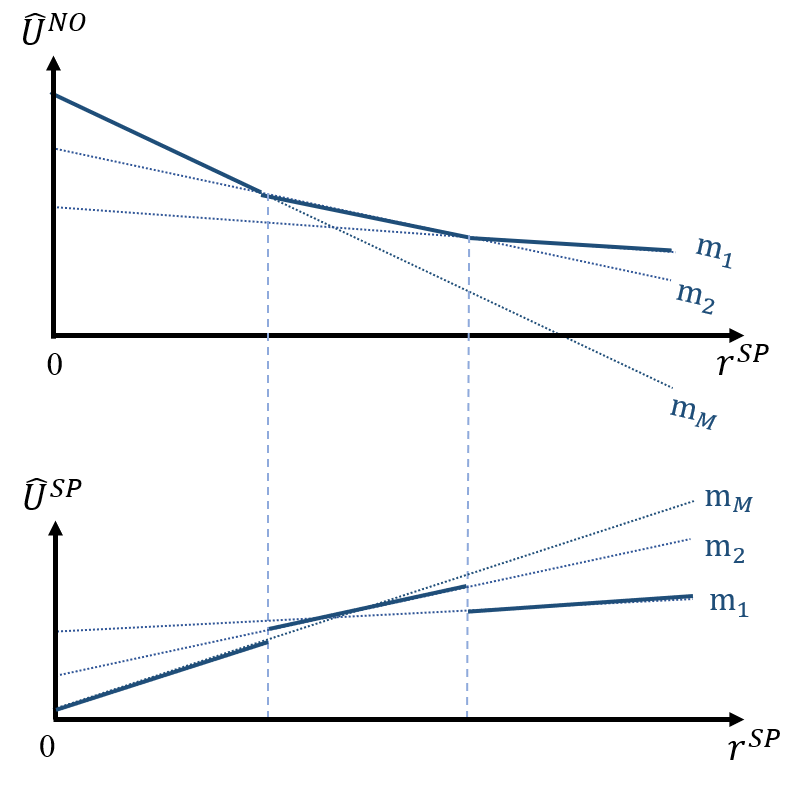}
\caption{ 
The piecewise linear impact of $r^{SP}$ on $\hat{U}^{NO}$ and $\hat{U}^{SP}$ with $\bar{M}=3$ subdomains separated by two horizontal lines. 
}
\label{fig:PWL}
\end{figure}
The dashed lines illustrate $\tilde{U}^{NO}(m,r^{SP})$ and $\tilde{U}^{SP}(m,r^{SP})$, and the solid lines illustrate $\hat{U}^{NO}(r^{SP})$ and $\hat{U}^{SP}(r^{SP})$. 
The two horizontal lines separate the $\bar{M}=3$ subdomains. 
Based on Lemma \ref{lemma:PLW}, Remark \ref{remark:Impact of the Flow Transmission Fee} summarizes the impact of the flow transmission fee.  

\begin{remark}[\textbf{Impact of the Flow Transmission Fee}]
\label{remark:Impact of the Flow Transmission Fee}
 Under a given configuration, if the SAT-SP increases (resp. decreases) the flow transmission fee $r^{SP}$, then the IoT-NO tends to reduce (resp. increase) {the} SF.  
\end{remark}


The SAT-SP does not know the explicit form of the IOT-NO's best-response SF $\hat{m}(r^{SP})$, but can estimate the value of $\hat{U}^{SP}(r^{SP})$ for each $r^{SP}$. 
Therefore, {the SAT-SP's} learning goal is to efficiently search the entire feasible region $[r_{min}, r_{max}]$ to find the optimal $r^{SP}$. 
Based on Lemma \ref{lemma:PLW}, we can turn the above region search problem into determining the change points of the subdomains, as illustrated in Algorithm \ref{algorithm:rSPlearning}. 
Established based on bisection search, Algorithm \ref{algorithm:Determineslope} learns the slope of the leftmost subdomain, and  Algorithm \ref{algorithm:changeDetection} learns the change point set for $\bar{M}$ subdomains, where 
$S(x_1,x_2):=\frac{\hat{U}^{SP}(x_2)-\hat{U}^{SP}(x_1)}{x_2-x_1}$ represents the slope of the line determined by two values of $r^{SP}$ (i.e., $x_1$ and $x_2$). 

\setlength{\algomargin}{1.07em}
\begin{algorithm}[h]
\SetAlgoLined
\small 
\SetKwFunction{FRecurs}{SL}
\SetKwProg{Fn}{Function}{:}{}
\Fn{\FRecurs{$r^L,r^R,T_S$}}{\While{True}
{
 $r^M\leftarrow (r^L+r^R)/2$; 
 Add $r^M$ to the set $SAMPLES$\;
 \eIf{
 $|S(r^L,r^R)-S(r^L,r^M)|<T_S$
 }{
 $SLOPE\leftarrow S(r^L,r^M), RSP\leftarrow r^R$\;
 \textbf{Break}\;
 }
 {$r^R\leftarrow r^M$}
}}{\KwRet $SLOPE, RSP, SAMPLES$\;}
 \caption{Leftmost Subdomain Slope Learning
 \label{algorithm:Determineslope}}
\end{algorithm}

\setlength{\algomargin}{1.07em}
\begin{algorithm}[h]
\SetAlgoLined
\small 
\SetKwFunction{FRecurs}{CD}
\SetKwProg{Fn}{Function}{:}{}
\Fn{\FRecurs{$r^L,r^R,SL,T_C,T_S$}}{
$CP_L\leftarrow r^L$\; 
\While{True}
{
 $r^M \leftarrow (r^L+r^R)/2$; 
 Add $r^M$ to the set $SAMPLES$\;
 \lIf{$r^M-r^L<T_C$}{\textbf{Break}}
 \eIf{
 $|S(r^L,r^M)-SL|<T_S$
 }{
 $r_m^L\leftarrow r_m^M, CP_L \leftarrow r_m^M$\;
 }
 {$r_m^R\leftarrow r_m^M,CP_R \leftarrow r_m^M$\;}
}}{\KwRet $CP_L, CP_R, SAMPLES$\;}
 \caption{Subdomain Change Points Learning
 \label{algorithm:changeDetection}}
\end{algorithm}

In Algorithm \ref{algorithm:rSPlearning}, $RSP_m$, $SL_m$, and $BP_m$ represent the rightmost sample point, the slope, and the break point for each subdomain $m$, respectively. 
Denote the total number of break points found as $BP^{Num}$. 
Denote $SPF_m$ as the number of sampled points in the set SAMPLES that belongs to the $m$-th subdomain. 
The tolerance levels for the change point and the slope are denoted as $T_C$ and $T_S$, respectively. 
Lines 2-14 in algorithm \ref{algorithm:rSPlearning} construct the estimated piecewise-linear function $V^{SP}$ for the general cases, while lines 15-19 further provide a verification to eliminate the extreme cases of collinearity, i.e., multiple points on one line. 
In lines 5-6, if $RSP_m=r_m^R$, then $\hat{U}^{SP}$ is linear over $r^{SP}$ for the entire search space $[r_{min},r_{max}]$. We can terminate the subsequent computation to enhance efficiency.

\setlength{\algomargin}{1.07em}
\begin{algorithm}[h]
\SetAlgoLined
\small 
 \SetKwFor{ForPar}{for}{do in parallel}{endfor}
\textbf{Initialize} $r_{1}^L=r_{min},r_{1}^R=r_{max}$, $SAMPLES=\{r_{1}^L,r_{1}^R\}$\; 
\For{$m=1,2,\cdots,M$}
{
Call Algorithm \ref{algorithm:Determineslope} to learn the slope of the $m$-th subregion: 
$SL_m,RSP_m,SAM_{SL}=\texttt{SL}(r_m^L,r_m^R,T_S)$\; 
$SAMPLES\leftarrow SAMPLES \cup SAM_{SL}$\;
\eIf{$RSP_m=r_m^R$}{\textbf{Break}}
{$r_m^L\leftarrow RSP_m$\;  
Call Algorithm \ref{algorithm:changeDetection} to learn the $m$-th change point: 
$CP_L,CP_R,SAM_{CD}=\texttt{CD}(r_m^L,r_m^R,SL_m,T_C,T_S)$; 
$BP_m=(CP_L+CP_R)/2, BP^{Num}\leftarrow BP^{Num}+1$\;
$SAMPLES\leftarrow SAMPLES \cup SAM_{CD}$\;
}
$r_m^L\leftarrow CP_R$\;
}
Construct the estimated piecewise-linear function $V^{SP}$ for $r^{SP}\in [r_{min},r_{max}]$ based on $\{BP_m,SL_m\}_{m=1,\cdots,BP^{Num}}$\; 
\If{$BP^{Num}<M$}
{
\For{$m=1:BP^{Num}$}{
\lIf{$SPF_m<M$}
{Sample $M-SPF_m$ points in subdomain $m$ and verify their belongings to $V^{SP}$
}
}
}
 \caption{Adaptive Price Sampling Method 
 \label{algorithm:rSPlearning}}
\end{algorithm}


\section{Performance Evaluation}
\label{sec:case study}

{We design a set of simulations to illustrate the security loss inflicted by the cross-domain attack, the optimality of the proposed multi-domain collaborative defense, and the effectiveness of the learning algorithms.} 

\subsection{{Device-Level Compromise and Individual Flow Sampling}}
{As illustrated in Fig. \ref{fig:SimulationSetting}, we consider $N=100$ sensors and $W=1000$ communication sessions, each potentially lasting from hours to days. 
During each session, sensor $n\in \mathcal{N}$ could be in one of the three states (idle $X^{Idl}$, normal $X^{Nor}$, or compromised $X^{Mal}$), as governed by the transition matrix in \eqref{eq:Tr}, with parameters  $\alpha^{SeR}=0.8,\alpha^{AT}=0.02,\alpha^{SyR}=0.15,\alpha^{SeT}=0.3$. 
Each non-idle sensor $n\in \mathcal{N}$ generates legitimate packets at an average rate of $\mu_n^{Nor}=100$ packets per session (pps). 
Additionally, any compromised sensor generates additional malicious packets at an average rate of $\mu_n^{Mal}=15$ pps.}   
 

\begin{figure}[h]
\centering
\includegraphics[width=1 \columnwidth]{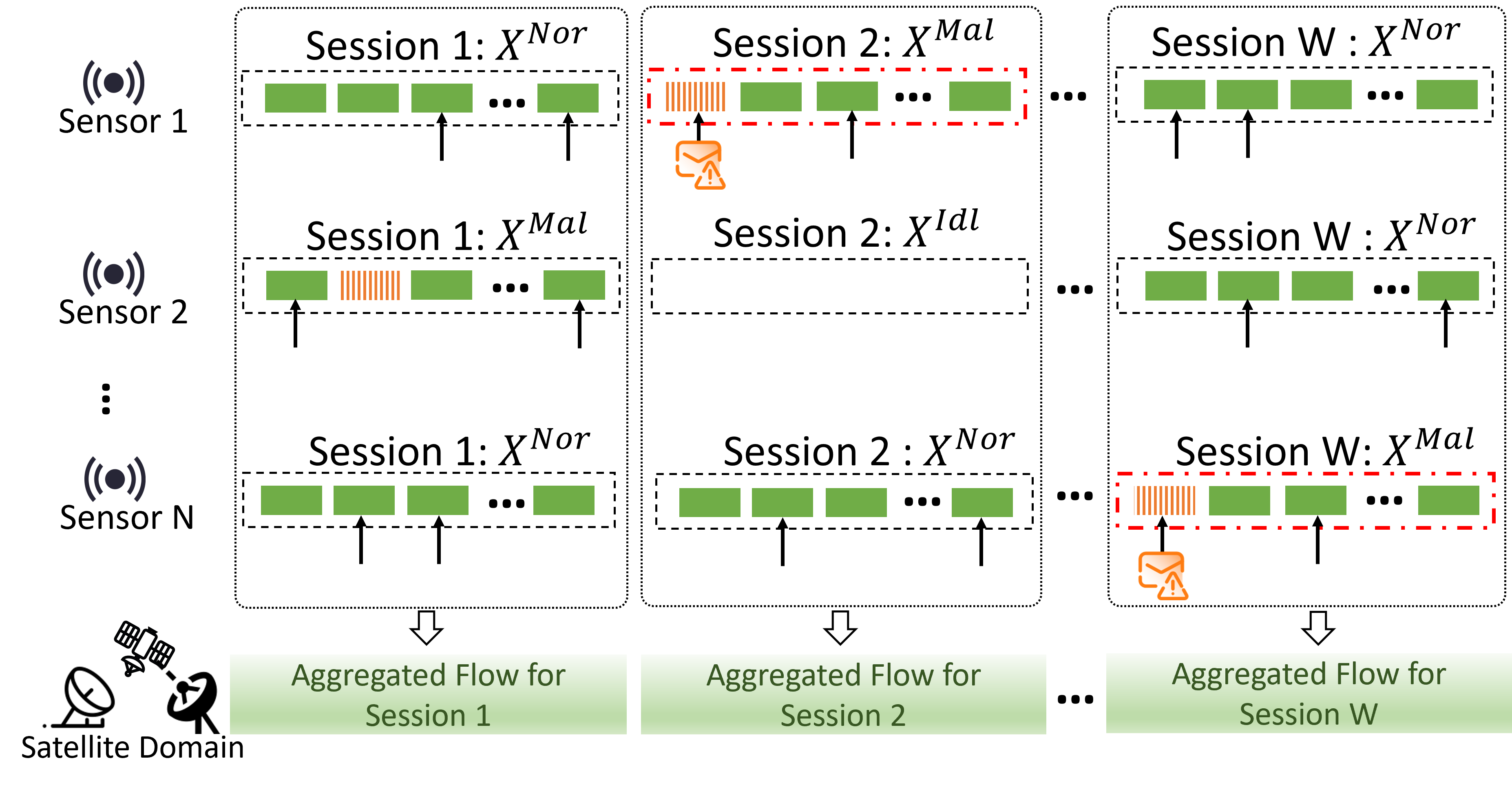}
\caption{ 
{
An overview of the simulation setting. 
During each session, the green squares represent legitimate packets, while orange squares with vertical stripes denote malicious packets. 
The arrows indicate the samples selected by the IoT-NO. 
The aggregated flow from each session, depicted at the bottom, enters the satellite domain.}  
}
\label{fig:SimulationSetting}
\end{figure}

{During each session, the IoT-NO samples $m\in \mathcal{M}$ times and terminates the session once a malicious packet has been sampled and correctly identified by the DPI, as indicated by the two exclamation signs for session $2$ of sensor $1$ and session $W$ of sensor $N$ in Fig. \ref{fig:SimulationSetting}. 
We have extended the setting to include a miss detection rate for the DPI, denoted by $\psi\in [0,1]$. 
We create the color-coded device-session detection matrices in Fig. \ref{fig:matrix3by1_u15} to illustrate the sampling process at the IoT-NO with  $\psi=1$ under $m=2$ and $m=4$. 
Each matrix consists of $N$ by $W$ squares. 
The color of the square at column $n\in\{1,\cdots,N\}$ and row $w\in\{1,\cdots,W\}$, yellow or blue, indicates whether session $w$ of device $n$ has been terminated or not. 
As demonstrated by the increasing number of yellow squares in the right-hand color matrix, a higher SF enhances the detection and deterrence of malicious packets at the individual-flow level. 
}  

\begin{figure}[h]
\centering
\includegraphics[width=1 \columnwidth]{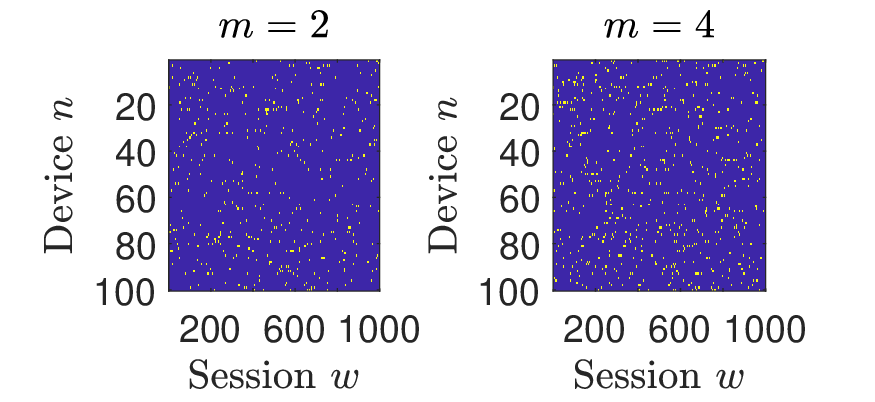}
\caption{
{Device-session detection matrices across $N$ devices and $W$ sessions for the sampling process at the IoT-NO with $\psi=1$ under $m=2$ and $m=4$. 
A yellow square at column $n\in\{1,\cdots,N\}$ and row $w\in\{1,\cdots,W\}$ indicates the termination of session $w$ for device $n$.}  
}
\label{fig:matrix3by1_u15}
\end{figure}

\subsection{{Honeypot Triggering based on the Aggregated Flow}}
{Except for the packets in sessions terminated by the IoT-NO, all remaining packets comprise the aggregated flow entering the satellite domain, as depicted at the bottom of Fig. \ref{fig:SimulationSetting}. 
The SAT-SP can choose between $K=2$ configurations $\mathcal{A}=\{a^L,a^H\}$, where $a^L$ and $a^H$ denote low-interaction and high-interaction honeypots, respectively. 
Compared to low-interaction honeypots that emulate critical components at a low cost, high-interaction honeypots utilize real resources to more effectively attract attacks, albeit at a higher cost \cite{7676152}. 
Consequently, we set the probability that each malicious packet triggers the honeypot to $p^L=0.01$ for configuration $a^L$, and $p^H=0.03$ for configuration $a^H$, respectively.} 

{ 
Once a malicious packet in the aggregated flow of session $w \in \mathcal{W}$ triggers a honeypot in the satellite domain, the satellite terminal terminates the aggregated flow of that session. 
Under the realization of the individual-level flow sampling process depicted in Fig. \ref{fig:matrix3by1_u15}, we visualize the honeypot triggering probability and the termination of the aggregated flow across $W$ sessions in Fig. \ref{fig:matrixplot} and Fig. \ref{fig:HoneyTrigMatrixShow}, respectively. 
The results confirm that individual flow sampling with a higher SF effectively reduces the probabilities of triggering honeypots and terminating aggregated flows by actively preventing malicious packets from entering the satellite domain.
Meanwhile, a high-interaction configuration is more effective in capturing malicious packets. 
} 

\begin{figure}[htb]
    \centering 
\begin{subfigure}{0.49\columnwidth} 
  \includegraphics[width=1 \linewidth]{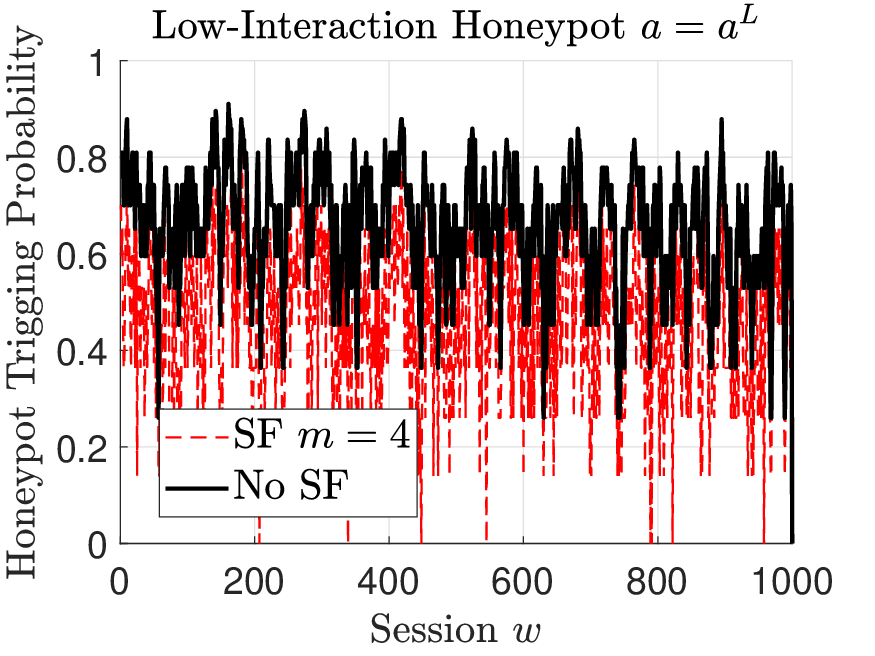}
  \caption{Low-Interaction $a=a^L$.} 
  \label{fig:matrixplot_m2}
\end{subfigure}\hfil
\begin{subfigure}{0.49\columnwidth} 
  \includegraphics[width=1 \linewidth]{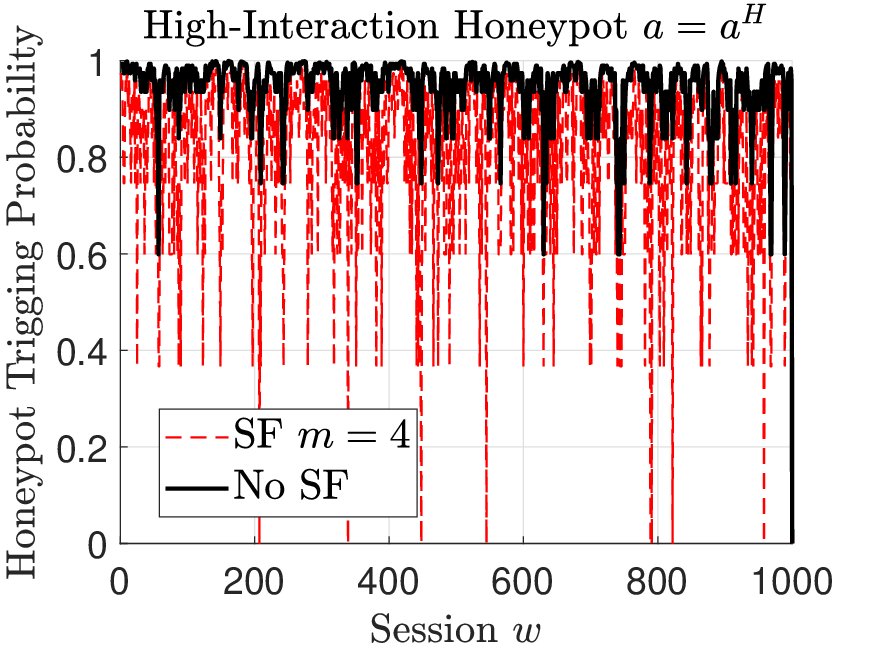}
  \caption{High-Interaction $a=a^H$. } 
  \label{fig:matrixplot_m4} 
\end{subfigure}\hfil 
\caption{
{Honeypot triggering probabilities with and without individual flow sampling in dashed red and solid black lines, respectively.}}  
\label{fig:matrixplot}
\end{figure}

\begin{figure}[h]
\centering
\includegraphics[width=1 \columnwidth]{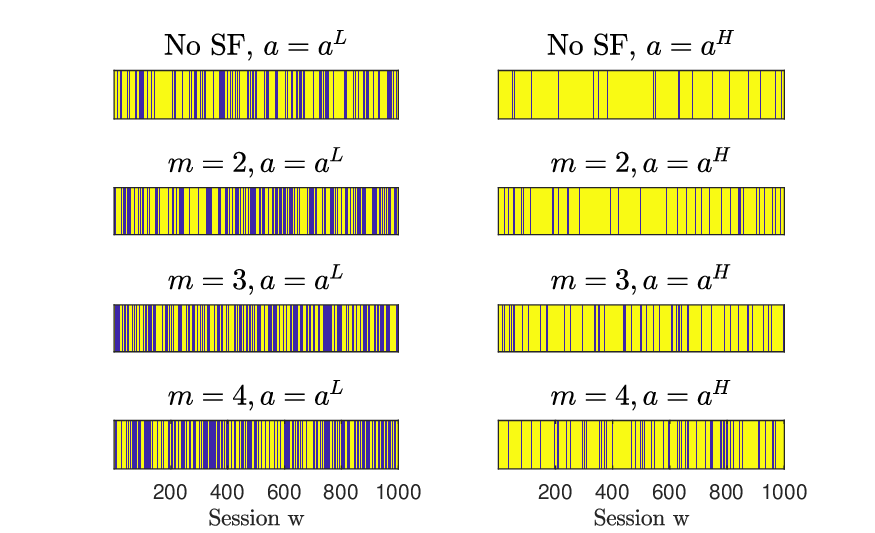}
\caption{{Aggregate flow termination under various SFs and honeypot configurations across $W$ sessions, with yellow indicating termination and blue indicating non-termination.}}
\label{fig:HoneyTrigMatrixShow}
\end{figure}



\subsection{{Security Loss under Attacks and Defense Mitigation}}
\label{sec:security loss}
{In this section, we illustrate the security loss caused by cross-domain attacks and the mitigation effects of cooperative defense.}   
The SAT-SP adopts configuration $a^L$ with $r^{Cap}=0$, $r^{SP}=0.2$, $r^{NO}=0.3$. 
We set the costs $c^{SF}(m)=0$ for all $m \in \mathcal{M}$ and $c^{OC}(a)=0$, and postpone the discussion of incentives for the attacker, the IoT-NO, and the SAT-SP to Section \ref{sec:Gambling Effect and Security Gain under CSNE}.

{
We define the attack intensity as $\eta\in [0,1]$ and set MPGR $\mu^{Mal}_n=\eta\mu^{Nor}_n$ for all $n\in \mathcal{N}$ to investigate its impact in Fig. \ref{fig:attackIntensity}. 
The impacts of MPGR on the SAT-SP's and the IoT-NO's utilities are represented by red and blue lines, respectively.} 
As illustrated by the dashed lines, when the IoT-NO applies no flow sampling, then the utilities of both the IoT-NO and the SAT-SP decrease rapidly as attack intensity $\eta\in [0,1]$ increases. 
\textit{Thus, the honeypots at the satellite domain itself are inadequate for defending against cross-domain attacks{, particularly when the attack intensity is high.}}  
When the IoT-NO applies SF $m=2,3,4$, respectively,  {the decreases in utility caused by the attacks} are largely mitigated at low values of $\eta$. 
{Moreover, a larger value of $\eta$ also increases the individual-flow detection rate, reduces the honeypot triggering probability, and promotes communication transmission, especially for high SFs.} Consequently, the utilities begin to increase {(though still below those in an attack-free scenario, where $\eta=0$)} when the attack intensity $\eta$ exceeds a certain threshold. 
\textit{These results illustrate the benefits of cooperative defense to counter cross-domain attacks.} 


\begin{figure}[h]
\centering
\includegraphics[width=1 \columnwidth]{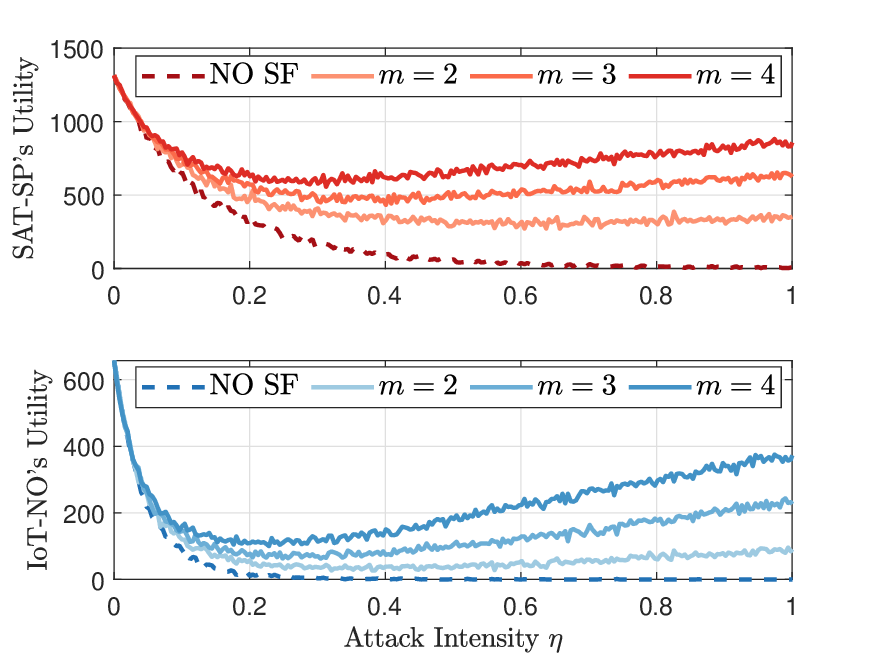}
\caption{ 
{Impact of attack intensity $\eta\in [0,1]$ on} the utilities of the IoT-NO and the SAT-SP with or without individual flow sampling.
} 
\label{fig:attackIntensity}
\end{figure}

We investigate the impact of the device's scalability on the IoT-NO and the SAT-SP in In Fig. \ref{fig:N_IOT_NO} and \ref{fig:N_SAT_SP}, respectively. 
The black dashed lines in both figures illustrate the utilities {in the absence of attacks, i.e., $\eta=0$. Serving as benchmark values, these lines} demonstrate that both the IoT-NO and the SAT-SP can benefit from serving more devices, referred to as the \textit{Profit of Scale (PoS)}. 
Next, we consider the case where attacks exist with intensity $\eta=0.15$. 
When the IoT-NO applies no flow sampling, as illustrated by the dotted lines in Fig. \ref{fig:Nnumber}, then both utilities of the IoT-NO and the SAT-SP start to decrease when the number of devices $N$ has passed a certain threshold. 
Compared with the benchmark values in black, we observe that \textit{the cross-domain attack has invalidated the PoS}. 
However, \textit{the individual flow sampling at the IoT gateway with SF $m=2,3,4$ can {significantly  mitigate the PoS degradation caused by the attacks}}. 
\begin{figure}[htb]
    \centering 
\begin{subfigure}{0.49\columnwidth}
  \includegraphics[width=\linewidth]{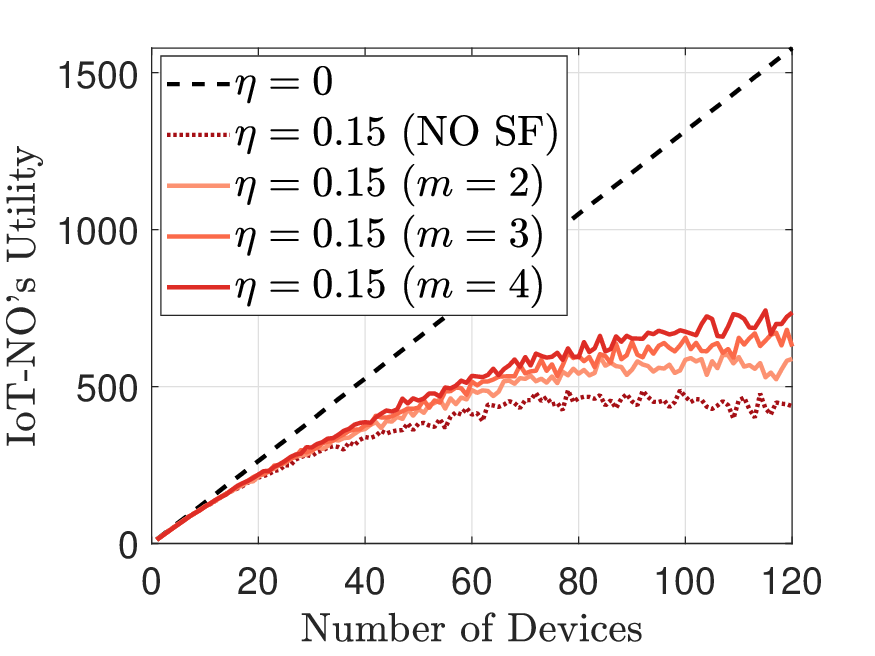}
  \caption{IoT-NO's expected utility.} \label{fig:N_IOT_NO}
\end{subfigure}\hfil 
\begin{subfigure}{0.49\columnwidth}
  \includegraphics[width=\linewidth]{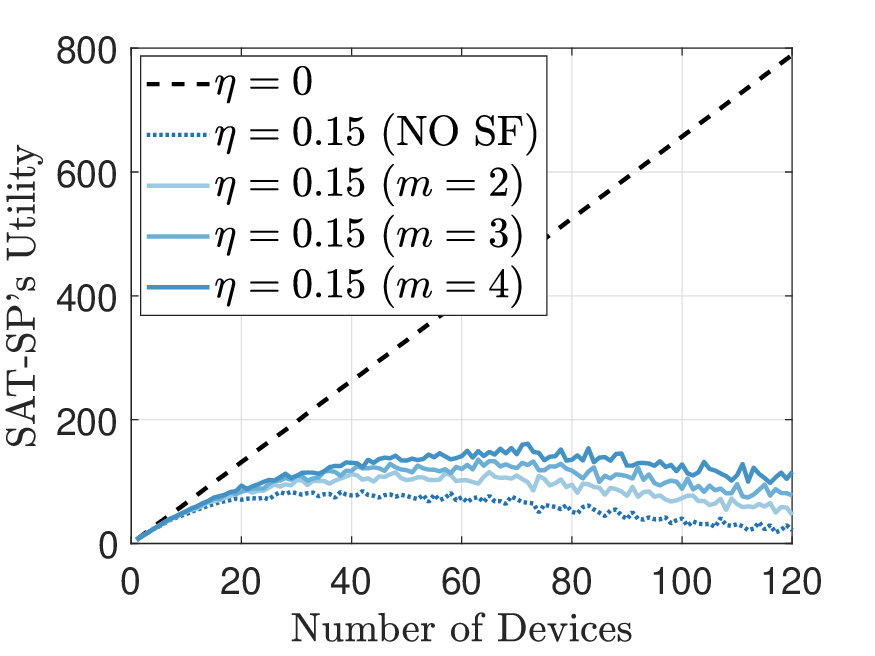} 
  \caption{SAT-SP's expected utility. } \label{fig:N_SAT_SP} 
\end{subfigure}\hfil 
\caption{Utilities of the IoT-NO and the SAT-SP for different numbers of devices with or without flow sampling at the IoT gateway. }  
\label{fig:Nnumber}
\end{figure}

\subsection{Security Gain under CSNE}
\label{sec:Gambling Effect and Security Gain under CSNE}
We consider $M=3$ SFs ($m_1=2,m_2=3$, and $m_3=4$) and investigate the mixed-strategy CSNE under the conditions $c^{SF}(m_1)=-1$ and $c^{SF}(m_2)=-10$, {with $c^{SF}(m_3)$ varying between $[-60,-40]$}. 
As illustrated in Fig. \ref{fig:EqStrategy}, the IoT-NO's CSNE strategy falls into three regions as $c^{SF}(m_3)$ increases. 
In region one, where $c^{SF}(m_3)\leq -59.6$, SF $m_2$ dominates other SFs due to its cost-effectiveness. 
{In region two, where $-59.6 \leq c^{SF}(m_3)\leq -52.3$, the IoT-NO chooses SF $m_2$ with probability $0.14$ and $m_3$ with probability $0.86$, responding to the decreased cost of taking SF $m_3$. When $c^{SF}(m_3) \geq -52.3$, the IoT-NO further increases the probability of selecting $m_3$ to $0.91$ and substitutes $m_2$ with $m_1$.}  

In regions one and two, we have observed a \textit{gambling effect} in the SAT-SP's strategy, illustrated by the dotted black line. 
Given the common knowledge of the decreased cost of  selecting SF $m_3$, the SAT-SP can reduce security effort (by taking $a^H$ with lower probability) within a certain range. 
{In this manner, the SAT-SP prompts the IoT-NO to continue its security efforts by maintaining sample frequency $m_3$.} 

\begin{figure}[h]
\centering
\includegraphics[width=.9 \columnwidth]{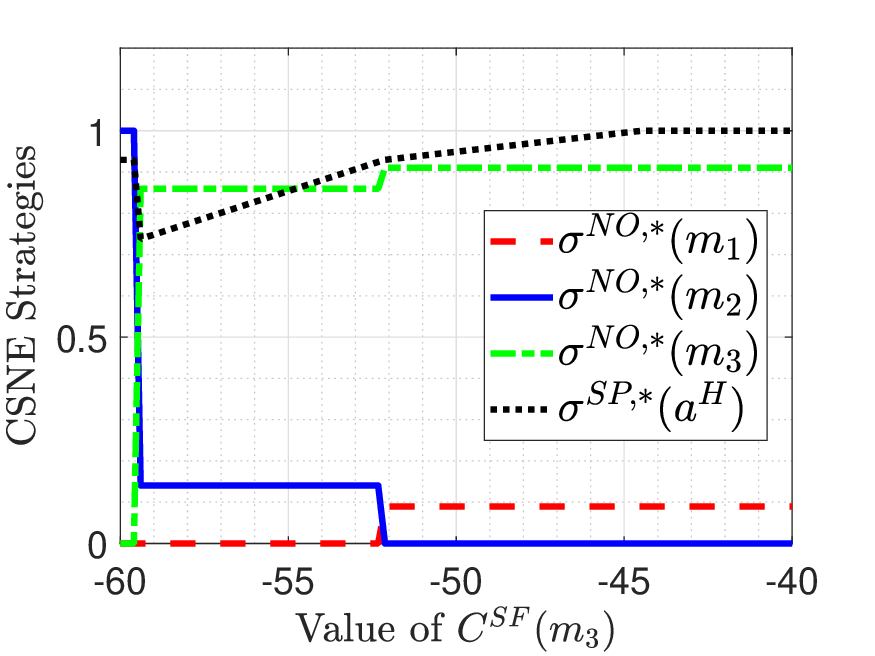}
\caption{ 
The CSNE strategies of the IoT-NO and the SAT-SP.  
}
\label{fig:EqStrategy}
\end{figure}

In Fig. \ref{fig:valueCSNE}, we illustrate the security gains of players following the CSNE, denoted by the dotted black lines, under varying costs of $c^{SF}(m_3)$. 
In Fig. \ref{fig:UtilityDeviate_NO}, the SAT-SP remains to take $\sigma^{SP,*}$, while the IoT-NO chooses different SFs. 
The results show that CSNE can bring {a maximum} security gain of $11.4\%,  11.9\%$, and $9.5\%$ to the IoT-NO who remains the equilibrium $\sigma^{NO,*}$, compared to the IoT-NO who deviates to SF $m_1,m_2$, and $m_3$, respectively. 
In Fig. \ref{fig:UtilityDeviate_SP}, the IoT-NO remains to take $\sigma^{NO,*}$, while the SAT-SP chooses different honeypot configurations. 

\begin{figure}[htb]
    \centering 
\begin{subfigure}{0.49\columnwidth}
  \includegraphics[width=\linewidth]{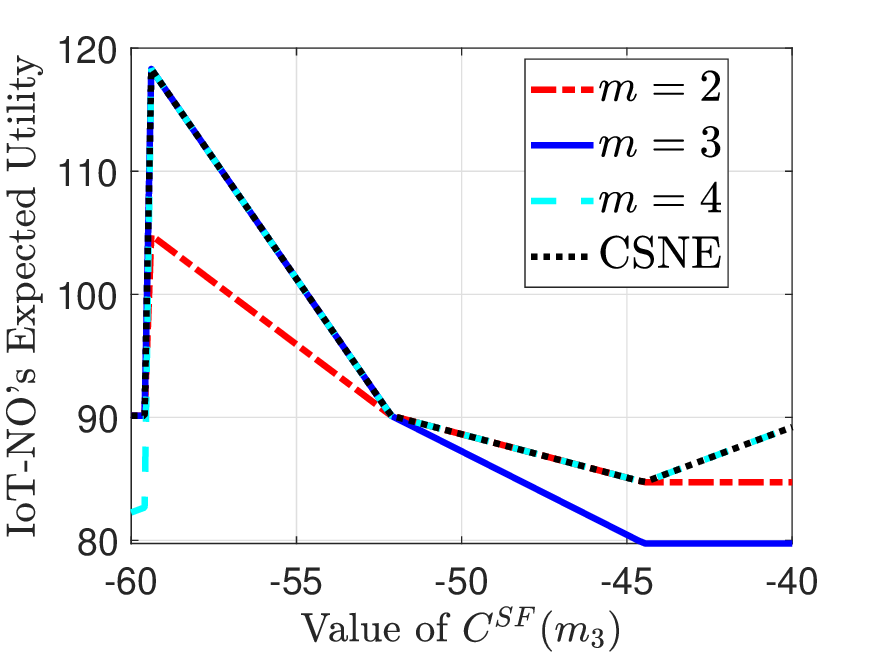}
  \caption{
IoT-NO's expected utility. 
 }
 \label{fig:UtilityDeviate_NO} 
\end{subfigure}\hfil 
\begin{subfigure}{0.49\columnwidth}
  \includegraphics[width=\linewidth]{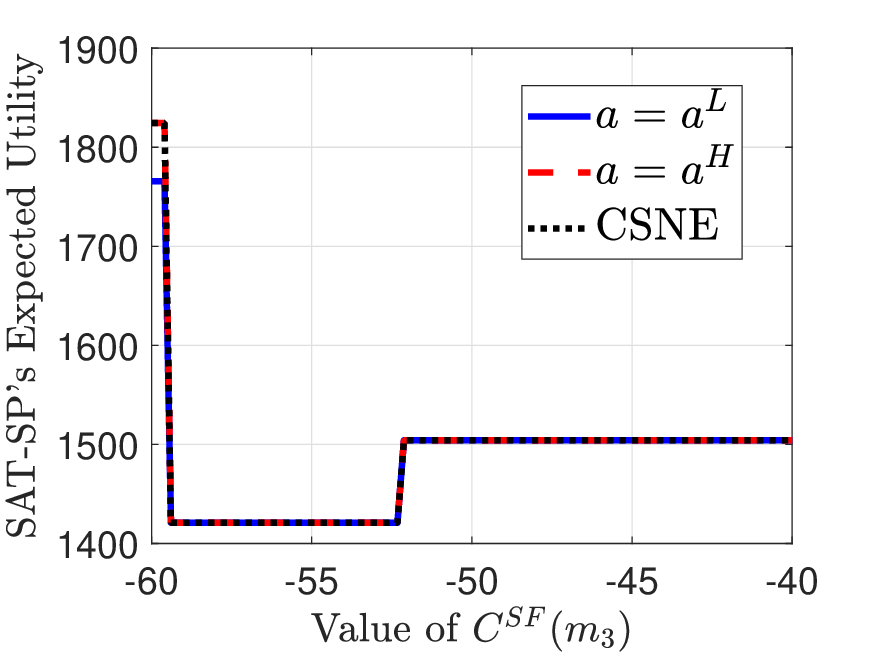} 
  \caption{
SAT-SP's expected utility. 
  }
  \label{fig:UtilityDeviate_SP} 
\end{subfigure}\hfil 
\caption{Expected utilities {under CSNE and deviated strategies.}}  
\label{fig:valueCSNE}
\end{figure}

\subsection{Learning Processes for the IoT-NO and the SAT-SP}
\label{sec:Learning Processes for the IoT-NO and the SAT-SP}
In this section, we illustrate the IoT-NO's learning process for a given $r^{SP}=0.2$ when Algorithm \ref{algorithm:UCBbandit} is applied. 
Define the per-period regret at learning stage $k$ as $\lambda^{Per}_k:=\max_{m\in \mathcal{M}} \tilde{U}^{NO}(m,r^{SP})-\tilde{U}^{NO}(\tilde{m}_k,r^{SP})$, where $\tilde{m}_k$ is the selected SF at stage $k$. 
Then, we can define the average regret at learning stage $k$ as $\lambda^{Avg}_k:=\sum_{k'=1}^{k} \lambda^{Per}_{k'}/k$. 
As illustrated in Fig. \ref{fig:PLOT_regret}, the average regret in all five trials decreases with the learning stages efficiently. 

\begin{figure}[h]
\centering
\includegraphics[width=.8 \columnwidth]{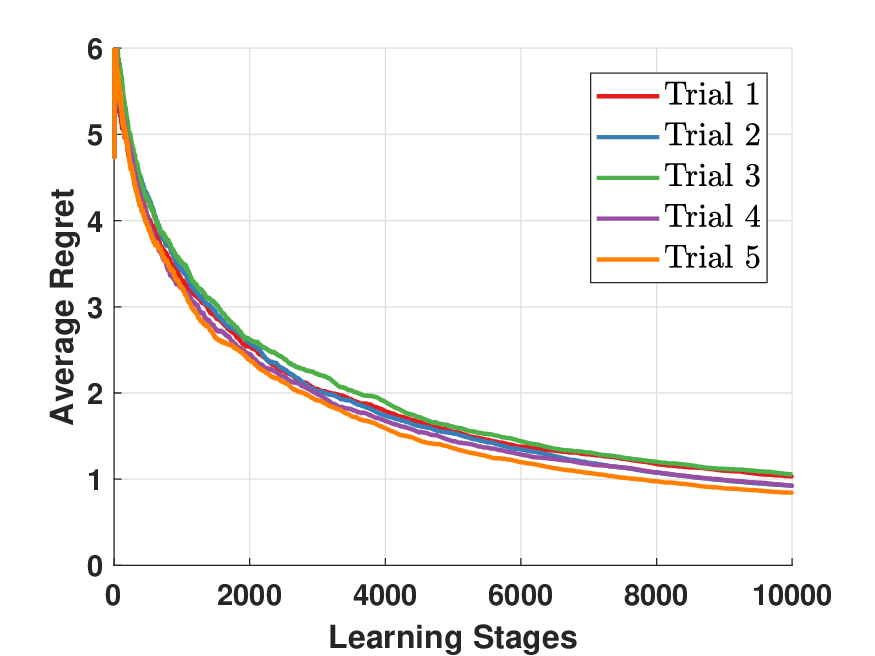}
\caption{
The decrease in the average regret under five different trials. 
}
\label{fig:PLOT_regret}
\end{figure}


We illustrate the SAT-SP's price learning process in Fig. \ref{fig:pricelearning} under different tolerance levels.  
The ground-truth value of the learning objective function $\hat{U}^{SP}$ shown in red corroborates the PLW property in Lemma \ref{lemma:PLW}. 
We illustrate the samples in blue circles, where the size (resp. the gradation of color) of a sample is (resp. inverse proportional) proportional to the sequence of the sample. A video demonstration is available at \url{https://cloud.tsinghua.edu.cn/f/f2552ed4159849e3b06a/}. 

\begin{figure}[htb]
    \centering 
\begin{subfigure}{0.49\columnwidth}
  \includegraphics[width=\linewidth]{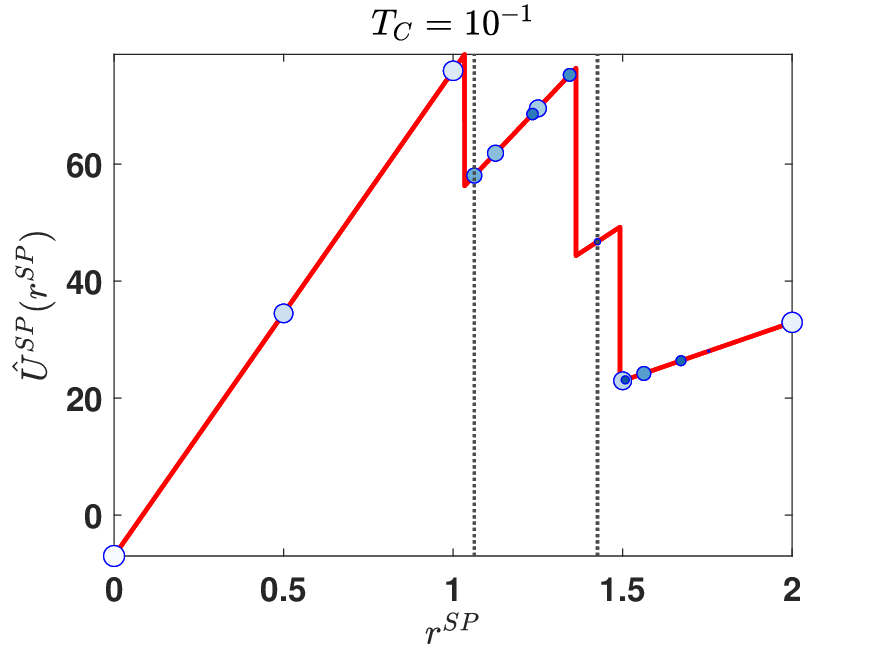}
  \caption{
$T_C=10^{-1}$. 
 }\label{fig:e-1} 
\end{subfigure}\hfil 
\begin{subfigure}{0.49\columnwidth}
  \includegraphics[width=\linewidth]{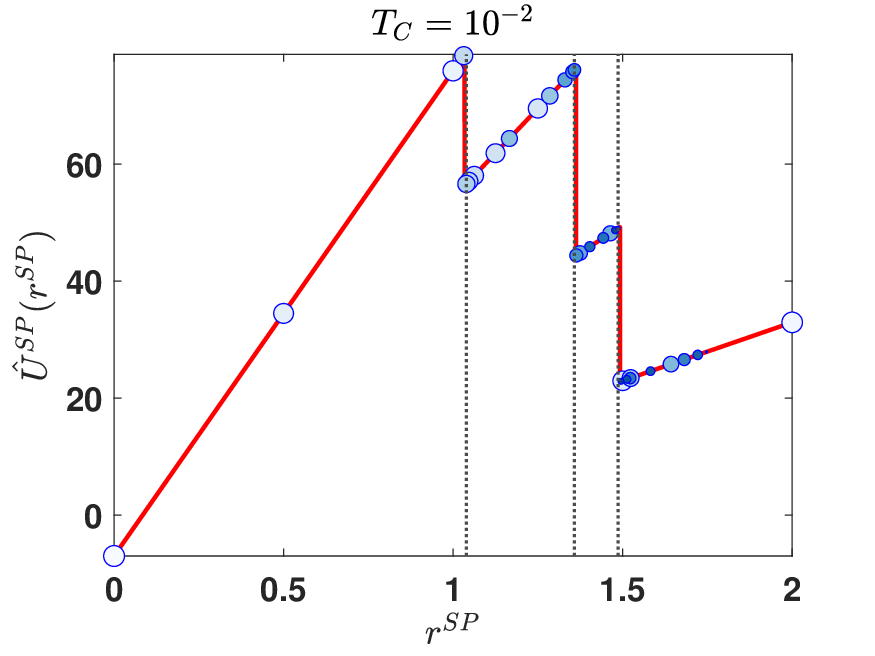} 
  \caption{
$T_C=10^{-2}$. 
  }\label{fig:e-2} 
\end{subfigure}\hfil 
\caption{The SAT-SP's price learning process. 
}  
\label{fig:pricelearning}
\end{figure}

As illustrated in Fig. \ref{fig:e-2}, when we select a tolerance level that is sufficiently small (e.g., $T_C=10^{-2}$), we can learn all the subdomains within $32$ points. 
If the selected tolerance level is large (e.g., $T_C=10^{-2}$), as illustrated in Fig. \ref{fig:e-1}, then the learning process only needs to sample $15$ points. However, it misses the third change point as the third and fourth change points are too close for the tolerance level to distinguish from. 

\section{Conclusion and Discussions}
\label{sec:conclusion}
\subsection{\textcolor{black}{Conclusion}}
This work {contributes to the field of satellite cybersecurity by developing} a consolidated game framework for satellite-enabled IoT to comprehensively model the cross-domain attack and defense at the IoT devices (i.e., the dynamic compromise of IoT devices), the IoT access network (i.e., the IoT-NO's individual flow sampling), and the satellite transmission network (i.e., the SAT-SP's {coarse-grid defense} configurations). 
We have introduced the equilibrium concept of CSNE, proved its existence, and formulated a bilinear program for its computation. 
Our analysis of the cross-domain impact has yielded structural results and theoretical insights, including a threshold division of the SF set and the identification of fundamental limits for the security-efficiency ratio of dominant strategies. 
{These findings can guide the efficient design of the SAT-SP's configuration.} 
Recognizing the price design problem as NP-hard, we have equated the IoT-NO's best-response SF to a bandit arm problem and proposed a UCB-based SF learning algorithm. 
Leveraging the \textcolor{black}{piecewise-linear} property, we have developed an efficient price learning algorithm for the SAT-SP that is guaranteed to {achieve the requested accuracy within a finite number of steps. 
The identified role of this property in enhancing learning efficiency can be applied to other learning scenarios with similar structures.} 

{A series of simulations has demonstrated that single-domain defense within the satellite network is inadequate against cross-domain attacks, while highlighting the benefits of cooperative defense with individual flow sampling at the IoT gateway, particularly under high attack intensity.  
The numerical results concerning the number of IoT devices has demonstrated that cooperative defense can significantly mitigate the PoS degradation caused by cross-domain attacks. 
The maximum utility increase of $11.9\%$ under CSNE, compared to those under deviated strategies, demonstrates the effectiveness of the consolidated game in finding optimal defense strategies among participants with non-aligned incentives.} 

\subsection{\textcolor{black}{Discussions of Privacy and Advanced Attack Scenarios}}
\textcolor{black}{
While primarily focusing on security, this work also considers the privacy implications between the IoT-NO and the SAT-SP, which motivate the development of a multi-domain defense strategy in Satellite IoT systems. 
However, the current framework’s requirement for individual flow sampling at the IoT gateway introduces two additional privacy concerns. First, the IoT-NO can fully observe the data flows of IoT devices, potentially compromising device-level privacy. Second, the SAT-SP needs to share features of malicious packets with the IoT-NO, which could reveal sensitive threat intelligence. 
To mitigate the first privacy concern, privacy-protection mechanisms such as anonymization and indistinguishability can be implemented to protect the identities of IoT devices when interacting with the IoT-NO. For the second privacy concern, differential privacy measures can be adopted to ensure the SAT-SP's threat features are shared in a privacy-preserving manner, balancing privacy with the effectiveness of detecting malicious packets.
These enhancements further strengthen the framework’s consideration of privacy alongside security. 
}

\textcolor{black}{
This work focuses on the injection of malicious packets with the ultimate goal of compromising satellites, providing a foundational understanding of the risks and mitigation strategies for cross-domain attacks. Building on this foundation, future work will investigate more advanced attack scenarios.
First, beyond detection at the individual-flow level, future work could incorporate detection methods at the device level, while acknowledging the potential limitations imposed by the restricted computational resources of IoT devices.
Second, it could explore coordinated multivector attacks that simultaneously target both the satellite network and the IoT server beyond the backhaul. 
Third, advanced attackers often employ multistage strategies and demonstrate high adaptability, underscoring the need for dynamic game models and evolving defense mechanisms to effectively counter such threats.
Finally, regarding malicious packet detection, real-world attack scenarios can involve implicit or obfuscated features. Addressing this challenge will require the integration of machine learning and AI techniques to enhance detection accuracy and reduce false alarm rates.
}

\appendices

\section{Proofs of Lemmas \ref{lemma:how alpha3 affects b}, \ref{lemma:Attacker's optimal malicious flow}, and \ref{lemma:monotonicity}}
\label{app:proof of Lemma 3,4,5}
    {The proof of Lemma \ref{lemma:how alpha3 affects b} is presented as follows.}  
    According to \eqref{eq:closed form of b}, $1/b^{Idl}=1+{\alpha^{SeR}}/({\alpha^{AT}} + {\alpha^{SyR}} + {\alpha^{SeT}})+({\alpha^{SeR}}{\alpha^{AT}})/[{\alpha^{SyR}}({\alpha^{AT}} + {\alpha^{SyR}} + {\alpha^{SeT}})]$, which decreases concerning ${\alpha^{SyR}}\in [0,1]$. 
    Similarly, $b^{Nor} ={\alpha^{SeR}}/({\alpha^{SeR}}+{\alpha^{AT}}+{\alpha^{SeT}}+{\alpha^{SyR}}+({\alpha^{SeR}}{\alpha^{AT}})/{\alpha^{SyR}})$. 
    A direct derivative analysis shows that ${\alpha^{SyR}}+({\alpha^{SeR}}{\alpha^{AT}})/{\alpha^{SyR}}$ increases and decreases concerning ${\alpha^{SyR}}\in [0,\sqrt{{\alpha^{SeR}} {\alpha^{AT}}}]$ and ${\alpha^{SyR}}\in [\sqrt{{\alpha^{SeR}} {\alpha^{AT}}},1]$. 

{The proof of Lemma \ref{lemma:Attacker's optimal malicious flow} is presented as follows.}  
    Taking the first-order derivative of $U_n$ over $\mu_n^{Mal}\in\mathbb{R}^{0+}$ {and setting it equal to zero yields} \eqref{eq:Attacker's optimal malicious flow}. Since $m\geq 2$ and $\mu_n^{Nor}\in\mathbb{R}^{0+}$, we can verify that the second-order derivative {is negative, indicating that Equation \eqref{eq:Attacker's optimal malicious flow} represents a maximum.}  

{The proofs of Lemma \ref{lemma:monotonicity} is presented as follows.}  
The derivative of $\frac{(m-1)^{m-1}}{m^m}$ is $\frac{(m-1)( \ln((m-1)/m)}{m^m}$, which is smaller than $0$ for all $m\in\mathcal{M}$. Thus, $\hat{w}^{Mal}$ decreases with $m\in \mathcal{M}$. As an increasing function of $\hat{w}^{Mal}$, $\hat{P}^{Cap}$ decreases with $m\in \mathcal{M}$ {for each $a\in\mathcal{A}$}. 
Since $-1/m$ increases with $m\in \mathcal{M}$, both $(1-1/m)^m$ and $\hat{w}^{Nor}$ increase with $m\in \mathcal{M}$.

\section{{The proofs of Corollaries \ref{corollary:1 and K} and \ref{corollary:Fundamental Limit of Security-Efficiency Ratio}}}
\label{app:2corollary}

{The proofs of Corollary \ref{corollary:1 and K} is presented as follows.}  
    If $\tilde{m}\in \mathcal{M}^{>}$, then $(\rho(\hat{a}^*(\tilde{m}))-\rho(a_i))g^{SP}(\tilde{m})\hat{P}^{Cap}(\tilde{m})\geq -(c^{OC}(\hat{a}^*(\tilde{m}))-c^{OC}(a_i)), \forall a_i\in \mathcal{A}$, based on statement (b) of Proposition \ref{proposition:two regions}. 
    If $m\in [m_1,\tilde{m}]$, then we obtain $m\in \mathcal{M}^{>}$ based on Lemma \ref{lemma:Threshold Set-Division}. 
    Since $g(m)>0$, the product of two positive decreasing functions, i.e., $g^{SP}(m)\hat{P}^{Cap}(m)$, decreases as $m\in \mathcal{M}$ increases. Thus, $g^{SP}(m)\hat{P}^{Cap}(m)\geq g^{SP}(\tilde{m})\hat{P}^{Cap}(\tilde{m})>0$. 
    Since $\rho(\hat{a}^*(\tilde{m}))\geq \rho(a_i), \forall a_i\in \mathcal{A}$, we obtain $(\rho(\hat{a}^*(\tilde{m}))-\rho(a_i))g^{SP}(m)\hat{P}^{Cap}(m)\geq (\rho(\hat{a}^*(\tilde{m}))-\rho(a_i))g^{SP}(\tilde{m})\hat{P}^{Cap}(\tilde{m}) \geq -(c^{OC}(\hat{a}^*(\tilde{m}))-c^{OC}(a_i)), \forall a_i\in \mathcal{A}$. 

{The proofs of Corollary \ref{corollary:Fundamental Limit of Security-Efficiency Ratio} is presented as follows.} 
Since $g^{SP}(m)$ decreases in $m$, $g^{SP}(m)\leq g^{SP}(2)=(r^{Cap}-r^{SP})h^{Mal}/4-r^{SP}h^{Nor}, \forall m\in \mathcal{M}$. If $r^{Cap}/r^{SP}\leq 2+4{\alpha^{SyR}}/{\alpha^{AT}}$, then $(r^{Cap}-r^{SP})h^{Mal}/4-r^{SP}h^{Nor}\leq 0$ and $g^{SP}(m)\leq 0$ for all $m\in \mathcal{M}$. 
Then, the above corollary follows from the statement (a) of Proposition \ref{proposition:two regions}. 



%



\bibliographystyle{IEEEtran}
\bibliography{SIsecurity}

\begin{IEEEbiography}[{\includegraphics[width=1in,height=1.25in,clip,keepaspectratio]{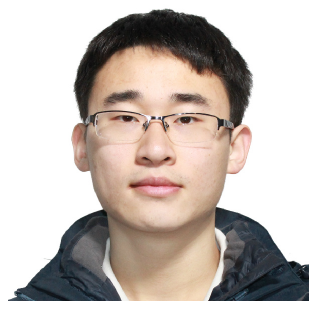}}]{Linan Huang} 
received the B.Eng. degree in Electrical Engineering from Beijing Institute of Technology, China, in 2016 and the Ph.D. degree in electrical engineering from New York University (NYU), Brooklyn, NY, USA, in 2022. 
He is currently an assistant researcher at Tsinghua University. 
His research interests include dynamic decision-making in the multi-agent system, mechanism design, artificial intelligence, cybersecurity, and satellite networks. 
\end{IEEEbiography}

\begin{IEEEbiography}[{\includegraphics[width=1in,height=1.25in,clip,keepaspectratio]{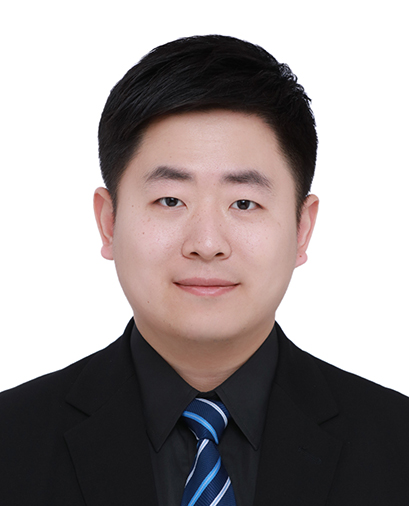}}]{Peilong Liu} 
received his B.S. degree from Tianjin University, Tianjin, China, in 2012, and the Ph.D. degree from University of Chinese Academy of Sciences, Beijing, China, in 2018. He is currently an assistant researcher with the Beijing National Research Center for Information Science and Technology, Tsinghua University. His research interests include satellite network traffic engineering and satellite communications. 
\end{IEEEbiography}

\begin{IEEEbiography}[{\includegraphics[width=1in,height=1.25in,clip,keepaspectratio]{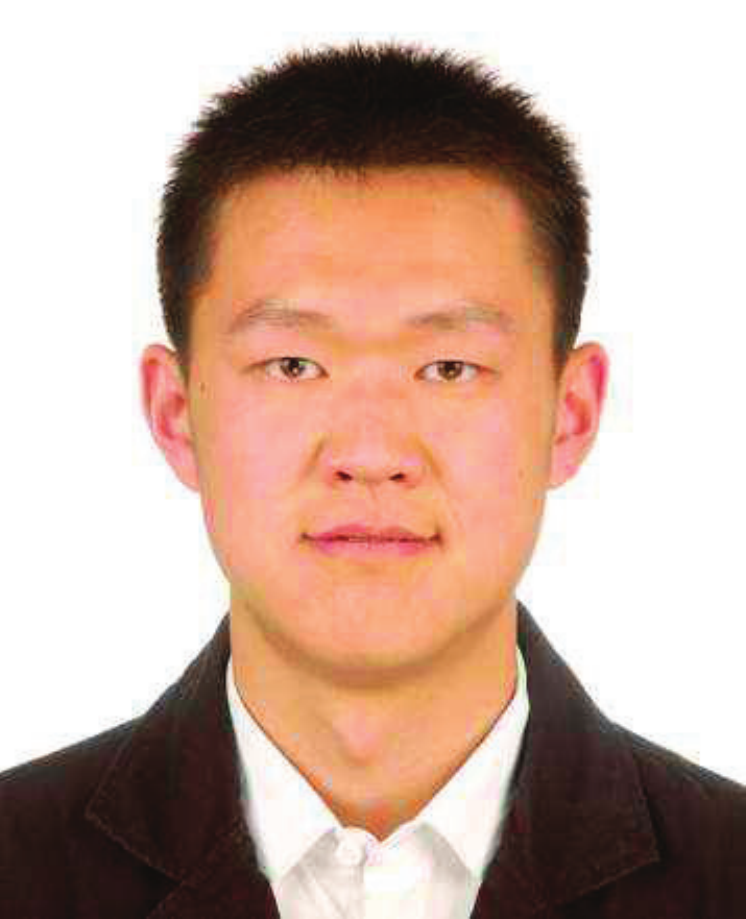}}]{Xu Chen} 
received his B.S. and M.S. degrees in School of Mathematical Sciences from Peking University in 2009 and 2013, respectively, and the Ph.D. degree in the Department of Electronic Engineering from Tsinghua University in 2022. He has worked as an assistant research fellow for several years. He is now a postdoctoral researcher at Tsinghua University. His research interests include game theory, network management and security.
\end{IEEEbiography}

\begin{IEEEbiography}[{\includegraphics[width=1in,height=1.25in,clip,keepaspectratio]{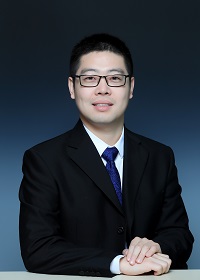}}]{Chunxiao Jiang} 
received the B.S. degree in information engineering from Beihang University, China, in 2008, and the Ph.D. degree in electronic engineering from Tsinghua University, China, in 2013. 
He is currently an Associate Professor at the Tsinghua Space Center, Tsinghua University. His research interests include the application of game theory, optimization, and statistical theories to communication, networking, and resource allocation problems, in particular heterogeneous space networks. 
\end{IEEEbiography}

\begin{IEEEbiography}[{\includegraphics[width=1in,height=1.25in,clip,keepaspectratio]{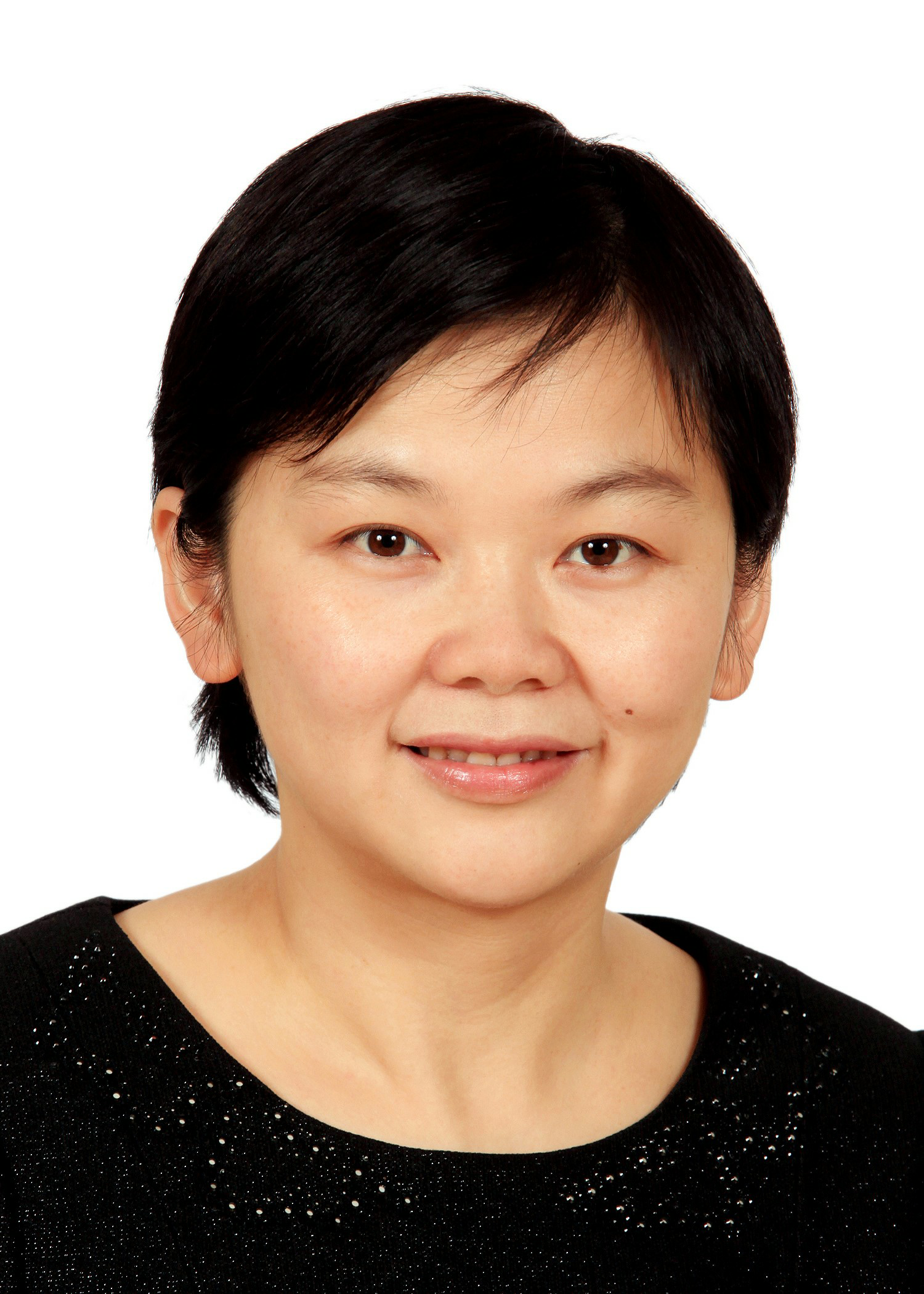}}]{Linling Kuang}  received the B.S. and M.S. degrees from the National University of Defense Technology, Changsha, China, in 1995 and 1998, respectively, and the Ph.D. degree in electronic engineering from Tsinghua University, Beijing, China, in 2004. Since 2007, she has been with the Tsinghua Space Center, Tsinghua University. Her research interests include wireless broadband communications, signal processing, and satellite communications. She is a member of the IEEE Communications Society.
\end{IEEEbiography}

\begin{IEEEbiography}[{\includegraphics[width=1in,height=1.25in,clip,keepaspectratio]{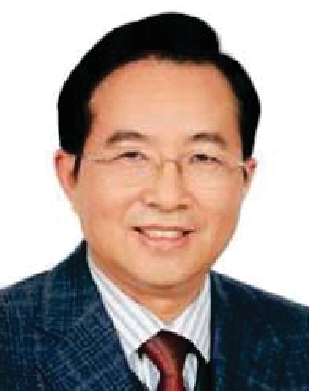}}]{Jianhua Lu} 
received his B.S. and M.S. degrees from Tsinghua University, Beijing, China, in 1986 and 1989, respectively, and his Ph.D. degree in Electrical \& Electronic Engineering from the Hong Kong University of Science \& Technology, Hong Kong, China, in 1998. 
Since 1989, he has been with the Department of Electronic Engineering, Tsinghua University.
His research interests include broadband wireless communications, multimedia signal processing, and satellite communications. 
He is an academician of Chinese Academy of Science and a Fellow of IEEE. 
\end{IEEEbiography}

\end{document}